\newtheorem{theorem}{Theorem}
\newtheorem{definition}{Definition}
\newtheorem{lemma}{Lemma}
\newtheorem{proposition}{Proposition}
\newtheorem{corollary}{Corollary}
\newcommand{\eqn}[1]{(\ref{eqn:#1})}
\newcommand{\eq}[1]{(\ref{eq:#1})}
\newcommand{\thm}[1]{\hyperref[thm:#1]{Theorem~\ref*{thm:#1}}}
\newcommand{\cor}[1]{\hyperref[cor:#1]{Corollary~\ref*{cor:#1}}}
\newcommand{\defn}[1]{\hyperref[defn:#1]{Definition~\ref*{defn:#1}}}
\newcommand{\lem}[1]{\hyperref[lem:#1]{Lemma~\ref*{lem:#1}}}
\newcommand{\prop}[1]{\hyperref[prop:#1]{Proposition~\ref*{prop:#1}}}
\newcommand{\fig}[1]{\hyperref[fig:#1]{Figure~\ref*{fig:#1}}}
\newcommand{\tab}[1]{\hyperref[tab:#1]{Table~\ref*{tab:#1}}}
\newcommand{\algo}[1]{\hyperref[algo:#1]{Algorithm~\ref*{algo:#1}}}
\renewcommand{\sec}[1]{\hyperref[sec:#1]{Section~\ref*{sec:#1}}}
\newcommand{\append}[1]{\hyperref[append:#1]{Appendix~\ref*{append:#1}}}
\newcommand{\fac}[1]{\hyperref[fac:#1]{Fact~\ref*{fac:#1}}}
\newcommand{\lin}[1]{\hyperref[lin:#1]{Line~\ref*{lin:#1}}}
\newcommand{\fnote}[1]{\hyperref[fnote:#1]{Footnote~\ref*{fnote:#1}}}
\newcommand{\rmk}[1]{\hyperref[rmk:#1]{Remark~\ref*{rmk:#1}}}
\def\>{\rangle}
\def\<{\langle}
\newcommand{\vect}[1]{\ensuremath{\mathbf{#1}}}
\newcommand{\x}{\ensuremath{\mathbf{x}}}
\newcommand{\N}{\mathbb{N}}
\newcommand{\Z}{\mathbb{Z}}
\newcommand{\R}{\mathbb{R}}
\newcommand{\C}{\mathbb{C}}
\DeclareMathOperator{\poly}{poly}
\DeclareMathOperator{\inv}{inv}
\DeclareMathOperator{\nrm}{norm}
\DeclareMathOperator{\diag}{diag}
\DeclareMathOperator{\hamt}{HAM-T}
\renewcommand{\d}{\mathrm{d}}
\renewcommand{\emptyset}{\varnothing}
\def\Tr{\operatorname{Tr}}\def\:{\hbox{\bf:}}
\newcommand{\range}[1]{[#1]}
\newcommand{\rangez}[1]{[{#1}]_0}
\let\oldnl\nl
\newcommand{\nonl}{\renewcommand{\nl}{\let\nl\oldnl}}
\begin{document}

\title{Quantum Simulation of Real-Space Dynamics}

\author{Andrew M. Childs}
\email{amchilds@umd.edu}
\affiliation{Joint Center for Quantum Information and Computer Science, University of Maryland}
\affiliation{Department of Computer Science, University of Maryland}
\author{Jiaqi Leng}
\email{jiaqil@umd.edu}
\affiliation{Joint Center for Quantum Information and Computer Science, University of Maryland}
\affiliation{Department of Mathematics, University of Maryland}
\author{Tongyang Li}
\email{tongyangli@pku.edu.cn}
\affiliation{Center on Frontiers of Computing Studies, Peking University} 
\affiliation{School of Computer Science, Peking University} 
\affiliation{Center for Theoretical Physics, Massachusetts Institute of Technology} 
\author{Jin-Peng Liu}
\email{jliu1219@terpmail.umd.edu}
\affiliation{Joint Center for Quantum Information and Computer Science, University of Maryland}
\affiliation{Department of Mathematics, University of Maryland}
\author{Chenyi Zhang}
\email{chenyiz@stanford.edu}
\affiliation{Institute for Interdisciplinary Information Sciences, Tsinghua University}


\maketitle

\begin{abstract}
Quantum simulation is a prominent application of quantum computers. While there is extensive previous work on simulating finite-dimensional systems, less is known about quantum algorithms for real-space dynamics. We conduct a systematic study of such algorithms. In particular, we show that the dynamics of a $d$-dimensional Schr{\"o}dinger equation with $\eta$ particles can be simulated with gate complexity\footnote{The $\tilde{O}$ notation omits poly-logarithmic terms. Specifically, $\tilde{O}(g)=O(g\poly(\log g))$.\label{fnote:tilde-O}} $\widetilde O\bigl(\eta d F \poly(\log(g'/\epsilon))\bigr)$, where $\epsilon$ is the discretization error, $g'$ controls the higher-order derivatives of the wave function, and $F$ measures the time-integrated strength of the potential.
Compared to the best previous results, this exponentially improves the dependence on $\epsilon$ and $g'$ from $\poly(g'/\epsilon)$ to $\poly(\log(g'/\epsilon))$ and polynomially improves the dependence on $T$ and $d$, while maintaining best known performance with respect to $\eta$.
For the case of Coulomb interactions, we give an algorithm using $\eta^{3}(d+\eta)T\poly(\log(\eta dTg'/(\Delta\epsilon)))/\Delta$ one- and two-qubit gates, and another using $\eta^{3}(4d)^{d/2}T\poly(\log(\eta dTg'/(\Delta\epsilon)))/\Delta$ one- and two-qubit gates and QRAM operations, where $T$ is the evolution time and the parameter $\Delta$ regulates the unbounded Coulomb interaction. We give applications to several computational problems, including faster real-space simulation of quantum chemistry, rigorous analysis of discretization error for simulation of a uniform electron gas, and a quadratic improvement to a quantum algorithm for escaping saddle points in nonconvex optimization.
\end{abstract}

\newpage
\section{Introduction}\label{sec:intro}

Simulating quantum physics is one of the primary applications of quantum computers~\cite{feynman1982simulating}. The first explicit quantum simulation algorithm was proposed by Lloyd~\cite{lloyd1996universal} using product formulas, and numerous quantum algorithms for quantum simulations have been extensively developed since then~\cite{kassal2008polynomial,kivlichan2017bounding,toloui2013quantum,babbush2017exponentially,babbush2019quantum, su2021fault,aspuru2005simulated,whitfield2011simulation,seeley2012bravyi,wecker2014gate,hastings2014improving,poulin2014trotter,mcclean2014exploiting,babbush2015chemical, reiher2017elucidating,babbush2016exponentially,motta2018low,campbell2019random,berry2019qubitization,childs2019theory,von2020quantum,lee2020even,su2020nearly,an2021time,jin2021quantum}, with various applications ranging from quantum field theory~\cite{jordan2012quantum,preskill2019simulating} to quantum chemistry~\cite{babbush2015chemical,bauer2020quantum,cao2019quantum} and condensed matter physics~\cite{babbush2017low,childs2018toward}.

The Schr{\"o}dinger equation that determines the evolution of a quantum wave function in $d$-dimensional real space has the form
	\begin{align}\label{eqn:Schrodinger-indep}
		i\frac{\partial}{\partial t}\Phi(\vect{x},t)=\Big[-\frac{1}{2}\nabla^{2}+f(\vect{x})\Big]\Phi(\vect{x},t)
	\end{align}
where $f\colon\R^{d}\to\R$ is the potential function.\footnote{More generally, we can consider time-dependent potentials as formulated in \eqn{Schrodinger-dep}.}

In this paper, we consider quantum simulations for general potential functions, which we model by assuming quantum oracle access to $f$. Specifically, we assume a unitary $U_f$ such that for any $\x\in\R^{d}$ and $z\in\R$,
	\begin{align}\label{eqn:evaluation-indep}
		U_{f}|\x\>|z\>=|\x\>|f(\x)+z\>.
	\end{align}
In practice, real numbers used in the simulation will be represented digitally, but we assume the representation has sufficiently high precision that errors from this digital representation can be neglected. This model allows coherent superpositions of queries to the potential function $f$, which is a standard assumption for quantum algorithms working in real space, including quantum simulation~\cite{kivlichan2017bounding} and optimization~\cite{vanApeldoorn2020optimization,chakrabarti2020optimization,zhang2020quantum} algorithms. Note that if $f$ can be computed by a classical circuit, then the corresponding quantum oracle can be implemented by a quantum circuit of roughly the same size.

The first work on real-space quantum simulation algorithms dates back to Wiesner~\cite{wiesner1996simulations} and Zalka~\cite{zalka1998efficient}, who used product formulas to simulate the time evolution by separately handling the kinetic and potential terms, relating them with the quantum Fourier transform. More recently, Kassal et al.~\cite{kassal2008polynomial} developed a real-space simulation algorithm for chemical dynamics using a different approach. They concluded that simulating dynamics in real space can be more accurate and efficient than a second-quantized approach using the Born-Oppenheimer approximation.

To simulate real-space dynamics on a digital computer, we must discretize the spatial degrees of freedom.
Although~\cite{wiesner1996simulations,zalka1998efficient,kassal2008polynomial} estimated the gate complexity of their quantum simulation algorithms, these early-stage results did not rigorously analyze how the complexity depends on discretization error. As far as we are aware, the first complexity analysis including the discretization error was conducted by Kivlichan et al.~\cite{kivlichan2017bounding}, which developed a quantum algorithm for simulating real-space dynamics using high-order finite difference schemes and Hamiltonian simulation with a truncated Taylor series. Their algorithm has worst-case complexity $\widetilde O(\exp(\eta d))$ assuming a bounded potential, or $\widetilde O\bigl(\eta^7d^4T^3k_{\max}^2/\epsilon^2)$ given a strong assumption about the derivatives of the wave function, where $\eta$ is the number of particles, $d$ is the dimension, $T$ is the evolution time, $\epsilon$ is the discretization error, and $k_{\max}$ (defined in~\cite[Corollary 5]{kivlichan2017bounding}) controls the higher-order derivatives of the wave function. The exponential scaling of the former result arises from possible singularities in the wave function (in particular, it results from an upper bound on the integration error of $d\eta$-dimensional wave functions in \cite[Theorem 4]{kivlichan2017bounding}).
We similarly assume the wave function is sufficiently regular by introducing a related parameter $g'$ (defined in \eqn{gprime}).

Real-space quantum simulation is a form of first-quantized quantum simulation. First quantization represents the overall quantum state by storing the location of each particle, whereas second quantizaton describes the occupation numbers of all possible locations. Previous work has studied the complexity of first-quantized simulations using various basis sets such as Gaussian orbitals~\cite{toloui2013quantum,babbush2017exponentially} and plane waves~\cite{babbush2019quantum}. While first-quantized simulation using plane waves is similar to real-space simulation in the Fourier basis (as considered in this paper), the main difference is that the former methods choose a fixed number $N$ of basis functions in the Galerkin representation, rather than aiming to approximate the underlying real-space dynamics within a given allowed error using the pseudospectral representation.\footnote{The Galerkin and pseudospectral representations are introduced and compared in \sec{fourier-spectral}.}
Recently, Su et al.~\cite{su2021fault} introduced a first-quantized quantum simulation algorithm that considered a real-space grid representation as in the work of Kassal et al.~\cite{kassal2008polynomial}, but employing qubitization~\cite{low2019qubitization} and interaction picture~\cite{low2018hamiltonian} techniques to achieve upper bounds of $\widetilde O(\eta^{8/3}N^{1/3}T+\eta^{4/3}N^{2/3}T)$ and $\widetilde O(\eta^{8/3}N^{1/3}T)$, respectively, where $N$ is the number of grid points. The latter complexity matches the best known scaling of first-quantized methods~\cite{babbush2019quantum}. Compared to the real-space quantum simulation result of Kivlichan et al.~\cite{kivlichan2017bounding}, Su et al.~focused more on the $N$-dependence of the complexity than on other parameters.
Appendix K of \cite{su2021fault} indicates that the factor of $N^{1/3}$ results from an upper bound on the potential term in Eq.~(K7), but further work is needed to better understand the required dependence of $N$ on $T$, $\epsilon$, and $g'$.

Many quantum algorithms for simulating quantum chemistry rely on second quantization. In particular, algorithms for the electronic structure problem using a second-quantized representation are widely studied as a near-term application of quantum computers \cite{aspuru2005simulated}. Work on this topic has adopted different representations including Gaussian orbitals~\cite{aspuru2005simulated,whitfield2011simulation,seeley2012bravyi,wecker2014gate,hastings2014improving,poulin2014trotter,mcclean2014exploiting,babbush2015chemical,reiher2017elucidating,babbush2016exponentially} and plane waves~\cite{babbush2017low,low2018hamiltonian,berry2019qubitization,childs2019theory,su2020nearly} in search of algorithms with lower resource requirements.

Although second-quantized approaches to quantum simulation are perhaps more widely studied, there is growing interest in first quantization. In particular, the aforementioned work of Su et al. \cite{su2021fault} recently gave a systematic study of the practical performance of first-quantized simulation methods. While the worst-case complexity of simulating first-quantized real-space dynamics with a bounded potential scales as $\widetilde O(\exp(\eta d))$~\cite{kivlichan2017bounding}, first-quantized simulation enjoys asymptotically lower space and gate complexity in terms of $\eta$ and $N$ when considering algorithms that work with a fixed number of basis functions $N$, as mentioned above. For simulating arbitrary-basis electronic structure Hamiltonians, the space complexities of these general-purpose first- and second-quantized algorithms are $\widetilde O(\eta\log N)$ and $\widetilde O(N)$, respectively, and the best gate complexities we are aware of are $\widetilde O(\eta^{\frac{8}{3}}N^{\frac{1}{3}})$~\cite{babbush2019quantum,su2021fault} and $\widetilde O(N^5)$~\cite{babbush2016exponentially}, respectively.\footnote{Some studies suggest that the gate complexity of second-quantized algorithms could grow more slowly with $N$ for certain models and representations~\cite{berry2019qubitization,lee2020even,su2020nearly}.} Since $N=\Omega(\eta)$ due to the Pauli exclusion principle, the space and gate complexities of second-quantization algorithms are no better than those of first-quantization algorithms. Furthermore, first-quantized simulation can simulate the full dynamics of molecular Schr{\"o}dinger equations, while second-quantized simulation usually operates in the Born-Oppenheimer approximation with electronic orbitals chosen for fixed nuclear positions. In addition, the choice of basis functions for second-quantized simulation algorithms can depend heavily on prior knowledge. The complexity of simulating Hamiltonian systems using heuristic basis sets has been well analyzed, but the discretization error from the original continuum system has only been discussed asymptotically, and in many studies is simply neglected. As pointed out by Kassal et al.~\cite{kassal2008polynomial}, this source of error could lead in general to the same $\widetilde O(\exp(\eta d))$ scaling encountered with real-space simulation methods, although this issue might be mitigated in practice by choosing appropriate basis functions. Assumptions about properties of a basis could also be unreliable for more general applications such as optimization.

\paragraph{Contributions.}
In this paper, we propose efficient quantum simulation algorithms in first quantization for multi-particle real-space Schr{\"o}dinger equations.

Our primary consideration is to control the error in a discrete approximation of the continuum solution of a Schr{\"o}dinger equation. We perform spatial discretization using the Fourier spectral method. Since the error of this method decreases exponentially with the number of basis functions~\cite{boyd2001chebyshev}, it provides a high-precision approximation, as characterized in \lem{pseudospectral-estimate-lemma}.
The Fourier spectral method yields a discretized Hamiltonian of the form $H = A+B$, where $A$ is the (truncated) kinetic term and $B$ is the (discretized) potential term.
We explore three different techniques to simulate this discretized Hamiltonian.

First, we develop and analyze a \emph{$k$th-order product formula method} for simulating the Hamiltonian $H = A+B$. This method uses the fact that the evolution operators $e^{-iAt}$ and $e^{-iBt}$ can be efficiently implemented, since the potential operator $B$ is diagonal in the computational basis, while the quantum Fourier transform diagonalizes the kinetic operator $A$. Such an approach was considered in early work of Wiesner \cite{wiesner1996simulations} and Zalka \cite{zalka1998efficient}, although they did not rigorously bound the complexity. The $k$th-order product formula method uses $O(5^{2k}(\|H\|T)^{1+1/2k}/\epsilon^{1/2k})$ exponentials to simulate $H$ for time $T$ with error at most $\epsilon$~\cite{berry2007efficient}. Furthermore, its empirical performance can be better in practice than other Hamiltonian simulation methods for modestly sized classically hard instances of particular models such as spin systems~\cite{childs2018toward}. Combining the error analysis of the Fourier spectral method and the $k$th-order product formula method, we obtain the following result.

	\begin{theorem}[Informal version of \thm{time-independent-many-Schrodinger-simulation-SO}]\label{thm:time-independent-simulation-SO}
		Consider an instance of the Schr{\"o}dinger equation \eqn{Schrodinger-indep} for $\eta$ particles in $d$ dimensions, with a time-independent potential $f(\vect{x})$ satisfying $\|f(\vect{x})\|_{L^{\infty}} \le \|f\|_{\max}$. Hamiltonian simulation with the $k$th-order product formula method can produce an approximated wave function at time $T$ on a set of grid nodes, with $\ell_2$ error at most $\epsilon$, with asymptotic gate complexity
        \begin{align}
			\widetilde O\Bigl( 5^{2k}\eta d(\eta d+\|f\|_{\max})^{1+1/2k}T^{1+1/2k}/\epsilon^{1/2k}\cdot\poly\bigl(\log(\eta dTg'/\epsilon)\bigr)\Bigr),
		\end{align}
        where $g'$ defined in \eqn{gprime} upper bounds the high-order derivatives of the wave function.
        \end{theorem}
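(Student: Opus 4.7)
The plan is to assemble the proof by combining three ingredients: the Fourier spectral discretization of the continuum Schrödinger equation, norm bounds on the resulting discretized Hamiltonian, and the standard complexity bound for $k$th-order product formulas.

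First, I would invoke the Fourier spectral estimate (\lem{pseudospectral-estimate-lemma}) to select the number $N$ of Fourier modes per spatial direction so that the truncated wave function differs from the continuum wave function in $\ell_{2}$ by at most $\epsilon/2$ at time $T$. Since the pseudospectral convergence is exponential in $N$ under the regularity quantified by $g'$ in \eqn{gprime}, it suffices to take $N=O(\poly(\log(\eta d T g'/\epsilon)))$. This produces a Hamiltonian $H=A+B$ on $\eta d$ registers of dimension $N$, where $A$ is diagonal in the Fourier basis (the truncated Laplacian) and $B$ is diagonal in the grid basis (the potential sampled at nodes). Each kinetic exponential $e^{-iAs}$ costs $\widetilde{O}(\eta d)$ one- and two-qubit gates via centered QFTs on each spatial register followed by a diagonal phase multiplication, and each potential exponential $e^{-iBs}$ costs one query to $U_{f}$ plus $\widetilde{O}(\eta d)$ gates for the phase rotation and uncomputation of the query.

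Next, I would bound the spectral norms. The truncated Laplacian has eigenvalues at most $O(\eta d N^{2})$ and $\|B\|\le \|f\|_{\max}$, so with $N$ polylogarithmic we obtain
\[
  \|H\|\le (\eta d+\|f\|_{\max})\cdot \poly(\log(\eta dTg'/\epsilon)),
\]
the $N^{2}$ factor being absorbed into the polylog. Applying the $k$th-order product formula bound of Berry et al., $e^{-iHT}$ can be simulated to error $\epsilon/2$ using
\[
  r=O\Bigl(5^{2k}(\|H\|T)^{1+1/2k}/\epsilon^{1/2k}\Bigr)
\]
alternating exponentials of $A$ and $B$. Multiplying $r$ by the $\widetilde{O}(\eta d)$ per-exponential cost, substituting the norm bound, and combining the two $\epsilon/2$ error contributions by the triangle inequality yields the claimed gate count with $\ell_{2}$ error at most $\epsilon$. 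A final inverse QFT returns the amplitudes on the grid nodes without affecting the asymptotic complexity.

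The main obstacle I expect is the error bookkeeping. One must verify that the spectral truncation error, controlled by \lem{pseudospectral-estimate-lemma} at a given time, remains uniformly bounded under the simulated evolution (this requires propagating the regularity bound via $g'$ through the unitary dynamics, not just at $t=0$), and that the polylog-sized $N$-dependence in $\|A\|$ is genuinely absorbed by the outer polylog factor after passing through the $(1+1/2k)$-th power and the $5^{2k}$ prefactor. Setting the discretization and Trotter errors each to $\epsilon/2$ and solving for $N$ and $r$ simultaneously is the delicate step that determines the final form of the bound.
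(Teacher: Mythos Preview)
Your proposal is correct and follows essentially the same route as the paper: Fourier spectral discretization with $n$ chosen via \lem{pseudospectral-estimate-lemma} so that the spatial error is at most $\epsilon/2$, the norm bounds $\|\vect{L}_{n,\eta d}\|\le \eta d n^2/4$ and $\|\vect{V}_{n,\eta d}\|\le\|f\|_{\max}$, and the Berry--Ahokas--Cleve--Sanders exponential count $O(5^{2k}(\|H\|T)^{1+1/2k}/\epsilon^{1/2k})$ combined with an $\widetilde{O}(\eta d)$ per-exponential cost. Your two anticipated obstacles are handled exactly as the paper does: $g'$ in \eqn{gprime} is already defined as $\max_t\|\Phi^{(n/2)}(\cdot,t)\|_{L^1}$, so uniformity in time is assumed rather than propagated, and the $n^2$ factor in $\|A\|$ is absorbed into the polylog by the choice \eqn{truncated-number}.
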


    Second, we combine the Fourier spectral method with the \emph{truncated Taylor series} approach to Hamiltonian simulation~\cite{berry2015simulating} to develop high-precision real-space simulations. We provide a concrete complexity analysis of truncating the Fourier series and performing Hamiltonian simulation in position space, and achieve the following result for any bounded potential.

	\begin{theorem}[Informal version of \thm{time-independent-many-Schrodinger-simulation}]\label{thm:time-independent-simulation}
		Consider an instance of the Schr{\"o}dinger equation \eqn{Schrodinger-indep} for $\eta$ particles in $d$ dimensions, with a time-independent potential $f(\vect{x})$ satisfying $\|f(\vect{x})\|_{L^{\infty}} \le \|f\|_{\max}$. Hamiltonian simulation with a truncated Taylor series can produce an approximated wave function at time $T$ on a set of grid nodes, with $\ell_2$ error at most $\epsilon$, with asymptotic gate complexity
		\begin{align}
			\eta d(\eta d+\|f\|_{\max})T\poly\bigl(\log(\eta dTg'/\epsilon)\bigr),
		\end{align}
        where $g'$ defined in \eqn{gprime} upper bounds the high-order derivatives of the wave function.
	\end{theorem}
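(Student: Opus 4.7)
The plan is to pair the Fourier spectral discretization (whose truncation error for sufficiently regular wave functions is controlled by \lem{pseudospectral-estimate-lemma}) with the truncated Taylor series Hamiltonian simulation method of \cite{berry2015simulating}. First, I would restrict to a periodic box and approximate $\Phi$ on a tensor grid of $N$ points per axis per particle. By the spectral convergence of truncated Fourier expansions for regular functions, choosing $N = \poly(\log(\eta d T g'/\epsilon))$ suffices to keep the truncated-spectrum evolution within $\epsilon/2$ of the exact continuum solution at the grid points. This reduces the task to simulating a time-independent Hamiltonian $H = A + B$ on $\eta d \lceil \log_2 N \rceil$ qubits, where $A$ is the discretized kinetic operator (diagonal in the Fourier basis with eigenvalues $\tfrac12 |\vect{k}|^2$) and $B$ is the discretized potential (diagonal in the position basis with entries $f(\vect{x}_j)$).

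Second, I would assemble a block-encoding of $H$ by combining separate block-encodings of $A$ and $B$ via linear combination of unitaries. The operator $B$ admits a diagonal block-encoding with normalization $\alpha_B = \|f\|_{\max}$ using one query to $U_f$ together with $O(\eta d)$ ancilla arithmetic. For $A$, conjugating a diagonal block-encoding of $\tfrac12 |\vect{k}|^2$ by an $\eta d$-fold tensor product of QFTs yields a block-encoding with normalization $\alpha_A = O(\eta d N^2) = \widetilde O(\eta d)$, since the largest retained frequency satisfies $|\vect{k}|^2 = O(\eta d N^2)$. The QFTs together with the arithmetic to compute $|\vect{k}|^2$ on the Fourier register cost $\widetilde O(\eta d)$ one- and two-qubit gates per query. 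Thus the combined LCU block-encoding of $H$ has normalization $\alpha = \alpha_A + \alpha_B = \widetilde O(\eta d + \|f\|_{\max})$ and per-query gate cost $\widetilde O(\eta d)$ plus one call to $U_f$.

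Third, I would feed this block-encoding into the truncated Taylor series algorithm, which simulates $e^{-iHT}$ to error $\epsilon/2$ with $\widetilde O(\alpha T)$ block-encoding queries (with only polylogarithmic slack in $\alpha T/\epsilon$). Multiplying the query count by the per-query gate cost and combining the two $\epsilon/2$ error contributions by the triangle inequality gives a total gate complexity of
\begin{align*}
    \eta d (\eta d + \|f\|_{\max}) T \poly\bigl(\log(\eta d T g'/\epsilon)\bigr),
\end{align*}
as stated.

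The main technical obstacle I foresee is establishing cleanly that the normalization $\alpha_A$ of the kinetic block-encoding only scales polylogarithmically beyond $\eta d$. Because the spectral grid size $N$ enters $\alpha_A$ as $N^2$, one must show that the regularity of the wave function genuinely permits $N = \poly\log(\eta d T g'/\epsilon)$ rather than $N = \poly(\eta d T g'/\epsilon)$; this is precisely what \lem{pseudospectral-estimate-lemma} together with the definition of $g'$ delivers, but the subsequent bookkeeping — propagating this $N$ through the block-encoding normalizations, the Taylor-truncation query count, and the per-query gate cost while keeping all of the slack polylogarithmic — is where the accounting needs the most care.
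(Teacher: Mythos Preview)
Your proposal is correct and follows essentially the same route as the paper: Fourier spectral discretization with truncation number $n=\poly\log(g'/\epsilon)$ (so that $\|A\|\le \eta d\, n^2/4=\widetilde O(\eta d)$), split $H=A+B$ with $A$ diagonalized by the $\eta d$-fold QFT and $B$ diagonal in position, and plug into the truncated Taylor series simulator with per-query cost $\widetilde O(\eta d)$. The only cosmetic difference is that you phrase the input model as block-encodings combined via LCU, whereas the paper invokes the sparse-Hamiltonian oracle formulation of \cite{berry2015simulating} directly; the norm parameters $\alpha_A,\alpha_B$ and the per-query gate counts you obtain match the paper's $C_A,C_B,\alpha_A,\alpha_B$ exactly, and the ``bookkeeping obstacle'' you flag is precisely what \eqn{truncated-number} resolves.
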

\noindent
This result resolves the exponential scaling problem of Kivlichan et al.~\cite{kivlichan2017bounding}, exponentially improves the dependence on $\epsilon$, and polynomially improves the dependence on $\eta$, $d$, and $T$.

Third, we also consider the case in which the potential function $f(\vect{x},t)$ has an explicit time dependence. We assume $f(\vect{x},t)$ is bounded for any $t \ge 0$ and Lipschitz continuous in terms of $t$. We apply the Fourier spectral method to the time-dependent Hamiltonian to obtain a discretized Hamiltonian of the form $H = A + B(t)$, where $A$ represents the kinetic operator $-\frac{1}{2}\nabla^2$ and $B(t)$ captures the potential function $f(\vect{x},t)$. The matrix $A$ is an approximation of an unbounded operator, so its spectral norm is usually much larger than that of $B(t)$. Since $A$ is diagonalized in Fourier basis, we can give an efficient implementation of $e^{-iAt}$ for any $t$. Under such conditions, it is natural to apply \emph{interaction picture simulation} \cite{low2018hamiltonian}.
Combining that approach with the rescaled Dyson-series algorithm \cite{berry2020time}, we obtain the following result.

\begin{theorem}[Informal version of \thm{many-rescaled-interaction-complexity}]\label{thm:time-dependent-simulation}
	Consider an instance of the Schr{\"o}dinger equation \eqn{Schrodinger-indep} for $\eta$ particles in $d$ dimensions, with a time-dependent potential $f(\vect{x},t)$ that is bounded for any fixed $t \ge 0$ and is $L$-Lipschitz continuous in $t$. Hamiltonian simulation with a rescaled Dyson series and interaction picture can produce an approximated wave function at time $T$ on a set of grid nodes, with $\ell_2$ error at most $\epsilon$, with asymptotic gate complexity
	\begin{align}
		\eta d\|f\|_{\max,1} \poly\bigl(\log(L\|f\|_{\max,1}g'/\epsilon)\bigr),
	\end{align}
	where $\|f\|_{\max,1}:=\int_{0}^T\|f(t)\|_{\max}\,\d t$ measures the integrated strength of the potential $f$, and $g'$ defined in \eqn{gprime} upper bounds the high-order derivatives of the wave function.
\end{theorem}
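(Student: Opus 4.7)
The plan is to follow the pipeline \emph{(i) spectral discretization $\to$ (ii) transform to the interaction picture with respect to the kinetic operator $\to$ (iii) apply the rescaled Dyson-series algorithm}, and then to track how the three layers of error compose.

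First, I would apply the Fourier pseudospectral discretization exactly as in the time-independent theorems, but now to the time-dependent Hamiltonian, producing a block-encoded matrix of the form $H(t) = A + B(t)$. The operator $A$ is the (truncated) Laplacian, diagonalized by a quantum Fourier transform over each of the $\eta d$ coordinates, with norm $\|A\| = O((\eta d N)^{2})$ where $N$ is the number of Fourier modes per axis. The operator $B(t)$ is diagonal in the position basis with $\|B(t)\|\le \|f(\cdot,t)\|_{\max}$, and it is queried through $U_f$ at the time value $t$. Choosing $N = \poly(\log(\eta d T g'/\epsilon))$ via \lem{pseudospectral-estimate-lemma} ensures that the discretization error on the continuum wave function is at most a prescribed fraction of $\epsilon$; this is the ingredient that produces the $\poly(\log(g'/\epsilon))$ dependence.

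Second, I would move to the interaction picture generated by $A$, so that the effective Hamiltonian becomes $H_I(t) = e^{iAt} B(t) e^{-iAt}$. The crucial point is that conjugation by a unitary preserves norms, so $\|H_I(t)\|\le \|B(t)\|\le \|f(\cdot,t)\|_{\max}$, and the unbounded $A$ disappears from the quantities that drive the simulation cost. The rescaled Dyson-series algorithm of Berry \emph{et al.}~\cite{berry2020time}, when applied to a block-encoding of $H_I$, has query complexity $O\bigl(\int_0^T \|H_I(t)\|\,\d t \cdot \poly(\log(\cdot))\bigr) = O\bigl(\|f\|_{\max,1}\cdot \poly(\log(\cdot))\bigr)$, with logarithmic factors depending on the Lipschitz parameter of $H_I$ (used to discretize time into $O(LT)$ intervals) and on $1/\epsilon$. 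Since moving out of the interaction picture is a single application each of $e^{\pm iAT}$, implementable with $O(\eta d)$ QFTs and a coordinatewise phase, the gate cost per query reduces to: one call to $U_f$, plus $O(\eta d)$ QFTs for each $e^{\pm iAt}$ insertion needed to build the block-encoding of $H_I(t)$.

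Third, I would assemble the pieces. The query complexity $\widetilde O(\|f\|_{\max,1})$ multiplied by the per-query cost $\widetilde O(\eta d)$ yields the advertised $\eta d \|f\|_{\max,1}\poly(\log(L\|f\|_{\max,1}g'/\epsilon))$ scaling, and a triangle inequality over the pseudospectral truncation error, the rescaled-Dyson truncation error, and the inner block-encoding error establishes total $\ell_2$ accuracy $\epsilon$.

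I expect the main obstacle to be a clean, sharp treatment of the Lipschitz constant seen by the Dyson-series subroutine. Although the physical potential is only $L$-Lipschitz in $t$, the interaction-picture Hamiltonian $H_I(t) = e^{iAt} B(t) e^{-iAt}$ has a time derivative with a $[\,iA, B(t)\,]$ contribution whose naive norm involves $\|A\|$. The fix, and the technical heart of the proof, is to observe that the rescaled Dyson-series method of \cite{berry2020time} only needs a Lipschitz bound after the $e^{iAt}$ rotation has been factored out of the query; equivalently, one only needs the original $L$ plus logarithmically small corrections from the spectral truncation. Making this reduction precise, and confirming that the resulting time-grid size enters only through a $\poly(\log L)$ factor, is the step that requires the most care; everything else is bookkeeping that parallels the proof of \thm{time-independent-simulation}.
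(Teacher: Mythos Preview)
Your proposal is correct and follows essentially the same pipeline as the paper: Fourier spectral discretization to obtain $H(t)=A+B(t)$, passage to the interaction picture with respect to $A$, and then the rescaled Dyson-series algorithm of \cite{berry2020time} combined with the $\hamt$/block-encoding machinery of \cite{low2018hamiltonian}. Your identification of the Lipschitz constant of $H_I$ as the delicate point is exactly right; the paper handles it in the proof of \prop{rescaled-interaction-complexity} by a direct computation of $\d\widetilde{H}_I/\d\varsigma$ (claiming the commutator contribution drops out), but as you correctly anticipate, the real reason the argument goes through is that this derivative enters only inside a logarithm via the time-grid parameter $M$, so even the crude bound $\|A\|+L=O(\eta d n^2+L)$ would be absorbed into the $\poly(\log)$ factor.
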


We can compare these results as follows. \thm{time-independent-simulation-SO}, based on product formulas, gives a real-space quantum simulation algorithm with significant improvements in $\eta, d, T, 1/\epsilon$ compared to the previous state-of-the-art result~\cite{kivlichan2017bounding}. This method may be advantageous since product formulas are conceptually simple and often perform well in practice.
\thm{time-independent-simulation}, based on a truncated Taylor series, achieves high-precision real-space quantum simulation with poly-logarithmic scaling in $1/\epsilon$.
Given sufficient information about the Hamiltonian, \thm{time-dependent-simulation} uses the interaction picture method to address high-precision real-space quantum simulation with time-dependent potentials, and further improves the dependence on $d$ to give our best asymptotic bound. Moreover, by exploiting the techniques from the rescaled Dyson-series algorithm \cite{berry2020time}, we achieve an $L^1$-norm scaling in terms of $f$, which is advantageous when the maximum value of $f$ changes significantly during the simulation time. We emphasize that the results of our last two methods scale as $\poly(\log(1/\epsilon))$, while previous first-quantized simulations scale as $\poly(1/\epsilon)$.

For a black-box time-independent potential $f$ with an upper bound $M$ on each pairwise interaction, we have $\|f(t)\|_{\max}\leq M\binom{\eta}{2}=M\eta(\eta-1)/2$ and $\|f\|_{\max,1}\leq M\eta(\eta-1)T/2$. Therefore, by \thm{time-dependent-simulation}, the quantum gate complexity of simulating the Schr{\"o}dinger equation is
\begin{align}
M\eta^{3}dT \poly\bigl(\log(ML\eta Tg'/\epsilon)\bigr).
\end{align}
However, in concrete applications, we need to implement the black box for the potential. We explore this by considering the real-space dynamics of $\eta$ charged particles in $d$ dimensions under a generalized Coulomb potential. This scenario raises two issues. First, the Coulomb potential is unbounded for arbitrarily close particles. This can be addressed by considering a modified potential with an approximation parameter $\Delta$ such that $M\leq O(1/\Delta)$, as defined formally in Eq.~\eqn{modified-Coulomb}. Second, we must consider the computational cost of evaluating the pairwise interactions. The generalized Coulomb potential can be computed either by directly summing pairwise interactions, with cost $O(\eta^2)$; or by more advanced numerical techniques for $\eta$-body problems such as the multipole-based Barnes-Hut simulation algorithm~\cite{barnes1986hierarchical}, with cost linear in $\eta$ when the dimension $d$ is a constant. Taking these issues into account, we obtain the following result.

\begin{corollary}[Informal version of \cor{interaction-coulomb}]
\label{cor:interaction-coulomb-informal}
Consider an instance of \thm{time-dependent-simulation} where $f(\vect{x})$ is the modified Coulomb potential \eqn{modified-Coulomb} with $q_i=q$ for all $1\leq i\leq\eta$. Then the $\eta$-particle Hamiltonian can be simulated for time $T$ with accuracy $\epsilon$, with either of the following costs:
\begin{enumerate}
\item $\eta^{3}(d+\eta)T\poly(\log(\eta dTg'/(\Delta\epsilon)))/\Delta$ one- and two-qubit gates, or
\item $\eta^{3}(4d)^{d/2}T\poly(\log(\eta dTg'/(\Delta\epsilon)))/\Delta$ one- and two-qubit gates and QRAM operations, if $\Delta$ is chosen small enough that the intrinsic simulation error due to the difference between the actual Coulomb potential and the modified Coulomb potential is $O(\epsilon)$.
\end{enumerate}
\end{corollary}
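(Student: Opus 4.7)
The plan is to specialize \thm{time-dependent-simulation} to the modified Coulomb potential \eqn{modified-Coulomb} and to account for the implementation cost of the oracle $U_{f}$ separately in each of the two physical models.

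First I would read off the relevant parameters. Since the modification caps each pairwise interaction at magnitude $O(1/\Delta)$ and there are $\binom{\eta}{2}=O(\eta^{2})$ pairs of particles all carrying charge $q=O(1)$, one obtains $\|f(\cdot,t)\|_{\max}=O(\eta^{2}/\Delta)$ uniformly in $t$, and hence $\|f\|_{\max,1}=O(\eta^{2}T/\Delta)$. Because the modified Coulomb potential is time-independent the Lipschitz constant is zero, and the regularity parameter $g'$ is $\poly(\eta,d,1/\Delta)$ by direct differentiation of \eqn{modified-Coulomb}; both enter the complexity only poly-logarithmically. Plugging these bounds into \thm{time-dependent-simulation} gives a base simulation cost of $\widetilde O(\eta d\cdot\|f\|_{\max,1})=\widetilde O(\eta^{3}dT/\Delta)$ together with $\widetilde O(\eta^{2}T/\Delta)$ queries to $U_{f}$, with each oracle query counted as a single unit inside that base cost.

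Next I would implement $U_{f}$. For the first bullet I would coherently sum the $O(\eta^{2})$ regularized pairwise contributions $q^{2}/\sqrt{r_{ij}^{2}+\Delta^{2}}$ read from the position register; beyond the $O(\eta d)$ per-query overhead already counted inside \thm{time-dependent-simulation}, this adds $O(\eta^{2})$ reversible arithmetic operations per query. Multiplying by the query count and combining with the base simulation cost yields $\widetilde O\bigl(\eta^{3}(d+\eta)T/\Delta\bigr)$ one- and two-qubit gates. For the second bullet I would instead realise $U_{f}$ by a coherent Barnes--Hut multipole scheme using a hierarchical spatial decomposition stored in QRAM, whose per-query cost is $O(\eta(4d)^{d/2})$; multiplying by the same query count (and absorbing the linear-in-$d$ direct term into the exponential one since $(4d)^{d/2}\ge d$) produces $\widetilde O\bigl(\eta^{3}(4d)^{d/2}T/\Delta\bigr)$ gates and QRAM operations.

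Finally, for the second bullet I would verify the conditional clause that $\Delta$ can be chosen small enough that the intrinsic error of replacing the true Coulomb singularity by its regularized form is $O(\epsilon)$. This follows from standard estimates on $|1/r-1/\sqrt{r^{2}+\Delta^{2}}|$ integrated against wave functions satisfying the $g'$-regularity assumption used elsewhere in the paper, combined with a Duhamel-style argument translating a potential-level error into a state-level error at time $T$. I expect this intrinsic-error estimate to be the main technical obstacle, since the remaining gate accounting is routine once the oracle cost and the query count from \thm{time-dependent-simulation} are in hand.
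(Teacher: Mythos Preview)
Your overall approach matches the paper's: bound $\|f\|_{\max,1}=O(\eta^{2}T/\Delta)$ from \eqn{modified-Coulomb-bound}, plug this into \thm{many-rescaled-interaction-complexity} to obtain a query count of $\widetilde O(\eta^{2}T/\Delta)$ and a base gate count of $\widetilde O(\eta^{3}dT/\Delta)$, and then multiply the query count by the per-query cost of implementing $U_f$ --- either $O(\eta^{2})$ for brute-force pairwise summation or $O(\eta(4d)^{d/2})$ for the multipole-based Barnes--Hut scheme (\lem{fast-multipole}). That arithmetic is exactly what the paper does.

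Two corrections are in order, however. First, $g'$ is defined in \eqn{gprime} as a bound on high-order derivatives of the \emph{wave function} $\Phi$, not of the potential $f$; you cannot read it off by differentiating \eqn{modified-Coulomb}, and the paper explicitly leaves it as a free parameter (see the remark after \eqn{gprime}). Since $g'$ only enters logarithmically this does not affect the stated complexity, but your justification is incorrect.

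Second, and more importantly, you have misread the role of the conditional clause in item~2. It is a \emph{hypothesis}, not a claim to be established, and it is there for a specific reason you do not mention: the Barnes--Hut/multipole implementation of $U_f$ in \lem{fast-multipole} is only \emph{approximate}, and the lemma shows that its evaluation error is of the same order as the intrinsic error $|1/r - 1/\sqrt{r^{2}+\Delta^{2}}|$ between the true and modified Coulomb potentials. Hence assuming the intrinsic error is $O(\epsilon)$ guarantees the multipole approximation error is also $O(\epsilon)$, keeping the total simulation error under control. There is no Duhamel argument to carry out here, and the paper explicitly declines to analyze how small $\Delta$ must be for the hypothesis to hold. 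What you identified as ``the main technical obstacle'' is therefore not part of the proof at all; the piece you are actually missing is the observation that the approximate oracle's error is controlled relative to the intrinsic error.
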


While the details of how to choose $\Delta$ for a given application are outside the scope of this paper, we expect the modified Coulomb potential to give a good approximation of the actual Coulomb potential provided $\Delta$ is small compared to the minimum distances between particles, as discussed further in \sec{multi-particles}. Since this intrinsic simulation error must be small for the modified Coulomb potential to be relevant, the extra condition in the second simulation of \cor{interaction-coulomb-informal} should be satisfied in practice.

In \tab{alg-compare}, we compare the gate complexities of our methods with previous first-quantized methods for simulating the real-space dynamics of $d$-dimensional $\eta$-electron Schr{\"o}dinger equations with the potential $f(\vect{x})$ satisfying $\|f(\vect{x},t)\|_{L^{\infty}} \le \|f\|_{\max}$.
We let $\epsilon$ denote the real-space error in $\ell_2$ norm, including contributions from both the spatial discretization and the time discretization.
The quantity $g'$ defined in \eqn{gprime} upper bounds the high-order derivatives of the wave function.
The evolution time is denoted by $T$.

Previous work gave the complexities of discretized Hamiltonian simulations as a function of the grid spacing $h$ or the number of grid points $N$~\cite{kivlichan2017bounding,su2021fault}. However, such dependence can contribute additional polynomial factors of $\eta,T,\epsilon,g'$ to the complexity. For instance, Ref.~\cite{kivlichan2017bounding} considers a $d$-dimensional real-space simulation discretized on a grid by the central finite difference method. The complexity of this simulation is $\widetilde O((\eta d/h^2+\|f\|_{\max})T)$~\cite[Theorem 3]{kivlichan2017bounding}. Since $h = O(\epsilon/\eta d(g'+\eta^2T))$~\cite[Theorem 4 and Corollary 5]{kivlichan2017bounding}, the complexity of the real-space simulation is $\widetilde O\bigl(\eta^7d^3T^3(g')^2/\epsilon^2)$, as shown in \tab{alg-compare}.

Compared to Kivlichan et al.~\cite{kivlichan2017bounding}, we exponentially improve the dependence on $\epsilon$ and $g'$ from $\poly(g'/\epsilon)$ to $\poly(\log(g'/\epsilon))$.
In addition, we polynomially improve the dependence on $\eta$, $d$, $T$ to be linear, avoiding an additional polynomial dependence of these parameters when discretizing the space as in~\cite{kivlichan2017bounding}.

    \begin{table}
    \begin{center}
\resizebox{\columnwidth}{!}{
    \setlength{\tabcolsep}{4pt}
    \begin{tabular}{|c|c|c|c|c|c|}
      \hline
      \textbf{Reference} & \textbf{Representation} & \textbf{Algorithm} & \textbf{Complexity} \\
      \hline
      Kivlichan et al.~\cite{kivlichan2017bounding} & Finite Difference & Taylor & $\eta d(\eta^6 d^2T^2(g')^2/\epsilon^2+\|f\|_{\max})T \poly\bigl(\log(\eta dTg'/\epsilon)\bigr)$ \\
      \thm{time-independent-simulation} & Spectral Method & Taylor & $\eta d(\eta d+\|f\|_{\max})T\poly\bigl(\log(\eta dTg'/\epsilon)\bigr)$ \\
      \thm{time-dependent-simulation} & Spectral Method & Interaction Picture & $\eta d\|f\|_{\max}T\poly\bigl(\log(\eta dTg'/\epsilon)\bigr)$ \\
      \hline
    \end{tabular}
    }
    \end{center}
    \caption{Gate complexity comparison of for simulating the real-space dynamics of a $d$-dimensional $\eta$-particle Schr{\"o}dinger equation with the potential $f(\vect{x})$ satisfying $\|f(\vect{x},t)\|_{L^{\infty}} \le \|f\|_{\max}$. Here $T$ is the evolution time, $\epsilon$ is the $\ell_2$ real-space error, and $g'$ denoted in \eqn{gprime} upper bounds the high-order derivatives of the wave function.
    \label{tab:alg-compare}
    }
    \end{table}

Most previous work does not explicitly consider the quantum gate complexity as a function of $d$ for simulations of general $d$-dimensional $\eta$-particle Schr\"{o}dinger equations. An exception is \cite{kivlichan2017bounding}, whose cost scales as $O(d^4)$. In contrast, our \cor{interaction-coulomb} (based on the interaction picture algorithm)
can achieve quantum gate complexity $(\eta^3d+\eta^4)T \poly\bigl(\log(\eta dT/(\Delta\delta\epsilon))\bigr)/\Delta$ when $d$ is large, polynomially improving the dependence on $d$ to $\tilde{O}(d)$.

\paragraph{Applications.}

First, we consider the application of our algorithms to quantum chemistry.
As suggested by Kassal et al.~\cite{kassal2008polynomial}, direct simulation of the full quantum chemical dynamics may be more accurate and efficient than using the Born-Oppenheimer approximation, making this a potentially promising application of real-space simulation. We consider the exact molecular Schr{\"o}dinger equation under the interaction of time-independent electron-electron, electron-nucleus, and nucleus-nucleus Coulomb potentials. We then apply the Fourier spectral method and interaction picture simulation to develop an efficient real-space simulation. Using \thm{time-dependent-simulation}, we derive the following gate complexity.

\begin{corollary}[Informal version of \cor{molecular-dynamics-simulation}]\label{cor:molecular-simulation}
    Consider an instance of the Schr{\"o}dinger equation for $\eta_e$ electrons and $\eta_n$ nuclei in three spatial dimensions under the Coulomb interaction \eqn{modified-Coulomb}, where the nucleus has mass number $M$ and atomic number $Z$. These molecular dynamics can be simulated on a quantum computer within error $\epsilon$ with
	\begin{align}
        (\eta_e+M\eta_n)^3 TZ^2/(M\Delta)\cdot\poly(\log((\eta_e+M\eta_n)Tg'/(\Delta\epsilon)))
	\end{align}
	one- and two-qubit gates, along with the same number (up to poly-logarithmic factors) of QRAM operations.
\end{corollary}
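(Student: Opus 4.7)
The plan is to reduce the molecular Schr\"odinger equation to the canonical unit-mass form assumed by \thm{time-dependent-simulation} via a mass-absorbing coordinate change, then invoke the Coulomb-specific bound of \cor{interaction-coulomb-informal}. The molecular Hamiltonian splits into the electronic kinetic piece $-\frac{1}{2}\sum_i\nabla_{r_i}^2$, the nuclear kinetic piece $-\frac{1}{2M}\sum_I\nabla_{R_I}^2$, and three families of modified Coulomb interactions $V_{ee}$, $V_{eN}$, $V_{NN}$ regularized at scale $\Delta$ as in \eqn{modified-Coulomb}.

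First, I would perform the nuclear coordinate rescaling $R_I\mapsto\sqrt{M}\,R_I$, which turns the nuclear kinetic operator into the standard $-\frac{1}{2}\nabla^2$ form and redistributes factors of $\sqrt{M}$ into the Coulomb pieces: the nucleus-nucleus interaction picks up a $\sqrt{M}$ multiplier while the electron-nucleus cross-terms involve mixed-scale coordinates. After this reduction the system is one of unit-mass particles in $d=3$ dimensions, which fits the hypotheses of \thm{time-dependent-simulation} (and of the interaction-picture Coulomb analysis of \cor{interaction-coulomb-informal}).

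Next, because tracking the nuclear wave function on a grid fine enough to resolve the coupled electron-nucleus dynamics effectively inflates the nuclear degrees of freedom by a factor $M$, I would carry an effective particle count $\eta_{\mathrm{tot}} := \eta_e + M\eta_n$ through the resource analysis (corresponding intuitively to treating each nucleus as $M$ mass-one constituents in the rescaled frame). Bounding the total modified Coulomb potential by summing contributions over the $O(\eta_{\mathrm{tot}}^2)$ interacting pairs, each with maximum strength $O(Z^2/(M\Delta))$ after the mass rescaling factors have been absorbed, gives $\|f\|_{\max,1}\le O\!\left((\eta_e+M\eta_n)^2 T Z^2/(M\Delta)\right)$. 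Plugging this, $\eta=\eta_{\mathrm{tot}}$, and $d=3$ into \thm{time-dependent-simulation} yields a Hamiltonian-simulation cost of order $(\eta_e+M\eta_n)^3 TZ^2/(M\Delta)$ up to polylogarithmic factors. Finally, the potential oracle $U_f$ is implemented via a Barnes-Hut style multipole algorithm using QRAM access to particle positions; for fixed $d=3$ this is linear in $\eta_{\mathrm{tot}}$ per query and contributes only polylogarithmic overhead in the gate count along with the stated number of QRAM operations.

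The main obstacle will be the careful bookkeeping of the $\sqrt{M}$ and $1/\sqrt{M}$ factors through the three Coulomb families after the nuclear coordinate rescaling, together with verifying that the modified-Coulomb regularization at scale $\Delta$ can be chosen consistently in the rescaled frame so that the physical regularization is preserved. A secondary technical point is controlling the high-order derivative parameter $g'$ from \eqn{gprime} under the coordinate change, so that the polylogarithmic factors in \thm{time-dependent-simulation} do not conceal additional mass-dependent overhead; this should reduce to a routine estimate using the smoothness of the regularized Coulomb interactions.
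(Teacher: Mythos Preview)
Your approach differs from the paper's in the reduction step: the paper rescales \emph{time} rather than the nuclear spatial coordinates. Setting $\bar t := t/M$ multiplies the entire Hamiltonian by $M$, which turns the nuclear kinetic term into the unit-mass form $-\tfrac{1}{2}\sum_A\nabla_A^2$ while simply scaling all three Coulomb families by the common factor $M$. The potentials therefore remain exactly of the modified-Coulomb type \eqn{modified-Coulomb} with unchanged interparticle structure, so \cor{interaction-coulomb} applies directly with simulation time $\bar T=T/M$ and potential bound $M(\eta_e+\eta_n)^2Z^2/(2\Delta)$; the surviving electronic kinetic term $-\tfrac{M}{2}\sum_i\nabla_i^2$ is absorbed into an effective particle count.

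Your spatial rescaling $R_I\mapsto\sqrt{M}\,R_I$ runs into a genuine obstacle at precisely the point you flag as delicate: after the change of variables the electron--nucleus interaction becomes $Z/\sqrt{|r_i-R_A'/\sqrt{M}|^2+\Delta^2}$, which is no longer a function of the coordinate difference $r_i-R_A'$ in the new frame. The hypotheses of \cor{interaction-coulomb} (and of the Barnes--Hut multipole routine in \lem{fast-multipole}) require a pairwise potential depending only on interparticle separation, so these cross terms fall outside its scope. Relatedly, your per-pair bound $O(Z^2/(M\Delta))$ cannot hold uniformly, since the electron--electron terms are untouched by the rescaling and remain $O(1/\Delta)$. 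Finally, the ``effective particle count'' $\eta_e+M\eta_n$ coming from grid resolution does not match how $\eta$ enters \thm{many-rescaled-interaction-complexity}: there $\eta$ is literally the number of coordinate $d$-tuples, while the number of grid points affects only the truncation parameter $n$, which already sits inside the polylog. The time-rescaling route sidesteps all of these issues because it leaves the spatial structure of the potential untouched.
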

\noindent
Compared to the best previous result for real-space quantum simulation of chemical dynamics~\cite{kassal2008polynomial}, the above result matches the dependence of the query and gate complexity on the particle numbers $\eta_e$ and $\eta_n$, gives explicit dependence of $T$, and achieves $\poly(\log(1/\epsilon))$ dependence on $\epsilon$.

Second, we apply our interaction picture algorithm with $L^1$-norm scaling, developed in \sec{time-dep}, to the uniform electron gas model (also known as jellium), which is a simple yet powerful model in solid-state physics. Several authors have considered quantum algorithms for simulating jellium~\cite{babbush2017low,mcardle2021exploiting}. However, to the best of our knowledge, these works have not established an asymptotic bound on the simulation complexity that takes discretization error into account. Using \thm{time-dependent-simulation}, we bound the simulation cost as follows.
\begin{corollary}[Informal version of \cor{jellium-simulation}]\label{cor:informal-jellium}
The $3$-dimensional uniform electron gas model with $\eta$ electrons can be simulated for time $T$ on a quantum computer within error $\epsilon$ with
    \begin{align}\label{eqn:informal-interaction-coulomb}
        \eta^{3}T\poly(\log(\eta Tg'/(\Delta\epsilon)))/\Delta,
    \end{align}
    one- and two-qubit gates, along with the same number (up to poly-logarithmic factors) of QRAM operations.
\end{corollary}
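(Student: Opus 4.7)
The plan is to realize the jellium Hamiltonian as a special instance of the $\eta$-particle modified-Coulomb Schr\"odinger equation already treated in \cor{interaction-coulomb-informal}, and to invoke \thm{time-dependent-simulation}. Concretely, the $3$-dimensional $\eta$-electron jellium Hamiltonian takes the form $H = -\tfrac{1}{2}\sum_{i=1}^{\eta}\nabla_i^2 + f(\vect{x})$, where $f$ collects the modified electron-electron Coulomb potential $V_{ee}$ (regularized by $\Delta$ as in \eqn{modified-Coulomb}), the electron-background interaction $V_{eb}$, and the background self-energy $V_{bb}$ (which contributes only a global phase). Because $f$ is time-independent and periodic on the simulation box, this fits the hypotheses of \thm{time-dependent-simulation} with $d=3$, Lipschitz constant $L=0$, and $f$ bounded.

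The next step is to bound $\|f\|_{\max,1}=T\|f\|_{\max}$. Each of the $\binom{\eta}{2}$ electron pairs contributes at most $O(1/\Delta)$ to $V_{ee}$, so $\|V_{ee}\|_{\max}=O(\eta^2/\Delta)$. The electron-background term, a bounded mean-field integral per electron, adds $O(\eta/\Delta)$, and $V_{bb}$ is a scalar. Hence $\|f\|_{\max,1}=O(\eta^2 T/\Delta)$, and substituting into \thm{time-dependent-simulation} with $d=3$ gives gate complexity $\eta^3 T/\Delta \cdot \poly(\log(\eta T g'/(\Delta\epsilon)))$, matching \eqn{informal-interaction-coulomb}.

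Next I would implement the potential oracle $U_f$ following the second option of \cor{interaction-coulomb-informal}: for $d=3$, a QRAM-assisted Barnes-Hut-style multipole evaluation computes the pairwise sum in $O((4d)^{d/2})=O(1)$ gates and QRAM operations per query, up to poly-logarithmic overhead; the $V_{eb}$ and $V_{bb}$ contributions are classical functions of the position registers that can be computed with $O(\log\eta)$-size arithmetic circuits. Combining the per-query cost with the number of queries required by \thm{time-dependent-simulation} yields both the total gate count and the total QRAM count claimed in the corollary.

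The principal obstacle I anticipate is the careful treatment of the long-range Coulomb interaction under periodic boundary conditions, where the bare jellium kernel is written as a conditionally convergent Ewald (Madelung) sum. I would address this by showing that once the short-distance singularity is cured by $\Delta$, the remaining long-range contribution on the fundamental cell is a bounded, smooth function whose $L^\infty$ norm is absorbed into the $O(1/\Delta)$ per-pair estimate, so the $\|f\|_{\max,1}$ bound—and hence the gate/QRAM count—survives unchanged. A related minor check is that the wave-function regularity parameter $g'$ defined in \eqn{gprime} is finite for the regularized jellium dynamics, which follows from the smoothness of the $\Delta$-regulated Ewald kernel on the torus.
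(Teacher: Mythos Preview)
Your approach is essentially the paper's: specialize the modified-Coulomb result (\cor{interaction-coulomb}) to $d=3$ with all charges equal to $e$, so that $\|f\|_{\max,1}=O(\eta^2 T/\Delta)$ and the interaction-picture bound yields $\eta^3 T\poly(\log)/\Delta$. Two small corrections are worth making. First, the paper observes that in the periodic box the electron--background term $\hat H_{e-b}$ is \emph{constant} (by homogeneity of jellium), not merely a bounded per-electron function; together with the constant $\hat H_{b-b}$ it contributes only a global phase, so the Ewald/Madelung discussion you flag as the principal obstacle simply does not arise---the potential oracle need only evaluate $V_{ee}$. Second, the Barnes--Hut/fast-multipole evaluation of $U_f$ costs $O(\eta(4d)^{d/2})$ gates and QRAM operations per call (linear in $\eta$), not $O((4d)^{d/2})=O(1)$; combined with the $\tilde O(\eta^2 T/\Delta)$ queries this is what produces the $\eta^3$ QRAM count, matching the gate count already delivered by \thm{time-dependent-simulation}.
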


Third, we consider possible applications of our real-space dynamics simulation algorithms in the context of optimization. Recent work~\cite{zhang2020quantum} demonstrated that for a saddle point of a high dimensional nonconvex function, one can detect its nearby negative curvature structure by simulating the Sch{\"o}dinger dynamics of a Gaussian wavepacket centered at this point. Since saddle points are ubiquitous in the landscape of nonconvex functions (see e.g.~\cite{dauphin2014identifying,fyodorov2007replica}), escaping from saddle points is one of the major difficulties in nonconvex optimization. By exploiting our interaction picture algorithm with $L^1$-norm scaling (\prop{rescaled-interaction-complexity}), we show that we can escape from saddle points and further find a local minimum of the objective function with the following cost.

\begin{corollary}[Informal version of \cor{esc-saddle-query}]
For a $d$-dimensional twice-differentiable function $f$ that is $\ell$-smooth and $\rho$-Hessian Lipschitz, and for any $\epsilon>0$, there exists a quantum algorithm that outputs an $\epsilon$-approximate local minimum with probability at least $2/3$ using $\tilde{O}\big(\frac{f(\vect{x}_0)-f^{*}
}{\epsilon^{1.75}} \log d\big)$ queries to the evaluation oracle $U_f$, where $\vect{x}_0$ is an initial point and $f^*$ is the global minimum of $f$.
\end{corollary}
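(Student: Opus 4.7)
The plan is to plug the interaction-picture real-space simulation algorithm of \thm{time-dependent-simulation} into the quantum saddle-point-escape framework of Zhang, Leng, and Li \cite{zhang2020quantum}, which is itself a quantum analogue of perturbed gradient descent. The outer loop maintains a current iterate $\vect{x}$; whenever $\|\nabla f(\vect{x})\| > \epsilon$ a gradient-descent step decreases $f$ by $\Omega(\epsilon^{2}/\ell)$, while if $\|\nabla f(\vect{x})\| \le \epsilon$ then the point is either $\epsilon$-approximately locally minimal or a strict saddle with $\lambda_{\min}(\nabla^{2} f(\vect{x})) \le -\sqrt{\rho\epsilon}$. In the latter case a quantum saddle-escape subroutine produces a new iterate with $f$ decreased by $\Omega(\epsilon^{1.5}/\sqrt{\rho})$. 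Since $f$ is bounded below by $f^{*}$, the total number of outer iterations is $N = O\bigl((f(\vect{x}_{0}) - f^{*})/\epsilon^{1.5}\bigr)$.

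First I would set up each escape step following \cite{zhang2020quantum}: subtract the affine part of $f$ at $\vect{x}$, smoothly truncate the resulting potential outside a ball of radius $R = \tilde{O}((\epsilon/\rho)^{1/4})$, and prepare a Gaussian wavepacket centered at $\vect{x}$ with width $\Theta(R)$. This wavepacket is evolved under the one-body Schr{\"o}dinger equation \eqn{Schrodinger-indep} (specialized to $\eta = 1$) for time $T = \tilde{O}(1/\sqrt{\rho\epsilon})$, after which a position measurement yields, with constant probability, a direction aligned with the most-negative-curvature eigenvector of $\nabla^{2} f(\vect{x})$, which becomes the next iterate.

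Second I would bound the per-escape cost by applying \thm{time-dependent-simulation} with $\eta = 1$ to this simulation. On the truncated support the potential is bounded and Lipschitz-in-$t$ with parameters $\poly(1/\epsilon)$, so $\|f\|_{\max,1}$ and the Lipschitz constant $L$ are both $\poly(1/\epsilon)$. The theorem then yields a simulation cost of $\tilde{O}(\|f\|_{\max,1})$ queries to $U_{f}$ together with a polylogarithmic overhead $\poly(\log(L\,\|f\|_{\max,1}\,g'/\epsilon))$; crucially, because all polynomial parameters inside this logarithm are $\poly(d, 1/\epsilon)$, the entire polylog collapses to $\poly(\log d, \log(1/\epsilon))$. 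Matching this against the escape-step scaling in \cite{zhang2020quantum} gives a per-escape query complexity of $\tilde{O}(\epsilon^{-1/4}\log d)$, an improvement over the $\tilde{O}(\epsilon^{-1/4}\log^{2} d)$ obtained in \cite{zhang2020quantum} with a less efficient simulation primitive, which is the source of the quadratic improvement.

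Finally, multiplying the per-escape cost by the outer iteration count $N$ yields the claimed $\tilde{O}\bigl((f(\vect{x}_{0}) - f^{*})/\epsilon^{1.75}\cdot\log d\bigr)$ query bound, after amplifying the per-escape success probability to a constant so that the overall success probability is at least $2/3$. The main obstacle is controlling the regularity parameter $g'$ of the evolving Gaussian wavepacket: one must show that throughout the time-$T$ evolution, the high-order derivatives of the wavepacket remain bounded by $\poly(d, 1/\epsilon)$, so that $\log g'$ is absorbed into $\poly(\log d, \log(1/\epsilon))$ without introducing further $d$-dependent factors. This requires combining the Gaussian's rapid decay in Fourier space (which controls the initial $g'$) with the $\ell$-smoothness and $\rho$-Hessian-Lipschitz continuity of $f$ on the truncated support (which controls the growth of $g'$ under the evolution), and is the most delicate technical step of the argument.
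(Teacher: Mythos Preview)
Your overall architecture matches the paper's: invoke the saddle-escape framework of \cite{zhang2020quantum} (\lem{simulation-number} here), and replace its simulation primitive by the interaction-picture algorithm (\prop{rescaled-interaction-complexity}, equivalently \thm{time-dependent-simulation} with $\eta=1$), then multiply the per-escape query cost by the $O\bigl((f(\vect{x}_0)-f^*)/\epsilon^{1.5}\bigr)$ iteration count. So the plan is correct in outline. However, several quantitative details are off and, if carried through as written, would not yield the claimed bound.

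First, the evolution time is misstated. In \cite{zhang2020quantum} (quoted here as \lem{simulation-number}) the Schr\"odinger evolution runs for $\mathscr{T}'=\Theta\bigl((\rho\epsilon)^{-1/4}\log(d/\epsilon)\bigr)$, not $\tilde{O}\bigl((\rho\epsilon)^{-1/2}\bigr)$. This matters because the query cost of \prop{rescaled-interaction-complexity} scales linearly with $\|f\|_{\max,1}=\mathscr{T}'\cdot\|f\|_{\max}$; with your $T$ you would obtain a per-escape cost of $\tilde{O}(\epsilon^{-1/2})$ and hence a total of $\tilde{O}(\epsilon^{-2})$, not $\tilde{O}(\epsilon^{-1.75})$. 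You cannot repair this by ``matching against'' \cite{zhang2020quantum} at the end; the exponent has to come out of your own accounting.

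Second, the single $\log d$ in the final bound does \emph{not} come from the polylogarithmic overhead of the simulation theorem. In the paper's argument (\lem{optimization-simulation} plus the proof of \cor{esc-saddle-query}), after the rescaling $r_0=\Theta(M)$ one has $\|f\|_{\max}=O(\|\mathcal{H}\|)\le O(\ell)$, so $\|f\|_{\max,1}=O(\ell\,\mathscr{T}')=\tilde{O}(\epsilon^{-1/4}\log d)$; the $\log d$ is already inside $\mathscr{T}'$. The query overhead $\log(\|f\|_{\max,1}/\epsilon)/\log\log(\cdot)$ then contributes only $O(\log(1/\epsilon)+\log\log d)$, which is absorbed into $\tilde{O}$. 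So your attribution of the $\log d$ to the $\poly(\log(L\|f\|_{\max,1}g'/\epsilon))$ factor is the wrong mechanism.

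Third, your ``main obstacle'' concerning $g'$ is not an obstacle for the statement at hand. The corollary bounds \emph{query} complexity, and in \prop{rescaled-interaction-complexity} the query count depends only on $\|f\|_{\max,1}$ and $\epsilon$; the parameter $g'$ enters only the gate count (through the spatial truncation $n$). So no regularity estimate on the evolving wavepacket is needed to establish the claimed query bound.
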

\noindent
Compared to \cite{zhang2020quantum}, which uses $\tilde{O}(\log^2d/\epsilon^{1.75})$ queries to find a local minimum, our algorithm achieves a quadratic speedup in terms of $\log d$.

\paragraph{Organization.}
The rest of the paper is organized as follows. \sec{time-indep} introduces the Fourier spectral method and develops simulations of time-independent multi-particle Schr{\"o}dinger equations based on product formulas and the truncated Taylor series method. \sec{time-dep} generalizes high-precision real-space simulation to time-dependent multi-particle Schr{\"o}dinger equations by utilizing the interaction picture technique. \sec{applications} discusses several applications of our results, including quantum chemistry, the uniform electron gas, and optimization. We conclude and discuss open questions in \sec{conclusions}. \append{notation} introduces some notation used throughout the paper, and \append{spectral} establishes an error bound for the Fourier spectral method.


\section{Simulating Schr{\"o}dinger equations in real space}\label{sec:time-indep}
	\subsection{Fourier spectral method}
	\label{sec:fourier-spectral}

In this section, we develop an approach to simulating the Schr{\"o}dinger equation in real space that combines the Fourier spectral method with Hamiltonian simulation. The Fourier spectral method (also known as the Fourier pseudospectral method) provides a global approximation to the exact solution of a partial differential equation with periodic boundary conditions. This approch can be contrasted with local approximations---such as the finite difference method---that approximate the solution on a set of grid points. In general, the Fourier spectral method approximates the solution by a linear combination of Fourier basis functions with undetermined time-dependent coefficients. By interpolating the partial differential equations at uniformly spaced nodes, we obtain a system of ordinary differential equations that can be solved numerically~\cite{boyd2001chebyshev,tang2006spectral}. Applying this approach to the Schr{\"o}dinger equation, we obtain a discretized Hamiltonian system that can be handled by standard Hamiltonian simulation algorithms.

At first glance, the Fourier spectral method looks similar to plane-wave methods widely used in first-quantized quantum simulations. Although these two approaches both employ Fourier basis functions, their primary difference is that they approximate the infinite-dimensional functional space in different finite-dimensional subspaces, and in particular, result in different discretized Hamiltonian systems. To illustrate this difference, let $\Phi(x,t)$ and $\widetilde \Phi(x,t)$ denote the exact and approximated solutions, respectively, of a one-dimensional Schr{\"o}dinger equation, and let
	\begin{align}\label{eqn:Schrodinger-residue}
	  R_n(x,t) = \Big[i\frac{\partial}{\partial t}+\frac{1}{2}\nabla^{2}-f(x)\Big]\widetilde \Phi(x,t)
	\end{align}
denote the \emph{residue}, where $n+1$ is the number of the basis functions. The residue quantifies the extent to which the approximated solution fails to satisfy the Schr{\"o}dinger equation \eqn{Schrodinger-indep}. In general, $R_n$ cannot be zero as a function of $t$ unless the exact solution $\Phi$ is a finite combination of basis functions. Instead, we seek a reasonable choice of $\widetilde \Phi$ such that the projection of $R_n$ onto some finite-dimensional subspace vanishes. As we describe below in more detail, the Fourier spectral approach requires the residue to vanish at the set of interpolation nodes, while Galerkin plane wave methods instead guarantee that its integrations with Fourier test functions are zero.

In the Galerkin approach~\cite{brenner2008mathematical,szabo1991finite}, the residue is orthogonal to a subspace of $n+1$ chosen test functions, denoted $\{\phi_j\}_{j=0}^n$. In other words, we require that
    \begin{align}\label{eqn:Gakerlin-integration}
	  \langle \phi_j | R_n \rangle = \langle \phi_j | H | \widetilde \Phi \rangle = 0, \qquad j \in \rangez{n+1}=\{0,1,\ldots,n\},
	\end{align}
where angle brackets denote the inner product over the spatial domain, and $H$ is the Galerkin discretized Hamiltonian.
For the Schr{\"o}dinger equation \eqn{Schrodinger-indep}, we have $H=T+V$ where the matrix elements of the discretized kinetic and potential terms are given by \cite{su2021fault}
    \begin{align}\label{eqn:Gakerlin-element}
	  T_{pq} &= \int \d r \, \phi^{\ast}_p(x)\biggl(-\frac{\nabla^2}{2}\biggr)\phi_q(x), \\
      V_{pq} &= \int \d r \, \phi^{\ast}_p(x)f(x)\phi_q(x).
	\end{align}
(See Appendix B of \cite{su2021fault} for Galerkin representations of molecular Hamiltonians.)

Equation \eqn{Gakerlin-integration} is a system of $n+1$ ordinary differential equations with time-dependent coefficients. For first-quantized plane-wave methods, the basis functions used in constructing $\widetilde \Phi$ as well as the test functions $\{\phi_j\}$ are all chosen as Fourier basis functions. While this discretization is commonly used in first-quantized quantum simulations~\cite{toloui2013quantum,babbush2017exponentially,babbush2019quantum,su2021fault}, previous studies neglect the discretization error. If the Schr{\"o}dinger operator includes an unbounded potential or is highly oscillatory, such a Galerkin representation may not provide a reasonable approximation.

On the other hand, in the spectral approach~\cite{boyd2001chebyshev,tang2006spectral}, we choose the test functions $\{\phi_j\}$ in \eqn{Gakerlin-integration} to be delta functions on the uniform interpolation nodes $\{\chi_j\}_{j=0}^n$. Then we have
    \begin{align}\label{eqn:Gakerlin-interpolation}
	  \langle \delta_j | R_n \rangle = R_n(\chi_j,t) = 0, \qquad k \in \rangez{n+1},
	\end{align}
which is equivalent to
    \begin{align}\label{eqn:interpolation}
	  \Big[i\frac{\partial}{\partial t}+\frac{1}{2}\nabla^{2}-f(\vect{x})\Big]\widetilde \Phi(\chi_j,t) = 0, \qquad k \in \rangez{n+1}.
	\end{align}
This choice again defines a system of $n+1$ ordinary differential equations with time-dependent coefficients. The spectral approach can provide a straightforward approximation of real-space quantum dynamics by determining the number of interpolation points $n+1$ explicitly as a function of the allowed discretization error, the particle number, and the norms of high-order derivatives of the wave function. In contrast, previous simulations based on the Galerkin approach~\cite{toloui2013quantum,babbush2017exponentially,babbush2019quantum,su2021fault} did not explicitly take the real-space discretization error into account, and instead merely determined the complexity in terms of the number of basis functions used.

We now introduce our Fourier spectral approach for time-dependent Schr{\"o}dinger equations of the general form
	\begin{align}\label{eqn:Schrodinger-dep}
i\frac{\partial}{\partial t}\Phi(\vect{x},t)=\Big[-\frac{1}{2}\nabla^{2}+f(\vect{x},t)\Big]\Phi(\vect{x},t)
\end{align}
where $\vect{x}\in\R^{d}$ represents the position of the quantum particle, $t\in\R$ represents time, $\nabla=\bigl(\frac{\partial}{\partial x_{i}}\bigr)\big|_{i=1}^{d}$ is the gradient, and $f\colon\R^{d}\times\R\to\R$ is the potential function. This generalizes \eqn{Schrodinger-indep} to the case of time-dependent potentials. We assume access to the potential through a unitary oracle $U_{f}$ such that for any $\x\in\R^{d}$, $t\in\R$, and $z\in\R$,
	\begin{align}\label{eqn:evaluation-dep}
		U_{f}|\x\>|t\>|z\>=|\x\>|t\>|f(\x)+z\>.
	\end{align}

For concreteness, we consider $\vect{x}\in\Omega:=[0,1]^d$ and assume periodic boundary conditions for $\Phi(\vect{x},0)$, i.e.,
	\begin{align}\label{eqn:periodic-condition}
		\frac{\partial^{(p)}}{\partial x_j^{(p)}}\Phi(x_1, \ldots, x_{j-1}, 0, x_{j+1}, \ldots, x_d,0) = \frac{\partial^{(p)}}{\partial x_j^{(p)}}\Phi(x_1, \ldots, x_{j-1}, 1, x_{j+1}, \ldots, x_d,0)
	\end{align}
holds for all $p \in \N, ~j \in \range{d}$, where $\frac{\partial^{(p)}}{\partial x_j^{(p)}}$ is the $p$th-order partial derivative with respect to the $j$th coordinate of $\vect{x} = [x_1, \ldots, x_d]^T$.

 We first apply the Fourier spectral method for the spatial discretization. We approximate the solution $\Phi(\vect{x},t)$ by a Fourier series of the form
	\begin{align}\label{eqn:basis-space-expand}
		\widetilde \Phi(\vect{x},t) = \sum_{\|\vect{k}\|_{\infty}\le n}c_{\vect{k}}(t)\phi_{\vect{k}}(\vect{x})
	\end{align}
for some even number $n \in \N$, where $\vect{k}=(k_1,\ldots,k_d)$ with $k_j\in \rangez{n+1}$, $c_{\vect{k}}(t)\in \C$, and
	\begin{align}\label{eqn:basis_tensor}
		\phi_{\vect{k}}(\vect{x}) = \prod_{j=1}^d\phi_{k_j}(x_j)
	\end{align}
with
	\begin{align}\label{eqn:basis_function}
		\phi_{k}(x) := e^{2\pi i(k-n/2) x}
	\end{align}
for $k \in \rangez{n+1}$ and $x \in [0,1]$.

	Plugging \eqn{basis-space-expand} into \eqn{Schrodinger-dep}, we obtain an approximated PDE system
	\begin{align}
		i\frac{\partial}{\partial t}\widetilde \Phi(\vect{x},t)=\Big[-\frac{1}{2}\nabla^{2}+f(\vect{x},t)\Big]\widetilde \Phi(\vect{x},t)
	\end{align}
    with the initial condition
    \begin{align}
		\widetilde \Phi(\vect{x},0) = \Phi(\vect{x},0).
	\end{align}

In terms of the basis functions, this gives
	\begin{align}\label{eqn:approximated_PDE}
		i\sum_{\|\vect{k}\|_{\infty}\le n}\frac{\d}{\d{t}}c_{\vect{k}}(t)\phi_{\vect{k}}(\vect{x}) = \sum_{\|\vect{k}\|_{\infty}\le n}c_{\vect{k}}(t)\Big[-\frac{1}{2}\sum_{\|\vect{r}\|_{\infty}\le n}[{\vect{L}}_{n,d}]_{\vect{kr}}\phi_{\vect{r}}(\vect{x})+f(\vect{x},t)\phi_{\vect{k}}(\vect{x})\Big]
	\end{align}
	where ${\vect{L}}_{n,d}$ is the multi-dimensional Laplacian matrix
	\begin{align}
		{\vect{L}}_{n,d} := \bigoplus_{j=1}^d D^2_n = D^2_n\otimes I^{\otimes d-1}+I\otimes D^2_n\otimes I^{\otimes d-2}+\cdots+I^{\otimes d-1}\otimes D^2_n
	\end{align}
and $D_n$ is a differential matrix for the Fourier basis functions \eqn{basis_function}, the $(n+1)$-dimensional diagonal matrix with entries
	\begin{align}\label{eqn:Fourier_Dn}
		[D_n]_{kk}=2\pi i(k-n/2)
	\end{align}
	for $k \in \rangez{n+1}$.

	To produce a system of ordinary differential equations, we introduce the \emph{uniform interpolation nodes} $\{\vect{\chi}_{\vect{l}}=(\chi_{l_1}, \ldots, \chi_{l_d})\}_{\|\vect{l}\|_{\infty}\le n}$ with $l_j\in \rangez{n+1}$, where
	\begin{align}\label{eqn:interpolation_nodes}
		\chi_{l_j} = \frac{l_j}{n+1}.
	\end{align}
	Considering \eqn{approximated_PDE} at the uniform interpolation nodes \eqn{interpolation_nodes}, we obtain an $(n+1)^d$-dimensional approximated ODE system
	\begin{align}\label{eq:odesystem}
		&i\sum_{\|\vect{k}\|_{\infty}\le n}\frac{\d}{\d{t}}c_{\vect{k}}(t)\phi_{\vect{k}}(\vect{\chi}_{\vect{l}})|l_1\rangle\ldots|l_d\rangle\nonumber\\
&\qquad = \sum_{\|\vect{k}\|_{\infty}\le n}c_{\vect{k}}(t)\Big[-\frac{1}{2}\sum_{\|\vect{r}\|_{\infty}\le n}[{\vect{L}}_n]_{\vect{kr}}\phi_{\vect{r}}(\vect{\chi}_{\vect{l}})+f(\vect{\chi}_{\vect{l}},t)\phi_{\vect{k}}(\vect{\chi}_{\vect{l}})\Big]|l_1\rangle\ldots|l_d\rangle,
	\end{align}
	where $l_j\in\rangez{n+1},~ j\in\range{d}$.

    The well-known \emph{quantum Fourier transform} (QFT) maps the $(n+1)$-dimensional quantum state $v=(v_0,v_1,\ldots,v_n)\in\C^{n+1}$ to the state $\hat{v}=(\hat{v}_0,\hat{v}_1,\ldots,\hat{v}_n)\in\C^{n+1}$ with
    \begin{equation}\label{eqn:qft_rule}
    \hat{v}_l = \frac{1}{\sqrt{n+1}}\sum_{k=0}^{n}\exp\Bigl(\frac{2\pi ikl}{n+1}\Bigr)v_k,\quad l\in\rangez{n+1}.
    \end{equation}
    In other words, the QFT is the unitary transform
    \begin{equation}\label{eqn:qft_matrix}
    F_n:=\frac{1}{\sqrt{n+1}}\sum_{k,l=0}^n \exp\Bigl(\frac{2\pi ikl}{n+1}\Bigr)|l\rangle\langle k|.
    \end{equation}
	The closely related \emph{quantum shifted Fourier transform} (QSFT) maps the $(n+1)$-dimensional quantum state $v \in \C^{n+1}$ to the state $\hat v \in \C^{n+1}$ with
    \begin{equation}\label{eqn:qsft_rule}
    \hat{v}_l = \frac{1}{\sqrt{n+1}}\sum_{k=0}^{n}\exp\Bigl(\frac{2\pi i(k-n/2)l}{n+1}\Bigr)v_k,\quad l\in\rangez{n+1}.
    \end{equation}
    In other words, the QSFT is the unitary transform
	\begin{align}\label{eqn:qsft_matrix}
		F^s_n := \frac{1}{\sqrt{n+1}}\sum_{k,l=0}^n \exp\Bigl(\frac{2\pi i(k-n/2)l}{n+1}\Bigr)|l\rangle\langle k|.
	\end{align}
    Notice that the QSFT can be written as the product
    \begin{equation}
    F^s_n=S_nF_n,
    \end{equation}
    of the QFT defined above and the diagonal matrix
    \begin{equation}
    S_n=\sum_{l=0}^n \exp\Bigl(-\frac{\pi i n l}{n+1}\Bigr)|l\rangle\langle l|.
    \end{equation}
    The QSFT can be performed with gate complexity $O(\log n \log\log n)$ \cite[Lemma 5]{childs2020high}. Using \eqn{basis-space-expand} in the one-dimensional case with $v_k = c_k(t)$, the QSFT maps the state $v$ to $\hat v = F^s_n v$ satisfying
    \begin{equation}
    \hat{v}_l = \frac{1}{\sqrt{n+1}}\sum_{k=0}^{n}c_k(t)\phi_{k}(\chi_{l}) = \frac{1}{\sqrt{n+1}}\widetilde \Phi(\chi_{l},t),\quad l\in\rangez{n+1}.
    \end{equation}
    In other words, the QSFT maps the coefficient vector $v = \sum_{k=0}^{n} c_k(t)|k\rangle$ to approximate interpolated solution $\hat v = \frac{1}{\sqrt{n+1}}\sum_{l=0}^{n} \widetilde \Phi(\chi_{l},t)|l\rangle$.
    We use the QSFT (instead of the ordinary QFT) to align with the phase convention specified in \eqn{basis-space-expand}.

We also define the multi-dimensional QSFT as
	\begin{align}
		\vect{F}^s_{n,d} := \bigotimes_{j=1}^d F^s_n.
	\end{align}

	Letting
	\begin{align}
		c(t) := \sum_{\|\vect{k}\|_{\infty}\le n}c_{\vect{k}}(t)|k_1\rangle\ldots|k_d\rangle, \qquad |c(t)\rangle := \frac{c(t)}{\|c(t)\|},
	\end{align}
and
	\begin{align}
		\vect{V}_{n,d}(t) := \sum_{\|\vect{l}\|_{\infty}\le n}f(\vect{\chi}_{\vect{l}},t)|l_1\rangle\ldots|l_d\rangle\langle l_1|\ldots\langle l_d|,
	\end{align}
	the ODE system \eq{odesystem} can be rewritten as
	\begin{align}
		i\vect{F}^s_{n,d}\frac{\d}{\d{t}}|c(t)\rangle = \vect{F}^s_{n,d}{\vect{L}}_{n,d}|c(t)\rangle+\vect{V}_{n,d}(t)\vect{F}^s_{n,d}|c(t)\rangle.
	\end{align}
Equivalently,
	\begin{align}\label{eqn:semi_discrete_ODE_frequency}
		i\frac{\d}{\d{t}}|c(t)\rangle = {\vect{H}}_{n,d}(t)|c(t)\rangle = [\vect{L}_{n,d}+(\vect{F}^s_{n,d})^{-1}\vect{V}_{n,d}(t)\vect{F}^s_{n,d}]|c(t)\rangle
	\end{align}
	which is a Hamiltonian system in the momentum space,
    with the Hamiltonian
	\begin{align}\label{eqn:spectral_Hamiltonian_frequency}
		{\vect{H}}_{n,d}(t) := \vect{L}_{n,d}+(\vect{F}^s_{n,d})^{-1}\vect{V}_{n,d}(t)\vect{F}^s_{n,d}.
	\end{align}

	Alternatively, \eqn{semi_discrete_ODE_frequency} can be expressed as
	\begin{align}\label{eqn:semi_discrete_ODE_transform}
		i\frac{\d}{\d{t}}[\vect{F}^s_{n,d}|c(t)\rangle] = [\vect{F}^s_{n,d}{\vect{L}}_{n,d}(\vect{F}^s_{n,d})^{-1}+\vect{V}_{n,d}(t)][\vect{F}^s_{n,d}|c(t)\rangle].
	\end{align}
	Using \eqn{basis-space-expand}, we
	write
	\begin{align}\label{eqn:position-state-approximate}
		\widetilde \Phi(t) = \sum_{\|\vect{l}\|_{\infty}\le n}\widetilde \Phi(\vect{\chi}_{\vect{l}},t)|l_1\rangle\ldots|l_d\rangle = \sum_{\|\vect{l}\|_{\infty}\le n}c_{\vect{k}}(t)\phi_{\vect{k}}(\vect{\chi}_{\vect{l}})|l_1\rangle\ldots|l_d\rangle, \qquad |\widetilde \Phi(t)\rangle := \frac{\widetilde \Phi(t)}{\|\widetilde \Phi(t)\|},
	\end{align}
	such that
	\begin{align}\label{eqn:frequency-position-transform}
		\widetilde \Phi(t) = \vect{F}^s_{n,d}c(t), \qquad |\widetilde \Phi(t)\rangle = \vect{F}^s_{n,d}|c(t)\rangle
	\end{align}
	provide approximations of the exact solution and its $\ell_2$ normalized state
	\begin{align}\label{eqn:position-state-exact}
		\Phi(t) = \sum_{\vect{l}}\Phi(\vect{\chi}_{\vect{l}},t)|l_1\rangle\ldots|l_d\rangle, \qquad |\Phi(t)\rangle := \frac{\Phi(t)}{\|\Phi(t)\|},
	\end{align}
	respectively.
	Thus we see that Eq.~\eqn{semi_discrete_ODE_frequency} is a Hamiltonian system in position space
    \begin{align}\label{eqn:semi_discrete_ODE_time}
		i\frac{\d}{\d{t}}|\widetilde \Phi(t)\rangle = \vect{\widetilde H}_{n,d}(t)|\widetilde \Phi(t)\rangle = [\vect{F}^s_{n,d}\vect{L}_{n,d}(\vect{F}^s_{n,d})^{-1}+\vect{V}_{n,d}(t)]|\widetilde \Phi(t)\rangle,
	\end{align}
with the Hamiltonian
	\begin{align}\label{eqn:spectral_Hamiltonian_time}
		{\vect{\widetilde H}}_{n,d}(t) &:= \vect{F}^s_{n,d}{\vect{L}}_{n,d}(\vect{F}^s_{n,d})^{-1}+\vect{V}_{n,d}(t) \\
		&= \vect{F}^s_{n,d}{\vect{H}}_{n,d}(t)(\vect{F}^s_{n,d})^{-1}.
		\label{eqn:spectral_Hamiltonian_interaction}
	\end{align}

    Furthermore, we assume the $\ell_2$ norm of the exact $(n+1)^d$-dimensional initial condition $\Phi(0)$ satisfies
    \begin{align}\label{eqn:initial-norm}
		\|\Phi(0)\|^2 = \sum_{\|\vect{l}\|_{\infty}\le n}|\Phi(\vect{\chi}_{\vect{l}},0)|^2 = (n+1)^d.
	\end{align}
    This is a discrete analog of the condition $\int_{\vect{x}\in\Omega}|\Phi(\vect{x},0)|^2 \, \d{\vect{x}} = 1$ on the $(n+1)^d$ uniform interpolation nodes $\{\vect{\chi}_{\vect{l}}\}$. In more detail, consider the trapezoidal rule for numerical integration~\cite{hildebrand1987introduction}. On each $d$-dimensional grid cell, we replace the integration of $\Phi(\vect{x},0)$ by the average value of $2d$ nearby interpolation points $\Phi(\vect{\chi}_{\vect{l}},0)$ times the volume $\frac{1}{(n+1)^d}$ of the $d$-dimensional grid cell. In this setting, $\|\Phi(0)\|^2/(n+1)^d$ approximates $\int_{\vect{x}\in\Omega}|\Phi(\vect{x},0)|^2 \, \d{\vect{x}}$. For convenience, we normalize the state according to \eqn{initial-norm}.\footnote{Given an arbitrary initial condition $\Phi(\vect{x},0)$ and its corresponding discretized state $\Phi(0)$, we can rescale the initial condition as $\Phi(\vect{x},0) \to \frac{(n+1)^{d/2}}{c}\Phi(\vect{x},0)$ and $\Phi(0) \to \frac{(n+1)^{d/2}}{c}\Phi(0)$ where $c := \|\Phi(0)\|$, such that the rescaled state satisfies $\|\Phi(0)\| = (n+1)^d$.}
    Because the Schr{\"o}dinger equation is unitary, this ensures that
    \begin{align}\label{eqn:conservation-norm}
		\|\Phi(t)\|^2 = \sum_{\|\vect{l}\|_{\infty}\le n}|\Phi(\vect{\chi}_{\vect{l}},t)|^2 = (n+1)^d
	\end{align}
    for all $t \in \R$.

	\subsection{Truncation number of the Fourier spectral method}

	The overall simulation error includes two contributions: the error introduced by discretizing the problem with the Fourier spectral method and the error introduced by the Hamiltonian simulation algorithm. To ensure overall error at most $\epsilon$, we choose the parameters of the Fourier spectral method to upper bound the spatial discretization error between the exact and approximated normalized states ($|\Phi(t)\rangle$ and $|\widetilde \Phi(t)\rangle$, respectively) by $\epsilon/2$, and choose the parameters of the Hamiltonian simulation algorithm to also upper bound its error by $\epsilon/2$. The latter calculation uses standard analysis to bound the error accumulated over the course of the simulation. In the following, we analyze the error of the Fourier spectral method.

    Spectral methods typically exhibit exponential convergence
    if the solution is smooth~\cite{boyd2001chebyshev}. In particular, we establish exponential convergence for approximating $\Phi(\vect{x},t)$ by $\widetilde \Phi(\vect{x},t)$.

	\begin{lemma}\label{lem:pseudospectral-estimate-lemma}
		Let $\Phi(\vect{x},t)$ and $\widetilde \Phi(\vect{x},t)$
		denote the exact and approximated solutions of \eqn{Schrodinger-indep} by the Fourier spectral method, respectively, where $\Phi(\vect{x},t)$ is analytic in $t$ and $\vect{x}$. Then for any even integer $n\ge6$, the error from the Fourier spectral method satisfies
	\begin{align}\label{eqn:pseudospectral-estimate-Schrodinger}
		\max_{\vect{x},t}|\Phi(\vect{x},t) - \widetilde \Phi(\vect{x},t)| \le \frac{2}{\pi}\frac{\max_t\|\Phi^{(n/2)}(\cdot,t)\|_{L^1}}{(n/2)^{n/2}}.
	\end{align}
	\end{lemma}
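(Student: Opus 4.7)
The plan is to read the stated bound as a pointwise estimate at each fixed time $t$, so the dynamics enters only through the smoothness of $\Phi$ as quantified by the $(n/2)$-th derivative norm. At each $t$, the pseudospectral condition defines $\widetilde{\Phi}(\cdot,t)$ as the unique element of $\spn\{\phi_{\vect{k}}\}$ interpolating $\Phi(\cdot,t)$ at the uniform nodes $\{\vect{\chi}_{\vect{l}}\}$, so the error becomes a classical trigonometric interpolation error for an analytic periodic function. I would first reduce to one coordinate direction using the tensor-product structure $\phi_{\vect k}=\prod_j \phi_{k_j}$, treating the multidimensional case coordinate-by-coordinate with $\|\Phi^{(n/2)}(\cdot,t)\|_{L^{1}}$ interpreted as the appropriate multivariate norm.

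Next, I would split the 1D error into two pieces: a truncation piece $\Phi(\cdot,t)-P_{n}\Phi(\cdot,t)$, where $P_n$ projects onto frequencies $|\xi|\le n/2$, and an aliasing piece $P_{n}\Phi(\cdot,t)-\widetilde{\Phi}(\cdot,t)$. The truncation piece is controlled by the classical decay of Fourier coefficients obtained by integrating by parts $n/2$ times: $|\hat{\Phi}_{\xi}(t)|\le \|\Phi^{(n/2)}(\cdot,t)\|_{L^{1}}/(2\pi|\xi|)^{n/2}$. Summing over $|\xi|>n/2$ by an integral comparison yields a tail of order
\[
\frac{2\|\Phi^{(n/2)}(\cdot,t)\|_{L^{1}}}{(2\pi)^{n/2}(n/2)^{n/2}}\cdot\frac{n/2}{n/2-1}.
\]
The hypothesis $n\ge 6$ forces $n/2\ge 3$ and hence $(n/2)/(n/2-1)\le 3/2$, keeping this constant tame.

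For the aliasing piece, I would invoke the standard identity that the pseudospectral coefficient at frequency $\xi$ equals the aliased sum $\sum_{j\in\mathbb{Z}}\hat{\Phi}_{\xi+j(n+1)}(t)$, which is again controlled by the same high-frequency tail and therefore contributes at most a comparable amount. Adding the two contributions and discarding the gain $(2\pi)^{-n/2}$ coming from integration by parts leaves the much weaker but cleaner bound $\tfrac{2}{\pi}(n/2)^{-n/2}\max_t\|\Phi^{(n/2)}(\cdot,t)\|_{L^{1}}$; because the integration-by-parts factor is so favorable, there is substantial slack against the stated prefactor $2/\pi$, which absorbs lower-order constants and the dimension-wise sum from the tensor product.

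The main obstacle I expect is handling the aliasing cleanly, since the pseudospectral interpolant is not the Galerkin truncation and careless bookkeeping double-counts Fourier modes. The second delicate point is pinning the exact constant $2/\pi$: this requires using the aliasing identity together with the correct monotone integral comparison for $\sum_{\xi>n/2}\xi^{-n/2}$, which is precisely where the restriction $n\ge 6$ is used. Once these two items are in hand, taking $\max_{\vect{x},t}$ on both sides of the pointwise estimate and replacing $\|\Phi^{(n/2)}(\cdot,t)\|_{L^{1}}$ by its uniform-in-$t$ maximum delivers the claim; no Gr\"onwall-type time-growth argument is needed because the bound is inherently static at each $t$.
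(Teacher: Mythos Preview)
Your proposal follows essentially the same line as the paper's proof in \append{spectral}: treat the error at each fixed $t$ as a static trigonometric interpolation error, bound the high-frequency Fourier coefficients via $n/2$-fold integration by parts, and control the resulting tail sum by an integral comparison (this is where $n\ge 6$ enters, exactly as you note). The only difference is organizational: where you explicitly split into a truncation piece and an aliasing piece, the paper invokes Boyd's Theorem~20 as a black box, which already packages both contributions into the single bound $|f-S_{n+1}|\le 2\sum_{k>n/2}(|\alpha_k|+|\beta_k|)$, and then proceeds with the same coefficient-decay and integral-comparison steps you outline.
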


    \lem{pseudospectral-estimate-lemma} gives an estimate of the maximal error of approximating $\Phi(\vect{x},t)$ by $\widetilde \Phi(\vect{x},t)$ in space and time. We prove \lem{pseudospectral-estimate-lemma} in \append{spectral}.

Using this error estimate, we can determine a sufficient truncation number $n$ that ensures the approximated solution $\widetilde \Phi(\vect{x},t)$ is within the allowed error tolerance. For simplicity, we denote
	\begin{align}\label{eqn:gprime}
		g' := \max_t\|\Phi^{(n/2)}(\cdot,t)\|_{L^1}.
	\end{align}
	The parameter $g'$	describes the higher-order regularity of the wave function $\Phi(\x,t)$. Usually, when $\Phi(\x,t)$ is not strongly localized, it is common to assume $g'$ is bounded from above~\cite{kivlichan2017bounding,jin2021quantum}. In fact, Bourgain~\cite{bourgain1999growth} shows that the derivatives of the wave function $\Phi(\x,t)$ are bounded when the potential function $f(\x, t)$ is sufficiently smooth. However, to our best knowledge, the exact scaling of $g'$ in terms of $d$ and $n$ remains unknown. Therefore, in the present analysis, we parametrize the overall complexity by $g'$.

Using \eqn{pseudospectral-estimate-Schrodinger}, for any $t\in\R^+$ and $\vect{x}=\vect{\chi}_{\vect{l}}$ defined in \eqn{interpolation_nodes}, we have
	\begin{align}\label{eqn:entry-error}
		\Bigl|\Phi(\vect{\chi}_{\vect{l}},t)-\widetilde \Phi(\vect{\chi}_{\vect{l}},t)\Bigr| \le \frac{2}{\pi}\frac{g'}{(n/2)^{n/2}}.
	\end{align}
	Recall that the sets of all entries of $\Phi(t)$ and $\widetilde \Phi(t)$ as presented in \eqn{position-state-exact} and \eqn{position-state-approximate} are $\{\Phi(\vect{\chi}_{\vect{l}},t)\}$ and $\{\widetilde \Phi(\vect{\chi}_{\vect{l}},t)\}$ for all $\vect{\chi}_{\vect{l}}$, respectively. Each entry of $\Phi(t)-\widetilde \Phi(t)$ is bounded by the right-hand side of \eqn{entry-error}, giving
	\begin{align}
		\Bigl\|\Phi(t)-\widetilde \Phi(t)\Bigr\|_{\infty} \le \frac{2}{\pi}\frac{g'}{(n/2)^{n/2}}
	\end{align}
	for any $t\in\R^+$. With respect to the $\ell_2$ norm, using $\|\vect{v}\|_2\le\sqrt{(n+1)^d}\|\vect{v}\|_{\infty}$ for the $(n+1)^d$-dimensional vector $\vect{v}=\Phi(t)-\widetilde \Phi(t)$, we have
	\begin{align}\label{eqn:absolute-error-0}
		\Bigl\|\Phi(t)-\widetilde \Phi(t)\Bigr\|
		\le \frac{2}{\pi}\frac{g'}{(n/2)^{n/2}}(n+1)^{d/2}
	\end{align}
	for any $t\in\R^+$.

This bound implies that the error of the normalized states $|\Phi(t)\rangle$ and $|\widetilde \Phi(t)\rangle$ satisfies
    \begin{align}\label{eqn:relative-error-0}
		\Bigl\||\Phi(t)\rangle-|\widetilde \Phi(t)\rangle\Bigr\|
		\le \frac{\bigl\|\Phi(t)-\widetilde \Phi(t)\bigr\|}{\min\{\|\Phi(t)\|,\|\widetilde \Phi(t)\|\}}
        \le \frac{\delta}{\|\Phi(t)\|-\delta},
	\end{align}
    where $\delta := \max_t \bigl\|\Phi(t)-\widetilde \Phi(t)\bigr\|$. Recall that $\|\Phi(t)\| = (n+1)^{d/2}$ by \eqn{conservation-norm}. To satisfy the inequality
    \begin{align}
		 \frac{\delta}{\|\Phi(t)\|-\delta} = \frac{\delta}{(n+1)^{d/2}-\delta} \le \epsilon/2 \quad \Longleftrightarrow \quad \delta \le \frac{\epsilon/2}{1+\epsilon/2}(n+1)^{d/2},
	\end{align}
    based on \eqn{absolute-error-0}, we choose $n$ so that
	\begin{align}\label{eqn:error-0}
		\frac{2}{\pi}\frac{g'}{(n/2)^{n/2}}(n+1)^{d/2} \le \frac{\epsilon/2}{1+\epsilon/2}(n+1)^{d/2},
	\end{align}
	which is equivalent to
	\begin{align}
		(n/2)^{n/2} \ge \frac{4g'(1+\epsilon/2)}{\pi\epsilon}.
	\end{align}
    Since $\epsilon/2\le1$, and noticing the condition $n\ge6$ in \lem{pseudospectral-estimate-lemma}, it suffices to select
	\begin{align}\label{eqn:truncated-number}
		n = \max\biggl\{ 2\Bigl\lceil\frac{\log(\omega)}{\log(\log(\omega))}\Bigr\rceil, 6 \biggr\},
	\end{align}
	where
	\begin{align}
		\omega = \frac{4g'}{\pi\epsilon}.
	\end{align}
	
\subsection{Hamiltonian simulation with product formulas}

Early work of Wiesner~\cite{wiesner1996simulations} and Zalka~\cite{zalka1998efficient} used the so-called split-operator method to simulate real-space quantum dynamics on quantum computers. This method uses the truncated Fourier series to discretize the Schr{\"o}dinger equation in space and construct a discrete Hamiltonian system, with the Hamiltonian as a sum of the potential and kinetic operators. The diagonal potential operator is encoded in the position space, and the kinetic operator is diagonalized by quantum Fourier transform in the momentum space. The kinetic and potential operators are propagated independently, and these time evolutions are combined using product formulas.
Subsequently, Kassal et al.~\cite{kassal2008polynomial} applied this method to chemical dynamics. However, these previous works do not provide rigorous error analysis.
In particular, they all replace the continuous kinetic operator by the discretized one without accounting for the discretization error, as discussed in~\cite[Theorem 4]{kivlichan2017bounding}.

Having derived the Fourier spectral method with concrete real-space error analysis to obtain the discrete Hamiltonian system \eqn{semi_discrete_ODE_time}, we now describe the simulation using product formulas. Given a Hamiltonian $\vect{H}=\vect{A}+\vect{B}$, the standard $2k$th-order Suzuki product formula \cite{suzuki1991general} is defined recursively as
\begin{align}\label{eqn:Suzuki-time}
\mathscr{S}^T_2(t) &:= e^{-i\frac{t}{2}\vect{A}}\cdot e^{-it\vect{B}}\cdot e^{-i\frac{t}{2}\vect{A}}, \\
\mathscr{S}^T_{2k}(t) &:= \mathscr{S}_{2k-2}(u_kt)^2\mathscr{S}_{2k-2}((1-4u_k)t)\mathscr{S}_{2k-2}(u_kt)^2
\end{align}
where $u_k := 1/(1-4^{1/(2k-1)})$.
In our problem, the Hamiltonian ${\vect{\widetilde H}}_{n,d}(t)$ in \eqn{spectral_Hamiltonian_time} is the sum of
	\begin{align}\label{eqn:spectral_Hamiltonian_time_sum}
		\vect{A} &= \vect{F}^s_{n,d}{\vect{L}}_{n,d}(\vect{F}^s_{n,d})^{-1}, \\
		\vect{B} &= \vect{V}_{n,d}(t).
	\end{align}
Instead of directly simulating $\vect{A}$, we observe that
$e^{-it\vect{F}^s_{n,d}\vect{L}_{n,d}(\vect{F}^s_{n,d})^{-1}} = \vect{F}^s_{n,d} e^{-it\vect{L}_{n,d}}(\vect{F}^s_{n,d})^{-1}$, i.e., the evolution in the position space coincides with the Fourier transform of the evolution of $e^{-it\vect{L}_{n,d}}$ in the momentum space. In other words,
\begin{align}\label{eqn:equivalence-FLF}
i\frac{\d}{\d{t}}|\widetilde \Phi(t)\rangle = \vect{F}^s_{n,d}\vect{L}_{n,d}(\vect{F}^s_{n,d})^{-1}|\widetilde \Phi(t)\rangle\quad \Longleftrightarrow\quad
i\frac{\d}{\d{t}}|c(t)\rangle = \vect{L}_{n,d}|c(t)\rangle,
\end{align}
where $|c(t)\rangle = (\vect{F}^s_{n,d})^{-1}|\widetilde \Phi(t)\rangle$ by \eqn{frequency-position-transform}.
Therefore, the split-operator method with the $k$th-order Suzuki product formula for simulating \eqn{semi_discrete_ODE_time} can be presented recursively as
\begin{align}\label{eqn:split-operator-VLV}
\mathscr{S}^B_2(t) &= (\vect{F}^s_{n,d})^{-1}e^{-i\frac{t}{2}\vect{V}_{n,d}}\cdot\vect{F}^s_{n,d}e^{-it\vect{L}_{n,d}}\cdot(\vect{F}^s_{n,d})^{-1}e^{-i\frac{t}{2}\vect{V}_{n,d}}, \\
\label{eqn:split-operator-VLV-2k}
\mathscr{S}^B_{2k}(t) &= \mathscr{S}_{2k-2}(u_kt)^2\mathscr{S}_{2k-2}((1-4u_k)t)\mathscr{S}_{2k-2}(u_kt)^2.
\end{align}

We now give concrete upper bounds on the gate complexities of the $k$th-order split-operator method for simulating the discretized Schr{\"o}dinger equation.

  \begin{lemma}\label{lem:time-independent-Hamiltonian-simulation-SO}
		Consider an instance of time-independent Hamiltonian simulation as defined in \eqn{spectral_Hamiltonian_time}, with a time-independent potential $f(\vect{x})$ satisfying $\|f(\vect{x})\|_{L^{\infty}} \le \|f\|_{\max}$, for time $T>0$. Let $g' = \max_t\|\Phi^{(n/2)}(\cdot,t)\|_{L^1}$ as in \eqn{gprime}. There exists a quantum algorithm producing a normalized state that approximates $|\widetilde \Phi(T)\rangle$ with $\ell_2$ error at most $\epsilon/2$, with gate complexity
		\begin{align}\label{eqn:time-independent-Hamiltonian-simulation-SO}
            \widetilde O\Bigl( 5^{2k} d(d+\|f\|_{\max})^{1+1/2k}T^{1+1/2k}/\epsilon^{1/2k} \Bigr).
		\end{align}
	\end{lemma}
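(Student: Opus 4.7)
The plan is to decompose $\vect{\widetilde{H}}_{n,d}=\vect{A}+\vect{B}$ as in \eqn{spectral_Hamiltonian_time_sum} and invoke the standard $k$th-order Suzuki product-formula error bound~\cite{berry2007efficient}: $O(5^{2k}(\|H\|T)^{1+1/2k}/\epsilon^{1/2k})$ elementary exponentials of $\vect{A}$ and $\vect{B}$ suffice to simulate $\vect{\widetilde{H}}_{n,d}$ for time $T$ with error $\epsilon/2$. To apply this, I first need to bound $\|\vect{\widetilde{H}}_{n,d}\|$. Since $\vect{B}=\vect{V}_{n,d}$ is diagonal in the position basis with entries $f(\vect{\chi}_{\vect{l}})$, we have $\|\vect{B}\|\le\|f\|_{\max}$. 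The matrix $\vect{L}_{n,d}$ is diagonal in the Fourier basis with entries $-4\pi^{2}\sum_{j}(k_{j}-n/2)^{2}$, so $\|\vect{A}\|=\|\vect{L}_{n,d}\|\le\pi^{2}dn^{2}$. With $n$ chosen as in \eqn{truncated-number}, $n=O(\log(g'/\epsilon)/\log\log(g'/\epsilon))$ is polylogarithmic in $g'/\epsilon$, so $n^{2}$ is absorbed into the $\widetilde{O}$ notation, giving $\|\vect{\widetilde{H}}_{n,d}\|=\widetilde{O}(d+\|f\|_{\max})$ and hence a Trotter step count
\[
r=\widetilde{O}\Bigl(5^{2k}\bigl((d+\|f\|_{\max})T\bigr)^{1+1/2k}/\epsilon^{1/2k}\Bigr).
\]

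Next I would cost each elementary exponential. Using the conjugation identity $e^{-i\tau\vect{A}}=\vect{F}^{s}_{n,d}\,e^{-i\tau\vect{L}_{n,d}}\,(\vect{F}^{s}_{n,d})^{-1}$ noted in \eqn{equivalence-FLF}, the kinetic propagator requires two multi-dimensional QSFTs (cost $\widetilde{O}(d)$, since each of the $d$ single-register QSFTs costs $O(\log n\log\log n)$ by~\cite{childs2020high}) sandwiching the diagonal phase $e^{-i\tau\vect{L}_{n,d}}$, whose phase $\sum_{j}(k_{j}-n/2)^{2}$ is evaluated by $\widetilde{O}(d)$ quantum arithmetic operations on $d$ registers of $O(\log n)$ qubits. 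The potential propagator $e^{-i\tau\vect{V}_{n,d}}$ uses $O(1)$ queries to $U_{f}$ plus a polylog-cost controlled phase rotation (compute $f(\vect{\chi}_{\vect{l}})$ into an ancilla, apply the phase, then uncompute). Each elementary exponential therefore costs $\widetilde{O}(d)$ one- and two-qubit gates, and multiplying this by $r$ produces the stated bound \eqn{time-independent-Hamiltonian-simulation-SO}.

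The main obstacle I expect is not the top-level accounting, which is routine, but verifying that the $n^{2}$ inflation of $\|\vect{A}\|$ does not compromise the scaling: because $n$ is only polylogarithmic in $g'/\epsilon$ by \eqn{truncated-number}, the factor $(n^{2})^{1+1/2k}$ entering through $\|\vect{\widetilde{H}}_{n,d}\|^{1+1/2k}$ remains polylogarithmic and is safely absorbed into $\widetilde{O}$. One must also confirm that the arithmetic for the kinetic phase and the potential's controlled rotation each contribute only polylogarithmic overhead, and that the product-formula error analysis is applied in a way that keeps the constant $5^{2k}$ intact while tolerating the fact that $\vect{A}$ and $\vect{B}$ do not commute; both of these are standard.
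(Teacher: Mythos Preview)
Your proposal is correct and follows essentially the same approach as the paper: invoke the product-formula bound from~\cite{berry2007efficient} to count exponentials in terms of $\|H\|T$ and $\epsilon$, bound $\|\vect{A}\|=O(dn^{2})$ and $\|\vect{B}\|\le\|f\|_{\max}$, cost each exponential (via QSFTs and diagonal phases) at $O(d\poly(\log n))$ gates, and absorb the polylogarithmic $n$ from \eqn{truncated-number} into the $\widetilde O$. Your write-up is in fact more explicit than the paper's about why $n^{2}$ is harmless and how each elementary exponential is implemented.
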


	\begin{proof}
	
We apply standard error bounds for product formulas~\cite{berry2007efficient}.
For the complexity analysis we simply need to include the additional cost of performing the quantum Fourier transform $\vect{F}^s_{n,d}$ and its inverse $(\vect{F}^s_{n,d})^{-1}$. The number of QFTs equals the number of exponentials, which is upper bounded by~\cite[Theorem 1]{berry2007efficient} with $m=2$, which shows that
\begin{align}\label{eqn:exponential-number}
N_{\mathrm{QFT}} = N_{\mathrm{exp}} \le 4 \cdot 5^{2k} (2\|H\|T)^{1+1/2k}/(\epsilon/2)^{1/2k}.
\end{align}
exponentials suffice to ensure that we approximate $|\widetilde \Phi(T)\rangle$ with $\ell_2$ error at most $\epsilon/2$. Using $\|\vect{L}_{n,d}\|\le\frac{dn^2}{4}$ and $\|\vect{V}_{n,d}\|\le \|f\|_{\max}$, we have
\begin{align}\label{eqn:exponential-number-reduce}
N_{\mathrm{QFT}} = N_{\mathrm{exp}} \le 4\cdot5^{2k}2^{1+1/k}
\Big[T\Big(\frac{dn^2}{4}+\|f\|_{\max}\Big)\Big]^{1+1/2k}/\epsilon^{1/2k}.
\end{align}
Since $\vect{L}^s_{n,d}$, $\vect{V}^s_{n,d}$, $\vect{F}^s_{n,d}$, and $(\vect{F}^s_{n,d})^{-1}$ can all be performed with gate complexity $d\poly(\log n)$, the cost of implementing either an exponential or an inverse quantum Fourier transform is $O(d\poly(\log n))$; the claim follows by including this factor.
\end{proof}

Combining this with an upper bound on the discretization error gives our main result on product-formula simulation.

    \begin{theorem}[$k$th-order product formula simulation of real-space dynamics]\label{thm:time-independent-Schrodinger-simulation-SO}
		Consider an instance of the Schr{\"o}dinger equation in \eqn{Schrodinger-indep} with a time-independent potential $f(\vect{x})$ satisfying $\|f(\vect{x})\|_{L^{\infty}} \le \|f\|_{\max}$ and a given $T>0$. Let $g' = \max_t\|\Phi^{(n/2)}(\cdot,t)\|_{L^1}$ as in \eqn{gprime}. There exists a quantum algorithm producing a normalized state that approximates $\Phi(\vect{x},T)$ at the nodes $\{\vect{\chi}_{\vect{l}}\}$ defined as \eqn{interpolation_nodes}, with $\ell_2$ error at most $\epsilon$, with asymptotic gate complexity
		\begin{align}\label{eqn:time-independent-Schrodinger-simulation-SO}
			\widetilde O\Bigl( 5^{2k} d(d+\|f\|_{\max})^{1+1/2k}T^{1+1/2k}/\epsilon^{1/2k} \Bigr).
		\end{align}
	\end{theorem}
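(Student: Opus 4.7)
The plan is to combine the discretization-error analysis of the Fourier spectral method (already worked out in \lem{pseudospectral-estimate-lemma} and the surrounding discussion) with the Hamiltonian-simulation complexity established in \lem{time-independent-Hamiltonian-simulation-SO}. Concretely, I would split the allowed error budget as $\epsilon = \epsilon/2 + \epsilon/2$: the first half bounds the distance between $|\Phi(T)\rangle$ and $|\widetilde\Phi(T)\rangle$ coming from truncating to $n{+}1$ Fourier modes per coordinate, and the second half bounds the distance between $|\widetilde\Phi(T)\rangle$ and the state actually produced by the product-formula simulation of $\vect{\widetilde H}_{n,d}$. A triangle inequality then yields the $\ell_2$-error $\le\epsilon$ claimed in the theorem.

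For the first half, I would choose the truncation parameter $n$ exactly as in \eqn{truncated-number}, i.e.\ $n = \max\{2\lceil\log(\omega)/\log\log(\omega)\rceil,6\}$ with $\omega = 4g'/(\pi\epsilon)$, because the derivation leading to that expression shows precisely that $\||\Phi(t)\rangle - |\widetilde\Phi(t)\rangle\|\le\epsilon/2$ for all $t\in[0,T]$. The key observation is that this $n$ is only poly-logarithmic in $g'/\epsilon$, so any polynomial factor of $n$ that enters the cost is absorbed into the $\widetilde O(\cdot)$ notation (specifically into the $\poly(\log(dTg'/\epsilon))$ part hidden inside $\widetilde O$).

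For the second half, I apply \lem{time-independent-Hamiltonian-simulation-SO} directly: simulating $\vect{\widetilde H}_{n,d}$ for time $T$ with the $k$th-order split-operator product formula and error at most $\epsilon/2$ costs
\[
\widetilde O\bigl( 5^{2k}\, d(d+\|f\|_{\max})^{1+1/2k}\, T^{1+1/2k}/\epsilon^{1/2k}\bigr)
\]
gates. The $\|\vect{L}_{n,d}\|\le dn^2/4$ bound used in that lemma still holds for our chosen $n$, and the $n^2$ contribution becomes a poly-logarithmic factor in $g'/\epsilon$ that is again swallowed by $\widetilde O$. Combining the two error contributions gives the stated overall $\ell_2$ error $\epsilon$ and the stated gate count.

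The only subtle point—and the main thing that has to be checked carefully rather than invoked—is that the product-formula error bound from \cite{berry2007efficient} governs $\|\,U_{\text{ideal}} - U_{\text{approx}}\,\|$ on the full Hilbert space, whereas we need the error on the normalized output state $|\widetilde\Phi(T)\rangle$. This is immediate because the operator norm bound implies the same bound on any normalized input, but I would state it explicitly to be safe. Everything else is bookkeeping: verify that $n$ from \eqn{truncated-number} is indeed $\widetilde O(\log(g'/\epsilon))$ so that factors of $\poly(n)$ and the $O(d\,\poly(\log n))$ cost of each QFT and each diagonal exponential (in position or momentum basis) contribute only to the hidden poly-logarithmic factors, leaving \eqn{time-independent-Schrodinger-simulation-SO} as the final asymptotic bound.
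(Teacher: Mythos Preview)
Your proposal is correct and follows essentially the same approach as the paper: split the error budget in half, use \eqn{truncated-number} to control the spectral-discretization error $\||\Phi(t)\rangle-|\widetilde\Phi(t)\rangle\|\le\epsilon/2$, invoke \lem{time-independent-Hamiltonian-simulation-SO} for the product-formula error $\||\psi(T)\rangle-|\widetilde\Phi(T)\rangle\|\le\epsilon/2$, and conclude by the triangle inequality. Your added remarks on absorbing $\poly(n)$ into $\widetilde O$ and on passing from operator-norm to state-norm error are accurate and slightly more explicit than the paper's version.
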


	\begin{proof}
		First, by \eqn{error-0}, it suffices to take $n$ as in \eqn{truncated-number} to ensure, for any $t\in\R^+$,
		\begin{align}\label{eqn:error-1}
			\Bigl\||\Phi(t)\rangle-|\widetilde \Phi(t)\rangle\Bigr\| \le \epsilon/2.
		\end{align}
The resulting state $|\widetilde \Phi(T)\rangle$ is the solution of \eqn{spectral_Hamiltonian_time}.
        The $2k$th-order product formula \eqn{split-operator-VLV-2k} takes the gate complexity \eqn{time-independent-Hamiltonian-simulation-SO} of reaching
		\begin{align}\label{eqn:error-2}
			\Bigl\||\psi(T)\rangle-|\widetilde \Phi(T)\rangle\Bigr\| \le \epsilon/2.
		\end{align}

		Combining \eqn{error-1} with \eqn{error-2}, and taking $t=T$, we have
		\begin{align}\label{eqn:error-3}
			\Bigl\||\psi(T)\rangle-|\Phi(T)\rangle\Bigr\| \le \epsilon.
		\end{align}
		The gate complexity of producing $|\psi(T)\rangle$ is given by \eqn{time-independent-Hamiltonian-simulation-SO}, and the claimed result follows.
	\end{proof}

Comparing with the gate complexity in \cite{kivlichan2017bounding}, which is $O((d^4T^2/\epsilon^2+\|f\|_{\max})T)$, the above analysis polynomially improves the dependence on $1/\epsilon$ and polynomially reduces the dependence on $T$ and $d$. However, the factor of $5^{2k}$ in the gate complexity suggests that it may not be practical to apply the method for large values of $k$.

\subsection{Hamiltonian simulation by truncated Taylor series}

We now consider using the truncated Taylor series algorithm \cite{berry2015simulating} to simulate \eqn{semi_discrete_ODE_time} within error $\epsilon/2$. We improve upon the result of Ref.~\cite{kivlichan2017bounding} (which also uses the truncated Taylor series method) by using an improved representation with less spatial discretization error.

First we describe the complexity of simulating the discretized Hamiltonian produced by the Fourier spectral method.

	\begin{lemma}[Truncated Taylor series for discretized simulation]\label{lem:time-independent-Hamiltonian-simulation}
		Consider an instance of time-independent Hamiltonian simulation as defined in \eqn{spectral_Hamiltonian_time}, with a time-independent potential $f(\vect{x})$ satisfying $\|f(\vect{x})\|_{L^{\infty}} \le \|f\|_{\max}$ and a given $T>0$. Let $g' = \max_t\|\Phi^{(n/2)}(\cdot,t)\|_{L^1}$ as in \eqn{gprime}. There exists a quantum algorithm producing a normalized state that approximates $|\widetilde \Phi(T)\rangle$ with $\ell_2$ error at most $\epsilon/2$, with asymptotic gate complexity
		\begin{align}\label{eqn:time-independent-Hamiltonian-simulation}
            d(d+\|f\|_{\max})T\poly\Bigl(\log(dTg'/\epsilon)\Bigr).
		\end{align}
	\end{lemma}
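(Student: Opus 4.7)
The plan is to simulate $e^{-i\vect{\widetilde H}_{n,d}T}$ via the truncated Taylor series method of Berry--Childs--Cleve--Kothari--Somma applied to a block-encoding of the discretized Hamiltonian from \eqn{spectral_Hamiltonian_time}. By the truncation bound \eqn{truncated-number}, it suffices to take $n = \Theta\bigl(\log(g'/\epsilon)/\log\log(g'/\epsilon)\bigr)$, so any quantity of the form $\poly(n)$ contributes at most $\poly(\log(g'/\epsilon))$ to the final gate count.

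First I would split $\vect{\widetilde H}_{n,d} = \vect{A} + \vect{B}$ with $\vect{A} = \vect{F}^s_{n,d}\vect{L}_{n,d}(\vect{F}^s_{n,d})^{-1}$ and $\vect{B} = \vect{V}_{n,d}$, and build block-encodings of each piece separately. The matrix $\vect{L}_{n,d}$ is diagonal in the momentum register with entries bounded by $\pi^2 d n^2$; hence I can block-encode $\vect{A}$ by computing the diagonal entry into an ancilla via classical arithmetic, applying a controlled rotation, uncomputing, and conjugating the whole circuit by $(\vect{F}^s_{n,d})^{\pm 1}$. Each call costs $O(d\,\poly(\log n))$ gates, dominated by the multi-dimensional QSFT. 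The potential block $\vect{B}$ is diagonal in the computational basis, so I can block-encode it with normalization $\|f\|_{\max}$ using one query to $U_f$ followed by $\poly(\log(1/\epsilon))$ arithmetic gates. Combining the two block-encodings via the standard LCU prepare--select construction yields a block-encoding of $\vect{\widetilde H}_{n,d}$ with normalization $\alpha = O(dn^2 + \|f\|_{\max})$.

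Next I would invoke the truncated Taylor series algorithm, which simulates $e^{-i\vect{\widetilde H}_{n,d} T}$ to error $\epsilon/2$ using $O\bigl(\alpha T\log(\alpha T/\epsilon)/\log\log(\alpha T/\epsilon)\bigr)$ queries to the combined block-encoding. Substituting $n^2 = \poly(\log(g'/\epsilon))$ gives $\alpha T = \bigl(d + \|f\|_{\max}\bigr)T\cdot\poly(\log(g'/\epsilon))$, so the query count is $(d+\|f\|_{\max})T\cdot\poly(\log(dTg'/\epsilon))$; multiplying by the $O(d\,\poly(\log n))$ cost per query yields the desired gate complexity $d(d+\|f\|_{\max})T\poly(\log(dTg'/\epsilon))$. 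The discretization bound \eqn{error-0} then guarantees that the normalized state produced approximates $|\widetilde\Phi(T)\rangle$ within $\ell_2$ error $\epsilon/2$, as required.

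The main obstacle I anticipate is the bookkeeping of the LCU normalization versus the per-query gate cost: the $d$ factor in the per-query cost comes from applying the QSFT across $d$ spatial dimensions only inside the kinetic block, so care is needed to ensure that selecting $\vect{B}$ does not inherit this $d$ factor spuriously, and that the logarithmic factors in $n$, $\alpha$, $T$, and $1/\epsilon$ can all be absorbed into the single $\poly(\log(dTg'/\epsilon))$ factor. With these accounting details handled, the rest of the argument is a standard combination of existing error bounds for truncated Taylor series simulation with the Fourier spectral discretization error of \lem{pseudospectral-estimate-lemma}.
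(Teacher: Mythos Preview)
Your proposal is correct and follows essentially the same approach as the paper: split $\vect{\widetilde H}_{n,d}=\vect{A}+\vect{B}$, bound $\|\vect{A}\|=O(dn^2)$ and $\|\vect{B}\|\le\|f\|_{\max}$, note that each oracle/block-encoding call costs $O(d\,\poly(\log n))$ gates (dominated by the QSFT), apply the truncated Taylor series complexity bound, and substitute $n=\Theta(\log(g'/\epsilon)/\log\log(g'/\epsilon))$ from \eqn{truncated-number}. One small correction: your final sentence invokes the discretization bound \eqn{error-0} to conclude the $\epsilon/2$ approximation of $|\widetilde\Phi(T)\rangle$, but that bound controls $\||\Phi\rangle-|\widetilde\Phi\rangle\|$, not the Hamiltonian-simulation error; the $\epsilon/2$ accuracy here comes directly from the Taylor-series simulation guarantee, and \eqn{truncated-number} is used only to fix $n$ in the complexity expression.
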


	\begin{proof}
Let $C_A$ and $C_B$ denote the cost of querying the sparse Hamiltonian oracle for Hermitian matrices $A$ and $B$, respectively, and let $\alpha_A$ and $\alpha_B$ upper bound $\|A\|$ and $\|B\|$, respectively. Then the gate complexity of performing the simulation $e^{-i(A+B)T}$ is \cite{berry2015simulating}
		\begin{align}
(C_A+C_B)(\alpha_A+\alpha_B)T\frac{\log((\alpha_A+\alpha_B)T/\epsilon)}{\log(\log((\alpha_A+\alpha_B)T/\epsilon))}.
		\end{align}

		To simulate \eqn{spectral_Hamiltonian_time}, we take $A=\vect{F}^s_{n,d}\vect{L}_{n,d}(\vect{F}^s_{n,d})^{-1}$ and $B={\vect{V}}_{n,d}$.
		Using $\|\vect{L}_{n,d}\|\le\frac{dn^2}{4}$, $\|\vect{V}_{n,d}\|\le \|f\|_{\max}$, and the fact that the gate complexity of performing each of $\vect{L}^s_{n,d}$, $\vect{V}^s_{n,d}$, $\vect{F}^s_{n,d}$, or $(\vect{F}^s_{n,d})^{-1}$ is $d\poly(\log n)$, we obtain the gate complexity
		\begin{align}
			d\poly(\log n)(dn^2+\|f\|_{\max})T\frac{\log((dn^2+\|f\|_{\max})T/\epsilon)}{\log(\log((dn^2+\|f\|_{\max})T/\epsilon))}.
		\end{align}
		Using the value of $n$ from \eqn{truncated-number}, we see that the complexity is
		\begin{align}
			d(d+\|f\|_{\max})T\poly\Bigl(\log(dTg'/\epsilon)\Bigr)
		\end{align}
		as claimed.
	\end{proof}

	\begin{theorem}[Truncated Taylor series for real-space simulation]\label{thm:time-independent-Schrodinger-simulation}
		Consider an instance of the Schr{\"o}dinger equation \eqn{Schrodinger-indep}, with a time-independent potential $f(\vect{x})$ satisfying $\|f(\vect{x})\|_{L^{\infty}} \le \|f\|_{\max}$ and a given $T>0$. Let $g' = \max_t\|\Phi^{(n/2)}(\cdot,t)\|_{L^1}$ as in \eqn{gprime}. There exists a quantum algorithm producing a normalized state that approximates $\Phi(\vect{x},T)$ at the nodes $\{\vect{\chi}_{\vect{l}}\}$ defined as \eqn{interpolation_nodes}, with $\ell_2$ error at most $\epsilon$, with the gate complexity
		\begin{align}\label{eqn:time-independent-Schrodinger-simulation}
			d(d+\|f\|_{\max})T\poly\Bigl(\log(dTg'/\epsilon)\Bigr).
		\end{align}
	\end{theorem}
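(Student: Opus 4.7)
The plan is to combine the two ingredients already established in the excerpt: the discretization error bound for the Fourier spectral method (giving control over $\||\Phi(t)\rangle - |\widetilde\Phi(t)\rangle\|$) and the gate complexity of the truncated Taylor series algorithm for simulating the discretized Hamiltonian $\widetilde{\vect{H}}_{n,d}$ from \eqn{spectral_Hamiltonian_time} (given by \lem{time-independent-Hamiltonian-simulation}). The argument is exactly parallel to the proof of \thm{time-independent-Schrodinger-simulation-SO}, only with the product-formula simulator replaced by the truncated Taylor series simulator.

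First I would fix the truncation number $n$ of the Fourier spectral method according to \eqn{truncated-number}, which by the analysis culminating in \eqn{error-0} guarantees that the exact normalized state $|\Phi(t)\rangle$ and the spectral approximation $|\widetilde\Phi(t)\rangle$ satisfy
\begin{align}
\bigl\||\Phi(T)\rangle - |\widetilde\Phi(T)\rangle\bigr\| \le \epsilon/2.
\end{align}
With this value of $n$, we have $\log n = O(\log\log(g'/\epsilon))$, which is what will eventually let $g'$ and $1/\epsilon$ appear only inside the $\poly(\log(\cdot))$ factor.

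Next I would invoke \lem{time-independent-Hamiltonian-simulation} to run the truncated Taylor series algorithm on the discretized Hamiltonian $\widetilde{\vect{H}}_{n,d}$ for time $T$ with target accuracy $\epsilon/2$. This produces a state $|\psi(T)\rangle$ with
\begin{align}
\bigl\||\psi(T)\rangle - |\widetilde\Phi(T)\rangle\bigr\| \le \epsilon/2,
\end{align}
using $d(d+\|f\|_{\max})T\cdot\poly(\log(dTg'/\epsilon))$ gates. Here I would be careful to note that the bound $\|\vect{L}_{n,d}\|\le dn^2/4$ used inside \lem{time-independent-Hamiltonian-simulation} produces an $n^2$ dependence, but since $n=O(\log(g'/\epsilon)/\log\log(g'/\epsilon))$ from \eqn{truncated-number}, this $n^2$ gets absorbed into the $\poly(\log(dTg'/\epsilon))$ factor, leaving an overall $d(d+\|f\|_{\max})T$ prefactor.

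Finally, a single application of the triangle inequality gives
\begin{align}
\bigl\||\psi(T)\rangle - |\Phi(T)\rangle\bigr\| \le \epsilon,
\end{align}
which is the claim, and the total gate complexity is dominated by the simulation step, namely $d(d+\|f\|_{\max})T\cdot\poly(\log(dTg'/\epsilon))$. There is no real obstacle here: all of the technical work lives in \lem{pseudospectral-estimate-lemma} (spectral convergence of the Fourier pseudospectral discretization) and in \lem{time-independent-Hamiltonian-simulation} (the truncated Taylor series cost of the discretized problem together with the QFT-based implementation of $\vect{F}^s_{n,d}\vect{L}_{n,d}(\vect{F}^s_{n,d})^{-1}$). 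The only bookkeeping item worth double-checking is that the choice of $n$ made to kill the spectral error is simultaneously compatible with the Taylor-series cost bound, i.e., that substituting \eqn{truncated-number} into the expression inside \lem{time-independent-Hamiltonian-simulation} indeed collapses to $\poly(\log(dTg'/\epsilon))$ rather than introducing any hidden polynomial factor in $g'$ or $1/\epsilon$.
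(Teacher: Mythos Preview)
Your proposal is correct and matches the paper's approach exactly: the paper's proof simply says ``the result follows immediately from the same logic as in the proof of \thm{time-independent-Schrodinger-simulation-SO},'' which is precisely the triangle-inequality combination of the spectral discretization bound \eqn{error-0} with the Taylor-series simulation cost from \lem{time-independent-Hamiltonian-simulation} that you describe.
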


	\begin{proof}
		The result follows immediately from the same logic as in the proof of \thm{time-independent-Schrodinger-simulation-SO}.
	\end{proof}

Whereas the gate complexity in \cite{kivlichan2017bounding} is
	$O((d^4T^2/\epsilon^2+\|f\|_{\max})T)$, our approach achieves complexity $\widetilde O(d(d+\|f\|_{\max})T\log(1/\epsilon))$ in terms of $\ell_2$ error, exponentially improving the dependence on $1/\epsilon$, reducing the cubic dependence on $T$ to linear, and reducing the quartic dependence on $d$ to quadratic.

	\subsection{Interacting multi-particle systems}\label{sec:multi-particles}

	Now we consider simulating a multi-particle Schr{\"o}dinger equation
	\begin{align}\label{eqn:many-Schrodinger-dep}
		i\frac{\partial}{\partial t}\Phi(\vect{x},t)=\Big[-\frac{1}{2}\nabla^{2}+f(\vect{x},t)\Big]\Phi(\vect{x},t)
	\end{align}
	with a fixed number of particles $\eta$ in $d$ dimensions, interacting through a potential function $f(x)$.
	Here $\vect{x}\in\R^{\eta d}$ represents the positions of the particles, where entries $x_{(j-1)d+1}, \ldots, x_{jd}$ indicate the position of particle $j\in[\eta]$, $t\in\R$ represents time, $\nabla=\left(\frac{\partial}{\partial x_{i}}\right)\Big|_{i=1}^{\eta d}$, and $f\colon\R^{\eta d}\times\R\to\R$ is the potential function of the Schr{\"o}dinger equation.
	As above, we consider the case where $f$ is independent of time in this section.
	Also, we assume for simplicity that all particles have the same mass; this is easily generalized to the case of particles with different masses, as discussed in \sec{qchem}.

To make simulation tractable, we consider an $\eta d$-dimensional hypercubic domain $\Omega = [0,1]^{\eta d}$ and assume the wave function can be treated as periodic on this domain \cite{kivlichan2017bounding,martin2004electronic,su2021fault,babbush2017low}. The periodic boundary condition is natural for crystalline solids. As for a non-periodic system subject to a long-range potential such as the Coulomb potential, we can embed the system into a sufficiently large periodic box $\Omega$ such that the particles remain far from the boundary. Then, we can implement quantum simulations within the periodic box $\Omega$ because the tail of the wave function outside of $\Omega$ is negligible. In this case, one can imagine the full space $\R^{\eta d}$ is covered by repeated copies of the potential function restricted to $\Omega$, but the periodic images of the potential outside of the box $\Omega$ do not significantly interact with the wave functions supported on $\Omega$ \cite{martin2004electronic,su2021fault,babbush2017low}. In practice, it is not necessary to simulate the periodic images of the potential function outside of $\Omega$.

We also want the potential function $f(\vect{x})$ to be bounded, i.e., $\|f(\vect{x})\|_{L^{\infty}} \le \|f\|_{\max}$. However, typical potentials arising in physics include singularities, such as the divergence of the Coulomb potential for two particles at the same location. We can handle this by modifying the potential in a way that does not significantly affect the solution at relevant length scales.  For example, a $d$-dimensional generalization of the Coulomb potential can be modified as \cite{kivlichan2017bounding}
	\begin{align}\label{eqn:modified-Coulomb}
		f_{\text{Coulomb}}(\vect x) = \sum_{1\le i<j\le\eta} \frac{q_iq_j}{\sqrt{\sum_{k=1}^d\bigl(x_{(i-1)d+k}-x_{(j-1)d+k}\bigr)^2+\Delta^{2}}},
	\end{align}
	where $q_i$ is the charge of the $i$th particle and $\Delta>0$ serves to keep the potential bounded \cite{kivlichan2017bounding}. Letting $q:=\max_i|q_i|$, we have
	\begin{align}\label{eqn:modified-Coulomb-bound}
		\|f(\vect{x})\|_{L^{\infty}} \le \frac{\eta(\eta-1)q^2}{2\Delta} = \|f\|_{\max}.
	\end{align}

    The parameter $\Delta$ captures how closely the modified Coulomb potential \eqn{modified-Coulomb} approximates the unbounded potential. To accurately reproduce the behavior of the unbounded Coulomb potential, we would like to simulate the model for small $\Delta>0$, and we expect the complexity of the simulation to grow with $1/\Delta$ as a consequence of the upper bound \eqn{modified-Coulomb-bound}. In practice, the modified Coulomb potential \eqn{modified-Coulomb} should give a good approximation of the original Coulomb potential provided particles remain separated by distances large compared with $\Delta$.

As considered in~\cite{berry2015simulating,kivlichan2017bounding,low2018hamiltonian}, we analyze the gate complexity of implementing the sparse Hamiltonian oracle, where we count a query to the modified Coulomb potential in the implementation as one gate.

\thm{time-independent-Schrodinger-simulation-SO} and \thm{time-independent-Schrodinger-simulation} directly imply quantum algorithms for simulating \eqn{many-Schrodinger-dep} using product formulas and the truncated Taylor series method, respectively.

	\begin{theorem}[$k$th-order product formula simulation of interacting particles]\label{thm:time-independent-many-Schrodinger-simulation-SO}
		Consider an instance of the multi-particle Schr{\"o}dinger equation \eqn{many-Schrodinger-dep} with a time-independent potential $f(\vect{x})$ satisfying $\|f(\vect{x})\|_{L^{\infty}} \le \|f\|_{\max}$, and a given $T>0$. Let $g' = \max_t\|\Phi^{(n/2)}(\cdot,t)\|_{L^1}$ as in \eqn{gprime}. There exists a quantum algorithm producing a normalized state that approximates $\Phi(\vect{x},T)$ at the nodes $\{\vect{\chi}_{\vect{l}}\}$ defined in \eqn{interpolation_nodes}, with $\ell_2$ error at most $\epsilon$, with asymptotic gate complexity
		\begin{align}\label{eqn:time-independent-many-Schrodinger-simulation-SO}
            \widetilde O\Bigl( 5^{2k} \eta d(\eta d+\|f\|_{\max})^{1+1/2k}T^{1+1/2k}/\epsilon^{1/2k} \Bigr).
		\end{align}
	\end{theorem}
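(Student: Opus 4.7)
The plan is to observe that the multi-particle Schrödinger equation \eqn{many-Schrodinger-dep} for $\eta$ particles in $d$ dimensions is mathematically identical to a single-particle Schrödinger equation on the $\eta d$-dimensional hypercubic domain $\Omega = [0,1]^{\eta d}$ with the same periodic boundary condition. Consequently, the entire Fourier spectral framework developed earlier for the single-particle case applies verbatim after the substitution $d \mapsto \eta d$. First, I would invoke \lem{pseudospectral-estimate-lemma} in dimension $\eta d$ and choose the truncation parameter $n$ as in \eqn{truncated-number} so that the spectral discretization error between $|\Phi(T)\rangle$ and $|\widetilde\Phi(T)\rangle$ is at most $\epsilon/2$.

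Next, I would carry out the $k$th-order split-operator product formula \eqn{split-operator-VLV}--\eqn{split-operator-VLV-2k} applied to the discretized Hamiltonian $\vect{\widetilde H}_{n,\eta d}$, writing it as $\vect{A}+\vect{B}$ with $\vect{A} = \vect{F}^s_{n,\eta d}\vect{L}_{n,\eta d}(\vect{F}^s_{n,\eta d})^{-1}$ and $\vect{B} = \vect{V}_{n,\eta d}$. The relevant norm bounds transfer directly: $\|\vect{L}_{n,\eta d}\| \le \eta d n^2/4$ (since $\vect{L}_{n,\eta d}$ is a sum of $\eta d$ commuting one-dimensional Laplacians, each with spectral norm at most $n^2/4$) and $\|\vect{V}_{n,\eta d}\| \le \|f\|_{\max}$. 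Plugging these into the product-formula error bound \eqn{exponential-number} from \cite{berry2007efficient} gives, in analogy with \eqn{exponential-number-reduce}, at most $O\bigl( 5^{2k}\bigl[T(\eta d n^2/4 + \|f\|_{\max})\bigr]^{1+1/2k}/\epsilon^{1/2k}\bigr)$ exponentials and QSFTs, each of which now costs $\eta d \poly(\log n)$ gates.

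Finally, I would combine the spectral error bound with the product-formula error bound by a triangle inequality exactly as in the proof of \thm{time-independent-Schrodinger-simulation-SO}, producing a normalized state $|\psi(T)\rangle$ with $\bigl\||\psi(T)\rangle - |\Phi(T)\rangle\bigr\| \le \epsilon$. Substituting the value of $n$ from \eqn{truncated-number}, which is poly-logarithmic in $\eta dTg'/\epsilon$, collapses the $n^2$ factor into a $\poly(\log(\eta dTg'/\epsilon))$ factor absorbed into the $\widetilde O$ notation, yielding the stated bound \eqn{time-independent-many-Schrodinger-simulation-SO}.

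There is no real obstacle here since the argument is essentially a lift of the single-particle proof to dimension $\eta d$; the only step that requires minor care is confirming that the Fourier spectral error analysis of \append{spectral} (and the associated definition of $g' = \max_t\|\Phi^{(n/2)}(\cdot,t)\|_{L^1}$ via \eqn{gprime}) remains valid in the $\eta d$-dimensional tensor-product setting, which follows because $\phi_{\vect k}$ and the interpolation nodes $\vect\chi_{\vect l}$ factor as tensor products and the per-coordinate analysis combines via the triangle inequality over the $\eta d$ coordinates.
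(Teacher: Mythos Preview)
Your proposal is correct and follows exactly the paper's approach: the paper's proof consists of the single sentence ``It suffices to replace $d$ by $\eta d$ in the proof of \thm{time-independent-Schrodinger-simulation-SO},'' and your argument is precisely that substitution spelled out in detail.
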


	\begin{proof}
		It suffices to replace $d$ by $\eta d$ in the proof of \thm{time-independent-Schrodinger-simulation-SO}.
	\end{proof}

	\begin{theorem}[Truncated Taylor series simulation of interacting particles]\label{thm:time-independent-many-Schrodinger-simulation}
		Consider an instance of the multi-particle Schr{\"o}dinger equation \eqn{many-Schrodinger-dep}, with a time-independent potential $f(\vect{x})$ satisfying $\|f(\vect{x})\|_{L^{\infty}} \le \|f\|_{\max}$, and a given $T>0$. Let $g' = \max_t\|\Phi^{(n/2)}(\cdot,t)\|_{L^1}$ as in \eqn{gprime}. There exists a quantum algorithm producing a normalized state that approximates $\Phi(\vect{x},T)$ at the nodes $\{\vect{\chi}_{\vect{l}}\}$ defined in \eqn{interpolation_nodes}, with $\ell_2$ error at most $\epsilon$, with asymptotic gate complexity
		\begin{align}\label{eqn:time-independent-many-Schrodinger-simulation}
            \eta d(\eta d+\|f\|_{\max})T\poly\bigl(\log(\eta dTg'/\epsilon)\bigr).
		\end{align}
	\end{theorem}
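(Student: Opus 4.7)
The plan is to mirror the proof of \thm{time-independent-Schrodinger-simulation} essentially verbatim, exploiting the fact that an $\eta$-particle system in $d$ spatial dimensions is mathematically a single wave function on the $\eta d$-dimensional configuration space $\Omega = [0,1]^{\eta d}$. Thus I would recycle the entire Fourier-spectral setup of \sec{fourier-spectral} with the replacement $d\mapsto \eta d$: the tensor-product basis \eqn{basis_tensor} is taken over all $\eta d$ coordinates, and the multi-dimensional QSFT $\vect{F}^s_{n,\eta d}$ is the tensor product of $\eta d$ single-dimension QSFTs, still implementable with gate cost $\eta d\cdot\poly(\log n)$.

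First I would invoke \lem{pseudospectral-estimate-lemma} (whose proof in the appendix is dimension-agnostic once the periodic smoothness assumption holds on $\Omega$) to bound the $L^\infty$ discretization error between $\Phi(\vect{x},t)$ and $\widetilde\Phi(\vect{x},t)$ by $\tfrac{2}{\pi}g'/(n/2)^{n/2}$, with $g'$ defined via \eqn{gprime} but now as an $L^1$ norm over $\eta d$ variables. Converting from $L^\infty$ to $\ell_2$ over the $(n+1)^{\eta d}$ grid points picks up the factor $(n+1)^{\eta d/2}$ as in \eqn{absolute-error-0}, which cancels against the normalization $\|\Phi(t)\|=(n+1)^{\eta d/2}$ from \eqn{conservation-norm}. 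Hence the same choice \eqn{truncated-number} of truncation number
\begin{align}
n = \max\Bigl\{ 2\bigl\lceil \log(\omega)/\log\log(\omega)\bigr\rceil,\, 6\Bigr\}, \qquad \omega = 4g'/(\pi\epsilon),
\end{align}
still guarantees $\bigl\||\Phi(T)\rangle - |\widetilde\Phi(T)\rangle\bigr\|\le \epsilon/2$.

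Next I would apply the multi-particle analog of \lem{time-independent-Hamiltonian-simulation}, i.e.\ the truncated Taylor series algorithm of \cite{berry2015simulating} to the discretized Hamiltonian $\vect{\widetilde H}_{n,\eta d}$ from \eqn{spectral_Hamiltonian_time}. The kinetic block $\vect{F}^s_{n,\eta d}\vect{L}_{n,\eta d}(\vect{F}^s_{n,\eta d})^{-1}$ has norm at most $\eta d\,n^2/4$ and can be block-encoded with cost $\eta d\cdot\poly(\log n)$ via a pair of QSFTs sandwiching the diagonal $\vect{L}_{n,\eta d}$; the potential block $\vect{V}_{n,\eta d}$ has norm at most $\|f\|_{\max}$ and is implemented with a single query to the oracle $U_f$ from \eqn{evaluation-indep}. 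The truncated Taylor series then simulates $e^{-i\vect{\widetilde H}_{n,\eta d} T}$ within error $\epsilon/2$ using
\begin{align}
\eta d\cdot\poly(\log n)\cdot(\eta d\,n^2 + \|f\|_{\max})\,T\cdot \frac{\log((\eta d\,n^2+\|f\|_{\max})T/\epsilon)}{\log\log((\eta d\,n^2+\|f\|_{\max})T/\epsilon)}
\end{align}
gates. Substituting the value of $n$ from above absorbs the $n^2$ factor into the poly-logarithmic overhead and yields \eqn{time-independent-many-Schrodinger-simulation}. Finally the triangle inequality $\||\psi(T)\rangle-|\Phi(T)\rangle\|\le \epsilon/2+\epsilon/2 = \epsilon$ closes the argument.

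There is essentially no new technical obstacle here: the proof is a purely dimensional lift of \thm{time-independent-Schrodinger-simulation}, exactly as \thm{time-independent-many-Schrodinger-simulation-SO} was lifted from its single-particle counterpart. The only point requiring mild care is that the parameter $g'$ defined via the mixed-derivative $L^1$ norm on an $\eta d$-dimensional domain is treated as a given input parameter rather than something one bounds explicitly, so one has to phrase the statement in terms of $g'$ (as the theorem already does). All other steps reduce to routine substitution of $d\to\eta d$ in previously established bounds.
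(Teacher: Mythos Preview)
Your proposal is correct and matches the paper's approach exactly: the paper's proof is the single line ``it suffices to replace $d$ by $\eta d$ in the proof of \thm{time-independent-Schrodinger-simulation},'' which is precisely the dimensional lift you spell out in detail.
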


	\begin{proof}
		As in the previous result, it suffices to replace $d$ by $\eta d$ in the proof of \thm{time-independent-Schrodinger-simulation}.
	\end{proof}
	

	\section{Simulating time-dependent Schr{\"o}dinger equations}\label{sec:time-dep}

So far, we have focused on quantum algorithms for simulating systems with time-independent potentials. However, we saw in \sec{fourier-spectral} that the Fourier spectral method can be readily applied to time-dependent Schr\"odinger equations, retaining exponential convergence. Thus the quantum simulation problem for time-dependent potentials effectively reduces to a time-dependent Hamiltonian simulation problem with a discretized Hamiltonian \eqn{spectral_Hamiltonian_time} of the form $H(t) = A + B(t)$. In this section, we apply known methods for simulating time-dependent Hamiltonians \cite{low2018hamiltonian, berry2020time} to give concrete bounds on the complexity of simulating time-dependent Schr\"odinger equations in real space.

\subsection{Review of time-dependent Hamiltonian simulation methods}

	We take a detour from the real-space simulation problem to motivate the two main techniques used in this section: Hamiltonian simulation in the interaction picture \cite{low2018hamiltonian} and the rescaled Dyson-series algorithm \cite[Section 4]{berry2020time}.

	Suppose we want to simulate a time-dependent Hamiltonian of the form $H(t) = A + B(t)$. The quantum state evolves as
	\begin{align}
		\ket{\psi(t)} = \mathcal{T}\left[e^{-i\int^t_0 H(s) \d s}\right] \ket{\psi(0)},
	\end{align}
	where $\mathcal{T}$ is the time-ordering operator, so that $\mathcal{T}\left[e^{-i\int^t_0 H(s) \d s}\right] = \lim_{r\to \infty} \prod^r_{j=1} e^{-iH(jt/r) t/r}$.

	Define the interaction-picture Hamiltonian
	\begin{align}\label{eqn:interaction-hamiltonian}
		H_I(t) := e^{iAt} B(t) e^{-iAt}
	\end{align}
	and $\ket{\psi_I(t)} := e^{iAt}\ket{\psi(t)}$ for all $t$. Moving to the interaction picture may be advantageous since $\|H_I(t)\| = \|B(t)\| \le \|H(t)\|$. One can easily check that
	\begin{align}\label{eqn:interaction-schrodinger-eq}
		i \partial_t \ket{\psi_I(t)} = H_I(t) \ket{\psi_I(t)}.
	\end{align}

	In \cite{low2018hamiltonian}, the time-dependent Hamiltonian simulation problem \eqn{interaction-schrodinger-eq} is addressed using the Dyson-series technique, giving query and gate complexity that scales with the $\max$-norm of the interaction Hamiltonian $H_I(t)$ \cite[Section 5]{low2018hamiltonian}. This can be improved to scale with the $L^1$-norm of $H_I(t)$ using the rescaled Dyson-series algorithm \cite[Section 4]{berry2020time}.

	For a time-dependent Hamiltonian $H(t)$, we define the rescaled Hamiltonian
	\begin{align}
		\widetilde{H}(\varsigma) := \frac{H(g^{-1}(\varsigma))}{\|H(g^{-1}(\varsigma))\|_{\max}},
	\end{align}
	where
	\begin{align}\label{eqn:g-function}
		g(t) := \int^t_0 \|H(s)\|_{\max} ~\d s
	\end{align}
	is the $L^1$-norm of $\|H(t)\|_{\max}$. Further, we define
	\begin{align}\label{eqn:max-1}
	    \|H\|_{\max,1}:=\int_{0}^{T}\|H(t)\|_{\max}\d t=g(T),
	\end{align}
    where $T$ denotes the total simulation time. A key observation is that the rescaling of the Hamiltonian does not affect the target state:
	\begin{align}
		\ket{\psi(t)} = \mathcal{T}\left[e^{-i\int^t_0 H(s) \, \d s}\right] \ket{\psi(0)} = \mathcal{T}\left[e^{-i\int^{g(t)}_0 \widetilde{H}(\varsigma) \, \d \varsigma}\right] \ket{\psi(0)}.
	\end{align}	
	Given this rescaling procedure, if we have
	\begin{enumerate}
		\item an algorithm that simulates the rescaled Hamiltonian $\tilde{H}(\varsigma)$ for $0 \le \varsigma \le T$ with $L^\infty$-norm cost, i.e., with complexity $O(T\|\tilde{H}(\varsigma)\|_{\max, \infty})$ where
		\begin{align}
			\|\tilde{H}(\varsigma)\|_{\max, \infty} = \sup_{\varsigma \in [0,T]}\|\tilde{H}(\varsigma)\|_{\max},
		\end{align}
		and
		\item the ability to compute $g^{-1}(\varsigma)$ for any $\varsigma \in [0,T]$ and the max-norm $\|H(s)\|_{\max}$ for any $s \in [0,T]$ (so that we have access to the rescaled Hamiltonian $\tilde{H}(\varsigma)$ for any $\varsigma$),
	\end{enumerate}
	then we are able to simulate the original Hamiltonian $H(T)$ for $0 \le t \le T$ with $L^1$-norm cost, i.e., with complexity $O(\int^T_0 \|H(t)\|_{\max}~\d t)$. To see this, note that $\|\tilde{H}(\varsigma)\|_{\max} = 1$ for all $\varsigma \in [0, g(T)]$. Therefore, if we apply the simulation algorithm with $L^\infty$-norm cost to $\tilde{H}(\varsigma)$ for $0 \le \varsigma \le g(T)$, the cost is bounded by
	$g(T)$, which is the the $L^1$-norm of $\|H(s)\|_{\max}$:
	\begin{align}
		g(T) \|\tilde{H}(\varsigma)\|_{\max, \infty} = g(T),
	\end{align}
	which is the $L^1$ norm.

	Now, we apply the above rescaling procedure to the interaction-picture Hamiltonian $H_I(t)$ \eqn{interaction-hamiltonian}. The rescaled interaction-picture Hamiltonian $\widetilde{H}_{I}(\varsigma)$ satisfies
	\begin{align}\label{eqn:interaction-picture-hamt}
		\widetilde{H}_{I}(\varsigma)= e^{iAg^{-1}(\varsigma)} \widetilde{B}(\varsigma) e^{-iAg^{-1}(\varsigma)},
	\end{align}
	where \begin{align}\label{eqn:tilde-B-def}
	    \widetilde{B}(\varsigma):=B(g^{-1}(\varsigma))/\|B(g^{-1}(\varsigma))\|_{\max}
	\end{align}
	stands for the rescaled operator for $B$.

	\begin{lemma}
		The max-norm of the rescaled interaction-picture Hamiltonian $\widetilde{H}_{I}(\varsigma)$ as defined in \eqn{interaction-picture-hamt} is bounded by $1$ for any $0 \le \varsigma \le g(t)$:
		\begin{align}
			\|\widetilde{H}_{I}(\varsigma)\|_{\max} \le 1.
		\end{align}
	\end{lemma}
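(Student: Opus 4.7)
The plan is to exploit two structural features of the rescaled interaction-picture Hamiltonian $\widetilde{H}_I(\varsigma) = e^{iAg^{-1}(\varsigma)} \widetilde{B}(\varsigma) e^{-iAg^{-1}(\varsigma)}$: the operator $B(t) = \vect{V}_{n,d}(t)$, and hence $\widetilde{B}(\varsigma)$, is diagonal in the computational basis by construction of the Fourier spectral discretization in \eqn{spectral_Hamiltonian_time}, while $A$ is Hermitian, so $U := e^{iAg^{-1}(\varsigma)}$ is unitary. Given these two facts, the lemma reduces to establishing the general inequality $\|UDU^\dagger\|_{\max} \le \|D\|_{\max}$ for every unitary $U$ and every diagonal matrix $D$, and then instantiating it with $D = \widetilde{B}(\varsigma)$, whose max-norm equals $1$ by the normalization in \eqn{tilde-B-def}.

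To prove the auxiliary inequality, I would expand the matrix element $(UDU^\dagger)_{ij} = \sum_k U_{ik} D_{kk} \overline{U_{jk}}$, apply the triangle inequality to pull $|D_{kk}| \le \|D\|_{\max}$ outside the sum, and then estimate $\sum_k |U_{ik}||U_{jk}|$ by Cauchy--Schwarz together with the fact that each row of a unitary matrix has unit $\ell_2$-norm. This gives $|(UDU^\dagger)_{ij}| \le \|D\|_{\max}$ for every index pair $(i,j)$, hence $\|UDU^\dagger\|_{\max} \le \|D\|_{\max}$. Combining this bound with $\|\widetilde{B}(\varsigma)\|_{\max} = 1$ yields $\|\widetilde{H}_I(\varsigma)\|_{\max} \le 1$, as desired.

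The main point to watch is that the max-norm is \emph{not} invariant under unitary conjugation in general; we are relying crucially on the diagonality of $\widetilde{B}(\varsigma)$ so that the inner sum collapses to a product of two row norms of $U$. This diagonality is automatic in our setting, since the Fourier spectral method encodes the potential $f(\vect{x},t)$ as a diagonal matrix whose entries are the sampled values at the interpolation nodes $\{\vect{\chi}_{\vect{l}}\}$, so no further verification is needed beyond recalling the construction from \sec{fourier-spectral}.
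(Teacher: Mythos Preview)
Your argument is correct, but it differs from the paper's route. The paper does not work entry-by-entry; instead it passes through the spectral norm, using the chain
\[
\|\widetilde{H}_I(\varsigma)\|_{\max} \le \|\widetilde{H}_I(\varsigma)\| \le \|e^{iAg^{-1}(\varsigma)}\|\,\|\widetilde{B}(\varsigma)\|\,\|e^{-iAg^{-1}(\varsigma)}\| = \|\widetilde{B}(\varsigma)\|,
\]
invoking the general bound $\|M\|_{\max}\le\|M\|$, submultiplicativity of the spectral norm, and unitarity of $e^{\pm iAg^{-1}(\varsigma)}$; then it observes that for the diagonal matrix $\widetilde{B}(\varsigma)$ the spectral norm and max-norm coincide and equal $1$. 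Your approach is more elementary in that it avoids any appeal to the spectral norm and instead bounds each entry directly via Cauchy--Schwarz on the unit-norm rows of the unitary conjugator, which is a clean self-contained computation. The paper's approach, on the other hand, yields the slightly stronger intermediate statement $\|UXU^\dagger\|_{\max}\le\|X\|$ for \emph{arbitrary} $X$ (not just diagonal), and only uses diagonality at the very last step to identify $\|\widetilde{B}(\varsigma)\|$ with $\|\widetilde{B}(\varsigma)\|_{\max}=1$.
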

	\begin{proof}
		First, note that the max-norm of any matrix is upper bounded by its $\ell_{2}$-norm, we have
		\begin{align}
			\|\widetilde{H}_{I}(\varsigma)\|_{\max} \le \|\widetilde{H}_{I}(\varsigma)\| \le \|e^{iAg^{-1}(\varsigma)}\| \|\widetilde{B}(\varsigma)\| \|e^{-iAg^{-1}(\varsigma)}\| \le \|\widetilde{B}(\varsigma)\|.
		\end{align}
		The last two steps hold because the matrix $\ell_2$-norm is sub-multiplicative: $\|AB\|\le \|A\|\|B\|$, while the unitary operators $e^{iAg^{-1}(\varsigma)}$ and $e^{-iAg^{-1}(\varsigma)}$ have unit $\ell_{2}$-norm. Since the rescaled operator $\widetilde{B}(\varsigma)$ is diagonal, its max-norm and $\ell_{2}$-norm are equally $1$. This proves our claim that $\|\widetilde{H}_{I}(\varsigma)\|_{\max} \le 1$.
	\end{proof}

	Similarly, if we have an $L^\infty$-norm algorithm (e.g., the standard Dyson-series simulation) that simulates the rescaled interaction-picture Hamiltonian $\widetilde{H}_{I}(\varsigma)$ for $0 \le \varsigma \le g(t)$ as well as the access to rescaled Hamiltonian $\widetilde{H}_{I}(\varsigma)$, we can simulate the original Hamiltonian $H(s)$ with the total cost $O(g(t)) = O(\int^t_0\|B(s)\|_{\max}~\d s)$. Note that although the Hamiltonian is a sum of two operators, $H(t)= A+B(t)$, the simulation cost only depends on the $L^1$-norm of $\|B(s)\|_{\max}$ and not $A$. This is the advantage of using the interaction picture simulation method.

	\subsection{Block-encoding of the rescaled interaction-picture Hamiltonian}

	The input model of the truncated Dyson series method is a unitary oracle for a so-called block-encoding (defined in \append{notation}). In the present context, the values of the potential $f(\vect{x}, t)$ can be produced by an evaluation oracle \eqn{evaluation-dep}. Therefore, we provide an explicit construction of the relevant block-encoding oracle using queries to the potential.

	\paragraph{Efficient simulations of $e^{iAt}$.} Interaction picture simulation is adventageous when it is easy to implement $e^{iAt}$, even for large $t$. When simulating the time-dependent Schr\"odinger equation \eqn{Schrodinger-dep} with the Fourier spectral method, the $A$ term in the Hamiltonian \eqn{spectral_Hamiltonian_time} is
	\begin{align}\label{eqn:A-term}
		A = \vect{F}^s_{n,d}{\vect{L}}_{n,d}(\vect{F}^s_{n,d})^{-1},
	\end{align}
	where ${\vect{L}}_{n,d}$ is an explicit diagonal matrix and $\vect{F}^s_{n,d}$ is the multi-dimensional quantum shifted Fourier transform (QSFT). The transformation $\vect{F}^s_{n,d}$ and its inverse can be performed with gate complexity ${O}(d\log n\log\log n)$ \cite[Lemma 5]{childs2020high}. Therefore, $e^{iAt}$ can be simulated as $\vect{F}^s_{n,d} e^{i{\vect{L}}_{n,d} t} (\vect{F}^s_{n,d})^{-1}$. By standard techniques (see for example Rule 1.6 in \cite[Section 1.2]{childs2004quantum}), $e^{i{\vect{L}}_{n,d}t}$ can be simulated with gate complexity $\tilde{O}(d\log n)$.
\begin{lemma}\label{lem:U-tilde-B}
		Let $B\in\C^{2^{w}\times2^{w}}$ be a diagonal matrix. Suppose we have access to the evaluation oracle $U_{f}$ that returns the values of diagonal entries of $B$ in binary as
		\begin{align}
			U_{f}\ket{j}\ket{t}\ket{0^{z}} = \ket{j}\ket{t}\ket{f_{j}(t)}=\ket{j}\ket{t}\ket{B_{jj}(t)},\quad\forall j\in[2^{w}]-1,
		\end{align}
		where $B_{jj}=f_{j}(t)$ is a $z$-bit binary description of the $j$th diagonal entry of $B(t)$. Additionally, suppose we have the following two oracles\footnote{This is a standard assumption; see for example~\cite[Section 4.2]{berry2020time}.} that implement the inverse change-of-variable and compute the max-norm:
\begin{align}
	O_{\inv}\ket{\varsigma,z}&=\ket{\varsigma,z \oplus g^{-1}(\varsigma)},\\
	O_{\nrm}\ket{\tau,z}&=\ket{\tau, z\oplus\|B(\tau)\|_{\max}}.
\end{align}
		Then we can implement the following evaluation oracle $U_{\widetilde{B}}$ for the rescaled operator $\widetilde{B}(\varsigma)$ defined in \eqn{tilde-B-def}, acting as
		\begin{align}
			U_{\widetilde{B}}\ket{j}\ket{\varsigma}\ket{0}^{z}=\ket{j}\ket{\varsigma}\ket{\widetilde{B}_{jj}(\varsigma)}\qquad\forall j\in[2^{w}]-1,
		\end{align}
	using $O(1)$ queries to the oracles $U_{f}$, $O_{\inv}$, $O_{\nrm}$ and additionally using $O(z^{2})$ one- and two-qubit gates.
	\end{lemma}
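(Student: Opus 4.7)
The plan is to build $U_{\widetilde B}$ as a short sequence of oracle calls followed by a reversible division subroutine and an uncomputation step. We introduce three $z$-bit ancilla registers, initialized to $\ket{0^z}$, that will temporarily hold $g^{-1}(\varsigma)$, the diagonal entry $B_{jj}(g^{-1}(\varsigma))$, and the normalization factor $\|B(g^{-1}(\varsigma))\|_{\max}$, respectively.

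First, apply $O_{\inv}$ to the register holding $\varsigma$ and the first ancilla to obtain
\begin{align}
\ket{j}\ket{\varsigma}\ket{0^z}\ket{0^z}\ket{0^z}\ket{0^z}
\;\longmapsto\;
\ket{j}\ket{\varsigma}\ket{g^{-1}(\varsigma)}\ket{0^z}\ket{0^z}\ket{0^z}.
\end{align}
Second, invoke $U_f$ with the $\ket{j}$ register and the newly populated $\ket{g^{-1}(\varsigma)}$ register to write $B_{jj}(g^{-1}(\varsigma))$ into the second ancilla. Third, apply $O_{\nrm}$ with input $\ket{g^{-1}(\varsigma)}$ to write $\|B(g^{-1}(\varsigma))\|_{\max}$ into the third ancilla. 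At this point we hold
\begin{align}
\ket{j}\ket{\varsigma}\ket{g^{-1}(\varsigma)}\ket{B_{jj}(g^{-1}(\varsigma))}\ket{\|B(g^{-1}(\varsigma))\|_{\max}}\ket{0^z},
\end{align}
which has cost $O(1)$ queries to $U_f$, $O_{\inv}$, and $O_{\nrm}$ in total.

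Next, perform a reversible binary division of the $z$-bit numerator by the $z$-bit denominator, writing the quotient $\widetilde B_{jj}(\varsigma)=B_{jj}(g^{-1}(\varsigma))/\|B(g^{-1}(\varsigma))\|_{\max}$ into the fourth (output) ancilla. A standard schoolbook long-division circuit, or equivalently a Newton-iteration based divider implemented with reversible adders and shifts, computes this to $z$ bits of precision using $O(z^2)$ one- and two-qubit gates (no queries). Finally, uncompute the three intermediate registers by applying $O_{\nrm}^\dagger$, $U_f^\dagger$, and $O_{\inv}^\dagger$ in reverse order, which cleanly resets them to $\ket{0^z}$ because each oracle is a reversible XOR-type map and the required input registers are still available. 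The net effect is exactly $U_{\widetilde B}\ket{j}\ket{\varsigma}\ket{0^z}=\ket{j}\ket{\varsigma}\ket{\widetilde B_{jj}(\varsigma)}$.

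The main obstacle is the division step: everything else is a textbook compute-uncompute pattern that uses only a constant number of oracle calls, but we need to justify the $O(z^2)$ gate count for reversibly dividing two $z$-bit fixed-point numbers and to verify that the resulting fixed-point quantization of $\widetilde B_{jj}(\varsigma)$ is consistent with the $z$-bit promise in the statement. Once the reversible divider is in place, correctness of the overall circuit follows directly from the definition $\widetilde B(\varsigma)=B(g^{-1}(\varsigma))/\|B(g^{-1}(\varsigma))\|_{\max}$ in \eqn{tilde-B-def}, and the stated complexity $O(1)$ oracle queries plus $O(z^2)$ gates is immediate.
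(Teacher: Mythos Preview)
Your proof is correct and follows essentially the same compute--divide--uncompute pattern as the paper's own argument: call $O_{\inv}$, $O_{\nrm}$, and $U_f$ a constant number of times to load $g^{-1}(\varsigma)$, $\|B(g^{-1}(\varsigma))\|_{\max}$, and $B_{jj}(g^{-1}(\varsigma))$ into ancillas, perform a reversible $z$-bit division at cost $O(z^2)$, and uncompute. Your write-up is in fact more explicit than the paper's about the register layout and the division subroutine, but the underlying idea is identical.
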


	\begin{proof}
		The main idea follows the proof of \cite[Theorem 10]{berry2020time}. Let the function $g(t)$ be defined as in \eqn{g-function}.

		We use oracles $O_{\inv}$ and $O_{\nrm}$ to implement the transformation
		\begin{align}
			\ket{\varsigma,0,0}\mapsto\ket{\varsigma,g^{-1}(\varsigma),\|B(g^{-1}(\varsigma))\|_{\text{max}}}.
		\end{align}
		We then query $U_{f}$ and normalize the result with $\|B(g^{-1}(\varsigma))\|_{\text{max}}$ to compute $\widetilde{B}_{jj}(\varsigma)$:
		\begin{align}
			\ket{g^{-1}(\varsigma),\|B(g^{-1}(\varsigma))\|_{\text{max}},j,0}\mapsto\ket{g^{-1}(\varsigma),\|B(g^{-1}(\varsigma))\|_{\text{max}},j,\widetilde{B}_{jj}(\varsigma)}.
		\end{align}
		Then we uncompute the ancilla registers and obtain an evaluation oracle for the rescaled operator, namely
		\begin{align}
			U_{\widetilde{B}}\ket{j}\ket{\varsigma}\ket{0}^{z}=\ket{j}\ket{\varsigma}\ket{\widetilde{B}_{jj}(\varsigma)}\qquad\forall j\in[2^{w}]-1.
		\end{align}
		This process uses $O(1)$ queries to the oracles $U_{f}$, $O_{\inv}$, and $O_{\nrm}$.

		We now analyze the gate complexity. Each entry of $B$ is given using $z$ qubits. The above implementation process only involves arithmetic operations on these qubits, which can be implemented with complexity $O(z^{2})$. Such operation is called only if an oracle query happens, so we can implement all the arithmetic operations with $O(z^{2})$ additional gates.
	\end{proof}

	Next, we evaluate the cost of implementing the unitary oracle $\hamt_{\widetilde{H}_{I}}$ of the rescaled interaction-picture Hamiltonian $\widetilde{H}_{I}(\varsigma)$. The definition of the $\hamt$ oracle, originally from \cite{low2018hamiltonian}, can be found in \append{notation}.

	\begin{lemma}\label{lem:hamt-hi}
		Let $\widetilde{H}_{I}$ be the rescaled interaction-picture Hamiltonian in \eqn{interaction-picture-hamt}. Then the oracle $\hamt_{\widetilde{H}_{I}}$ can be approximated with error at most $\delta$ at the following cost:
		\begin{enumerate}
			\item Queries to the oracles $U_{f}$, $O_{\inv}$ and $O_{\nrm}$: $O(1)$,
			\item One- and two-qubit gates: $\widetilde{O}\big(z^{2}+\log^{2.5}(1/\delta)+d\log n\big)$.
		\end{enumerate}
	\end{lemma}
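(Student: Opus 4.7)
The plan is to build $\hamt_{\widetilde{H}_{I}}$ by observing that a block-encoding of $\widetilde{H}_I(\varsigma) = e^{iAg^{-1}(\varsigma)} \widetilde{B}(\varsigma) e^{-iAg^{-1}(\varsigma)}$ is obtained by conjugating any block-encoding of $\widetilde B(\varsigma)$ by the unitary $e^{iAg^{-1}(\varsigma)}$ applied on the system register (this is a standard property of block-encodings: conjugating by a unitary $U$ maps a $(\alpha,a,\epsilon)$-block-encoding of $M$ to an $(\alpha,a,\epsilon)$-block-encoding of $UMU^{\dagger}$). Thus it suffices to (a)~implement $e^{\pm iAg^{-1}(\varsigma)}$ coherently with the time register, and (b)~block-encode $\widetilde B(\varsigma)$ from the evaluation oracle $U_{\widetilde B}$ furnished by \lem{U-tilde-B}. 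Finally, packaging the resulting controlled block-encoding according to the HAM-T convention of \append{notation} yields $\hamt_{\widetilde H_I}$.

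For step~(a), I would first use one query to $O_{\inv}$ to write $g^{-1}(\varsigma)$ into an ancilla. Because $A=\vect{F}^s_{n,d}\vect{L}_{n,d}(\vect{F}^s_{n,d})^{-1}$ with $\vect{L}_{n,d}$ explicitly diagonal and the QSFT computable in $\widetilde O(d\log n)$ gates \cite[Lemma 5]{childs2020high}, the operator $e^{-iAg^{-1}(\varsigma)}$ factorizes as $\vect{F}^s_{n,d}\,e^{-i\vect{L}_{n,d}g^{-1}(\varsigma)}\,(\vect{F}^s_{n,d})^{-1}$. The middle diagonal exponential is implemented by computing each product $[\vect{L}_{n,d}]_{jj}\cdot g^{-1}(\varsigma)$ into an ancilla via $O(z^{2})$ arithmetic operations and using phase kick-back; approximating the resulting single-qubit phase rotations to precision $\delta$ contributes the additive gate cost $\widetilde O(\log^{2.5}(1/\delta))$ via standard rotation synthesis. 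An uncomputation of $g^{-1}(\varsigma)$ using $O_{\inv}^{\dagger}$ at the end restores the ancilla. The symmetric $e^{+iAg^{-1}(\varsigma)}$ factor is built identically.

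For step~(b), I would apply the usual ``compute angle, controlled-rotate, uncompute'' recipe for block-encoding a diagonal matrix whose entries are produced by an evaluation oracle: invoke $U_{\widetilde B}$ to place $\widetilde B_{jj}(\varsigma)$ in an ancilla, perform a controlled rotation of a fresh qubit by $\arcsin(\widetilde B_{jj}(\varsigma))$ so that the ``success'' amplitude equals $\widetilde B_{jj}(\varsigma)$, and then uncompute the ancilla by a second call to $U_{\widetilde B}^{\dagger}$. Because $\|\widetilde B(\varsigma)\|_{\max}=1$, this produces a $(1,\cdot,\cdot)$-block-encoding of $\widetilde B(\varsigma)$. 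By \lem{U-tilde-B} the query cost is $O(1)$ and the arithmetic cost of the angle-computation is $O(z^{2})$, with the single controlled rotation absorbed into the $\widetilde O(\log^{2.5}(1/\delta))$ synthesis term. Composing the circuits from (a) and (b) according to $\widetilde H_I(\varsigma)=e^{iAg^{-1}(\varsigma)}\widetilde B(\varsigma) e^{-iAg^{-1}(\varsigma)}$, and exposing the time register as the controlling label of $\hamt$, yields the claimed oracle with $O(1)$ queries to $U_f$, $O_{\inv}$, $O_{\nrm}$ and $\widetilde O(z^{2}+\log^{2.5}(1/\delta)+d\log n)$ one- and two-qubit gates.

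The main obstacle is careful error budgeting: the block-encoding error accumulates from the finite-precision rotations encoding $\widetilde B_{jj}(\varsigma)$, the finite-precision phase rotations implementing $e^{-i\vect{L}_{n,d}g^{-1}(\varsigma)}$, and the finite bit-width representation of $g^{-1}(\varsigma)$ returned by $O_{\inv}$. By reserving precision $\Theta(\delta)$ to each rotation (so that each synthesis costs $\widetilde O(\log^{2.5}(1/\delta))$) and invoking the triangle inequality together with the fact that conjugation by a unitary preserves the block-encoding error, the overall error is bounded by $\delta$ without inflating the stated asymptotic gate count.
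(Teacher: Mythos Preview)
Your proposal is correct and follows essentially the same approach as the paper: decompose $\hamt_{\widetilde H_I}$ as the conjugation $e^{iAg^{-1}(\varsigma)}\,O_{\widetilde B}\,e^{-iAg^{-1}(\varsigma)}$, build the block-encoding $O_{\widetilde B}$ of the diagonal $\widetilde B(\varsigma)$ from the evaluation oracle of \lem{U-tilde-B}, and implement the two free-evolution factors via the QSFT and a diagonal phase. The only cosmetic difference is that the paper obtains $O_{\widetilde B}$ by invoking the sparse block-encoding lemma (\lem{block-encoding-diagonal}), which directly supplies the $O(w+\log^{2.5}(1/\delta))$ gate term and then absorbs $w=O(d\log n)$, whereas you spell out the compute--rotate--uncompute construction yourself and attribute the $\log^{2.5}(1/\delta)$ to rotation synthesis; these are the same mechanism.
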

	\begin{proof}
	    Observe that $\hamt_{\widetilde{H}_{I}}$ can be approximated by $e^{iAg^{-1}(\varsigma)} O_{\widetilde{B}} e^{-iAg^{-1}(\varsigma)}$ to any desired accuracy. Since we require the error of $\hamt_{\widetilde{H}_{I}}$ be bounded by $\delta$, the error of the above three terms should be $O(\delta)$.

	    By \lem{U-tilde-B} and \lem{block-encoding-diagonal}, an $(1,\omega+3,\epsilon)$-block-encoding of $\tilde{B}$, denoted $O_{\widetilde{B}}$, can be approximated. with $O(1)$ queries to oracles $U_{f}$, $O_{\text{inv}}$, $O_{\text{norm}}$, and $O(z^{2}+w+\log^{2.5}(1/\delta))$ one- and two-qubit gates.

	    To implement $e^{iAg^{-1}(\varsigma)}$ and $e^{-iAg^{-1}(\varsigma)}$, the diagonal elements and $g^{-1}(\varsigma)$ should be known to $O(\log (1/\delta))$ bits of precision. Then they can be simulated using $\widetilde{O}(d\log n+\log(1/\delta))$ one- and two-qubit gates. Note that $w$ is at most $O(d\log n)$, then in total $\widetilde{O}\big(z^{2}+\log^{2.5}(1/\delta)+d\log n\big)$ one- and two-qubit gates suffice.
	\end{proof}

	\subsection{Rescaled Dyson-series algorithm with \texorpdfstring{$L^1$}{L1}-norm scaling}

The rescaled Dyson-series algorithm uses a block-encoding oracle to achieve the following simulation \cite{low2018hamiltonian}.

	\begin{lemma}[{\cite[Lemma 6]{low2018hamiltonian}}]\label{lem:interaction-picture-query-complexity}
		Let $A\in\C^{2^{n_{s}}\times 2^{n_{s}}}$, $B\in\C^{2^{n_{s}}\times 2^{n_{s}}}$, and let $\alpha_{A},\alpha_{B}$ be known constants such that $\|A\|\leq \alpha_{A}$ and $\|B\|\leq\alpha_{B}$. Assume the existence of a unitary oracle $\hamt_{\widetilde{H}_{I}}$ that block-encodes the Hamiltonian within the interaction picture, which implicitly depends on the time-step size $\tau=O(\alpha_{B}^{-1})$ and number of time-steps $M=O(t(\alpha_{A}+\alpha_{B})/\epsilon)$. Then for all $t\geq2\alpha_{B}\tau$, the time-evolution operator $e^{-i(A+B)t}$ may be approximated to error $\epsilon$ with the following cost.
		\begin{enumerate}
			\item Simulations of $e^{-iA\tau}$: $O(\alpha_{B}t)$,
			\item Queries to $\hamt_{\widetilde{H}_{I}}$: $O\big(\alpha_{B}t\frac{\log(\alpha_{B}t/\epsilon)}{\log\log(\alpha_{B}t/\epsilon)}\big)$,
			\item Primitive gates: $O\Big(\alpha_{B}t\big(n_a+\log(t(\alpha_{A}+\alpha_{B})/\epsilon)\big)\alpha_{B}t\frac{\log(\alpha_{B}t/\epsilon)}{\log\log(\alpha_{B}t/\epsilon)}\Big)$.
		\end{enumerate}
	\end{lemma}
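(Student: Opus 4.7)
The plan is to combine the interaction-picture transformation with the truncated Dyson-series simulation of time-dependent Hamiltonians, exploiting the fact that $\|H_I(t)\| = \|B(t)\| \le \alpha_B$ is much smaller than $\|A+B(t)\|$. First I would pass to the interaction picture by writing $\ket{\psi_I(t)} = e^{iAt}\ket{\psi(t)}$, which evolves according to $i\partial_t \ket{\psi_I(t)} = H_I(t)\ket{\psi_I(t)}$ with $H_I(t) = e^{iAt}B(t)e^{-iAt}$. The oracle $\hamt_{\widetilde{H}_I}$ block-encodes this interaction-picture operator on a discretized time grid of $M$ samples.

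Next I would partition $[0,t]$ into segments of length $\tau = O(\alpha_B^{-1})$, giving $O(\alpha_B t)$ segments; the size is chosen so that $\alpha_B \tau \le 1/2$, guaranteeing convergence of the Dyson series on each segment. On a single segment I would approximate $\mathcal{T}\exp\bigl(-i\int_{j\tau}^{(j+1)\tau} H_I(s)\,\d s\bigr)$ by its Dyson series truncated at order $K = O\bigl(\log(\alpha_B t/\epsilon)/\log\log(\alpha_B t/\epsilon)\bigr)$. Each term is a $K$-fold time-ordered product $H_I(s_1)\cdots H_I(s_K)$, which I would realize as a linear combination of unitaries whose selector invokes $\hamt_{\widetilde{H}_I}$ on a compressed time-index register of size $O(\log M)$, with oblivious amplitude amplification used to deterministically renormalize the output. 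Choosing $K$ as above makes the per-segment truncation error at most $\epsilon/O(\alpha_B t)$, so the composite error across segments is $\epsilon$; choosing $M = O(t(\alpha_A+\alpha_B)/\epsilon)$ controls the discretization error from sampling $H_I$ at finitely many times, which is the relevant scale since the interaction-picture Hamiltonian oscillates at rate $O(\alpha_A + \alpha_B)$.

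Finally I would assemble the resource counts. Summed across the $O(\alpha_B t)$ segments, I get $O(\alpha_B t)$ simulations of $e^{-iA\tau}$ (used to interface the block-encoding of $B(t)$ with the interaction-picture construction on each segment) and $O(\alpha_B t \cdot K) = O\bigl(\alpha_B t \log(\alpha_B t/\epsilon)/\log\log(\alpha_B t/\epsilon)\bigr)$ queries to $\hamt_{\widetilde{H}_I}$. The primitive-gate count is the query count multiplied by the cost of manipulating the $n_a$ ancilla qubits of the block-encoding together with the $O(\log M)$-qubit time-index register, yielding the additional factor of $n_a + \log(t(\alpha_A+\alpha_B)/\epsilon)$. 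The hard part will be the careful bookkeeping of the compressed time encoding together with the oblivious amplitude amplification: one must verify both that sampling $H_I$ at $M$ discrete times (rather than integrating continuously) keeps the total error below $\epsilon$, and that the LCU selector, prepared from the Dyson coefficients, can be composed across consecutive segments without accumulating amplitude loss. This is the technical core of the argument developed in \cite{low2018hamiltonian}.
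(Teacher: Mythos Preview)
The paper does not provide its own proof of this lemma; it is simply quoted from \cite[Lemma~6]{low2018hamiltonian} and used as a black box. Your sketch correctly reconstructs the argument of that reference---interaction-picture reduction, segmentation into $O(\alpha_B t)$ windows of width $\tau = O(\alpha_B^{-1})$, truncated Dyson series of order $K = O(\log(\alpha_B t/\epsilon)/\log\log(\alpha_B t/\epsilon))$ on each window implemented via LCU with oblivious amplitude amplification, and a time-discretization parameter $M$ chosen to control the sampling error of $H_I$---so there is no discrepancy to report.
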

	When simulating a time-dependent Schr{\"o}dinger equation, it is natural to let $B$ be the term corresponding to the potential energy $f(\vect{x},t)$. Then $B$ is a diagonal matrix and we have access to its diagonal components through the evaluation oracle \eqn{evaluation-dep}. This can be used to efficiently implement the evaluation oracle for its rescaling $\widetilde{B}$, which can in turn be used to implement the Hamiltonian $\hamt_{\widetilde{H}_{I}}$ specified in \defn{hamt}.

	\begin{proposition}\label{prop:rescaled-interaction-complexity}
		For $\tau\in[0,T]$, let $H(\tau)=-\nabla^{2}+B(\tau)$ be a discretized real-space Hamiltonian with $B(\tau)$ a diagonal matrix whose diagonal elements are specified by the potential function $f(\vect{x},\tau)\colon\R^{d}\times\R\to\R$. Suppose that $f$ is a positive and continuously differentiable function, $L$-Lipschitz in terms of $\tau$, and can be computed with $z$ bits of precision. Then $H$ can be simulated for time $T$ with accuracy $\epsilon$ with the following costs.
		\begin{enumerate}
			\item Queries to $U_{f}$, $O_{\inv}$, $O_{\nrm}$: $O\Big(\|f\|_{\max,1}\frac{\log(\|f\|_{\max,1}/\epsilon)}{\log\log(\|f\|_{\max,1}/\epsilon)}\Big)$ where $\|\cdot\|_{\max,1}$ is defined in \eqn{max-1} and the oracles are defined in \lem{U-tilde-B};
			\item One- and two-qubit gates:\\
			$O\bigg(\|f\|_{\max,1}\Big(\poly(z)+\log^{2.5}(L\|f\|_{\max,1}/\epsilon)+d\log \Big(\frac{\log(g'/\epsilon)}{\log\log(g'/\epsilon)}\Big)\Big)\frac{\log(\|f\|_{\max,1}/\epsilon)}{\log\log(\|f\|_{\max,1}/\epsilon)}\bigg)$;
		\end{enumerate}
	    where $g' =  \max_t\|\Phi^{(n/2)}(\cdot,t)\|_{L^1}$ as defined in \eqn{gprime}.
	\end{proposition}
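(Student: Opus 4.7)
The plan is to assemble three ingredients already developed in the paper: the Fourier spectral discretization of \sec{fourier-spectral}, the block-encoding of the rescaled interaction-picture Hamiltonian from \lem{hamt-hi}, and the rescaled Dyson-series simulation of \lem{interaction-picture-query-complexity}. First, I would apply the Fourier spectral method to convert the continuum Schr\"odinger evolution under $H(\tau)=-\nabla^{2}+B(\tau)$ into the discretized Hamiltonian evolution $\vect{\widetilde H}_{n,d}(t)=\vect{A}+\vect{B}(t)$ of \eqn{spectral_Hamiltonian_time}, with $\vect{A}=\vect{F}^s_{n,d}\vect{L}_{n,d}(\vect{F}^s_{n,d})^{-1}$ and $\vect{B}(t)=\vect{V}_{n,d}(t)$. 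Choosing the truncation number $n$ via \eqn{truncated-number} ensures that the spatial discretization error on the normalized state is at most $\epsilon/2$; this yields $n=O(\log(g'/\epsilon)/\log\log(g'/\epsilon))$ and contributes the factor $d\log n = d\log(\log(g'/\epsilon)/\log\log(g'/\epsilon))$ to the final gate count.

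Second, I would move into the interaction picture and rescale time exactly as in \eqn{interaction-picture-hamt}. Because $\vect{B}(t)$ is diagonal, the rescaled interaction-picture Hamiltonian $\widetilde{H}_{I}(\varsigma)=e^{i\vect{A}g^{-1}(\varsigma)}\widetilde{\vect{B}}(\varsigma)e^{-i\vect{A}g^{-1}(\varsigma)}$ satisfies $\|\widetilde{H}_{I}(\varsigma)\|_{\max}\le 1$ by the lemma preceding the proposition, so the ``effective'' simulation time in the rescaled variable is $g(T)=\|f\|_{\max,1}$. Applying \lem{interaction-picture-query-complexity} with $\alpha_{A}=O(dn^{2})$ (from $\|\vect{L}_{n,d}\|\le dn^{2}/4$) and $\alpha_{B}=1$, the number of queries to $\hamt_{\widetilde{H}_{I}}$ is $O\bigl(\|f\|_{\max,1}\log(\|f\|_{\max,1}/\epsilon)/\log\log(\|f\|_{\max,1}/\epsilon)\bigr)$, while each simulation $e^{-i\vect{A}\tau}$ requires only $\widetilde{O}(d\log n)$ gates via the QSFT, contributing subdominantly.

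Third, I would invoke \lem{hamt-hi} to implement each $\hamt_{\widetilde{H}_{I}}$ call, which costs $O(1)$ queries to $U_{f}$, $O_{\inv}$, $O_{\nrm}$ and $\widetilde{O}(z^{2}+\log^{2.5}(1/\delta)+d\log n)$ one- and two-qubit gates, for a per-query block-encoding error $\delta$. Choosing $\delta$ inversely proportional to the number of queries so that total block-encoding error stays $O(\epsilon)$ replaces $1/\delta$ with a quantity polynomial in $\|f\|_{\max,1}/\epsilon$ inside the $\log^{2.5}$ factor. The Lipschitz constant $L$ enters because, to compute $g^{-1}(\varsigma)$ and $\|\vect{B}(t)\|_{\max}$ to enough precision that the implicit discretization of the rescaling does not spoil the accuracy, we must keep $O(\log(L\|f\|_{\max,1}/\epsilon))$ bits, producing the $\log^{2.5}(L\|f\|_{\max,1}/\epsilon)$ term in the statement. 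Multiplying the per-query cost from \lem{hamt-hi} by the number of queries and adding the cost of the $e^{-i\vect{A}\tau}$ calls gives exactly the two bounds claimed.

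The main obstacle is bookkeeping of precision propagation rather than any single difficult step: I need to verify that setting $\delta$ to be $\epsilon$ divided by the query count indeed keeps the accumulated Dyson-series error at $O(\epsilon)$, that $L$-Lipschitz continuity suffices to make the cost of inverting $g$ enter only logarithmically, and that the Fourier spectral error bound \lem{pseudospectral-estimate-lemma} (stated for the time-independent case) extends to time-dependent potentials under the stated regularity. Once these ingredients are assembled in this order, the stated complexities follow by direct substitution of $\alpha_{B}=1$, $g(T)=\|f\|_{\max,1}$, and $n$ from \eqn{truncated-number} into the bounds of \lem{interaction-picture-query-complexity} composed with \lem{hamt-hi}.
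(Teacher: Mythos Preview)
Your overall plan matches the paper's approach: Fourier spectral discretization, rescaled interaction picture, \lem{hamt-hi} for the block-encoding, and a Dyson-series bound from \cite{low2018hamiltonian}. Two points need correction, however.

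First, you invoke \lem{interaction-picture-query-complexity} (Lemma~6 of \cite{low2018hamiltonian}), which is stated for \emph{time-independent} $A,B$ and internally constructs the interaction picture to simulate $e^{-i(A+B)t}$. Since here $B(\tau)$ is genuinely time-dependent and you have already passed to the rescaled interaction Hamiltonian $\widetilde{H}_I(\varsigma)$, the correct tool is \lem{hamt-query-complexity} (Corollary~4 of \cite{low2018hamiltonian}), whose input is a time-dependent $\widetilde{H}(\varsigma)$ together with a bound on $\big\langle\|\tfrac{\d}{\d\varsigma}\widetilde{H}\|\big\rangle$. Your parameters ``$\alpha_A=O(dn^2)$, $\alpha_B=1$'' do not fit the hypotheses of \lem{interaction-picture-query-complexity} as stated. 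The paper's proof applies \lem{hamt-query-complexity} directly to $\widetilde{H}_I$, using $\|\widetilde{H}_I\|_{\max,1}\le\|B\|_{\max,1}=\|f\|_{\max,1}$.

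Second, your account of how $L$ enters is not the mechanism the paper uses. The $L$-dependence in the gate count does not arise from the bit-precision needed to evaluate $g^{-1}$ or $\|B(t)\|_{\max}$; it arises from the time-derivative bound that controls the number of Dyson time steps $M$ in \lem{hamt-query-complexity}. The paper computes
\[
\Big\|\frac{\d\widetilde{H}_I}{\d\varsigma}\Big\|=\Big\|\frac{\d\widetilde{B}}{\d\varsigma}\Big\|=O\Big(\Big\|\frac{\d B}{\d t}\Big\|\Big)=O(L),
\]
and this feeds into $M=O\bigl(\tfrac{\|\widetilde H\|_{\max,1}}{\epsilon}(\langle\|\tfrac{\d}{\d\varsigma}\widetilde{H}\|\rangle+\|\widetilde H\|_{\max}^2)\bigr)$, producing the $\log(L\|f\|_{\max,1}/\epsilon)$ factor in the primitive-gate count. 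Once you switch to \lem{hamt-query-complexity} and trace $L$ through this derivative bound, the remaining bookkeeping---setting $\delta=O(\epsilon/\|f\|_{\max,1})$ in \lem{hamt-hi}, bounding $n_a$ by $\poly(z)$, and substituting $n$ from \eqn{truncated-number}---proceeds exactly as you outline.
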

	The following lemma from \cite{low2018hamiltonian} is useful for proving this theorem:
	\begin{lemma}[{\cite[Corollary 4]{low2018hamiltonian}}]\label{lem:hamt-query-complexity}
		Let $\widetilde{H}(\varsigma): [0,t]\rightarrow\C^{2^{w}\times2^{w}}$, and suppose $\|\widetilde{H}\|_{\max}$ is bounded by some constant $C_{\widetilde{H}}$, and $\left<\|\frac{\d}{\d\varsigma}\widetilde{H}\|\right>=\frac{1}{t}\int_{0}^{t}\|\frac{\d}{\d\varsigma}\widetilde{H}(\varsigma)\|\d\varsigma$. Set $M=O\Big(\frac{\|\tilde{H}\|_{\max,1}}{\epsilon}\Big(\left<\|\frac{\d}{\d\varsigma}\widetilde{H}\|\right>+\|\widetilde{H}\|_{\max}^{2}\Big)\Big)$ to be the number of time intervals in the $\hamt_{\widetilde{H}}$ oracle. Then for all $\epsilon>0$, an operation $W$ can be implemented with failure probability at most $O(\epsilon)$, such that
		\begin{align}
			\|W-\mathcal{T}\big[e^{\int_{\varsigma=0}^{g^{-1}(t)}\widetilde{H}(\varsigma) \d \varsigma}\big]\| \leq \epsilon
		\end{align}
		with the following costs.
		\begin{itemize}
			\item[1.] Queries to $\hamt_{\widetilde{H}}$: $O\Big(\|\widetilde{H}\|_{\max,1}\frac{\log(\|\widetilde{H}\|_{\max,1}/\epsilon)}{\log\log(\|\widetilde{H}\|_{\max,1}/\epsilon)}\Big)$,
			\item[2.] Primitive gates: $O\Big(\|\widetilde{H}\|_{\max,1}\Big(n_{a}+\log\Big(\frac{\|\widetilde{H}\|_{\max,1}}{\epsilon}\Big(\left<\|\frac{\d}{\d\varsigma}\widetilde{H}\|\right>+\|\widetilde{H}\|_{\max}^{2}\Big)\Big)\Big)\frac{\log(\|\widetilde{H}\|_{\max,1}/\epsilon)}{\log\log(\|\widetilde{H}\|_{\max,1}/\epsilon)}\Big)$.
		\end{itemize}
		Here $n_{a}$ denotes the number of ancillary qubits needed to implement $\hamt_{\widetilde{H}}$.
	\end{lemma}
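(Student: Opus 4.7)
The plan is to invoke the rescaled Dyson-series machinery of \lem{hamt-query-complexity} applied to the interaction-picture Hamiltonian $\widetilde{H}_I(\varsigma)$ defined in \eqn{interaction-picture-hamt}, and then convert every $\hamt_{\widetilde{H}_I}$ query into the stated number of $U_f$, $O_{\inv}$, $O_{\nrm}$ queries and primitive gates using \lem{hamt-hi}. Concretely, I would set $A = -\nabla^2$ (which, in the Fourier spectral representation is $\vect{F}^s_{n,d}\vect{L}_{n,d}(\vect{F}^s_{n,d})^{-1}$ and can be exponentiated efficiently via the QSFT) and $B(\tau)=\vect{V}_{n,d}(\tau)$ (diagonal, with entries supplied by $U_f$). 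Defining $g(\tau)=\int_0^\tau\|B(s)\|_{\max}\,\d s$ so that $g(T)=\|f\|_{\max,1}$, rescaling as in \eqn{interaction-picture-hamt} yields $\|\widetilde{H}_I\|_{\max}\le 1$ on $\varsigma\in[0,g(T)]$, hence $\|\widetilde{H}_I\|_{\max,1}\le\|f\|_{\max,1}$.

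Next I would substitute these quantities into \lem{hamt-query-complexity}. The query count to $\hamt_{\widetilde{H}_I}$ immediately becomes $O\bigl(\|f\|_{\max,1}\tfrac{\log(\|f\|_{\max,1}/\epsilon)}{\log\log(\|f\|_{\max,1}/\epsilon)}\bigr)$, which (by \lem{hamt-hi}, where each $\hamt_{\widetilde{H}_I}$ call uses $O(1)$ queries to $U_f$, $O_{\inv}$, $O_{\nrm}$) yields the first bullet of the proposition. For the number of time intervals $M$ that appears in the primitive-gate count of \lem{hamt-query-complexity}, I would bound the average derivative $\bigl\langle\|\tfrac{\d}{\d\varsigma}\widetilde{H}_I\|\bigr\rangle$ by differentiating the expression $e^{iA g^{-1}(\varsigma)}\widetilde{B}(\varsigma) e^{-iA g^{-1}(\varsigma)}$: one commutator term scaled by $(g^{-1})'(\varsigma)=1/\|B(g^{-1}(\varsigma))\|_{\max}$, plus a term $\tfrac{\d\widetilde{B}}{\d\varsigma}$ whose bound follows from the $L$-Lipschitz hypothesis on $f$ via the quotient rule. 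Together with $\|\widetilde{H}_I\|_{\max}\le 1$, this gives $\log M = O\bigl(\log(L\|f\|_{\max,1}/\epsilon)\bigr)$, which is the origin of the $L$ inside the $\log^{2.5}$ factor in the gate count.

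To translate to one- and two-qubit gates I would use \lem{hamt-hi}: each $\hamt_{\widetilde{H}_I}$ can be implemented to error $\delta$ with $\widetilde{O}\bigl(z^2+\log^{2.5}(1/\delta)+d\log n\bigr)$ primitive gates. I would set $\delta=\Theta\bigl(\epsilon/(\|f\|_{\max,1}\log(\|f\|_{\max,1}/\epsilon))\bigr)$ so that the accumulated block-encoding error stays below $\epsilon/2$, giving $\log(1/\delta)=O(\log(L\|f\|_{\max,1}/\epsilon))$ when combined with the $M$-bound above, and finally pick the Fourier truncation $n$ from \eqn{truncated-number} so that the spatial discretization error is at most $\epsilon/2$; this choice supplies the $d\log\!\bigl(\log(g'/\epsilon)/\log\log(g'/\epsilon)\bigr)$ contribution to $d\log n$. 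Multiplying the per-query gate cost by the number of $\hamt_{\widetilde{H}_I}$ queries reproduces the stated gate bound, while the $O(\|f\|_{\max,1})$ simulations of $e^{-iA\tau}$ from \lem{interaction-picture-query-complexity} are absorbed into the same asymptotic since each uses only $\widetilde{O}(d\log n)$ gates via the QSFT and diagonal exponentiation.

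The main obstacle will be the careful bookkeeping of the derivative bound on $\widetilde{H}_I(\varsigma)$ and its interaction with the $L$-Lipschitz hypothesis: the term $\tfrac{\d\widetilde{B}}{\d\varsigma}$ involves both $\tfrac{\d B}{\d\tau}$ (bounded by $L$ in max-norm coordinate-wise) and the factor $1/\|B(g^{-1}(\varsigma))\|_{\max}^2$ from differentiating the normalization, and one must verify that the $\varsigma$-averaged operator norm is polynomially controlled so that $\log M$ scales as $\log(L\|f\|_{\max,1}/\epsilon)$ rather than something worse. Once this is done, the rest is a bookkeeping exercise that composes \lem{hamt-query-complexity}, \lem{hamt-hi}, and the Fourier-spectral truncation choice \eqn{truncated-number} to produce exactly the two claimed asymptotic bounds.
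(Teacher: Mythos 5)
Your proposal does not prove the statement you were given; it assumes it. The statement is \lem{hamt-query-complexity} itself, which the paper imports verbatim as Corollary~4 of the interaction-picture simulation framework of Low and Wiebe: an abstract claim that, given only the block-encoding oracle $\hamt_{\widetilde{H}}$ for a time-dependent Hamiltonian $\widetilde{H}(\varsigma)$ with the stated number $M$ of time intervals, the time-ordered evolution can be implemented to error $\epsilon$ with $O\bigl(\|\widetilde{H}\|_{\max,1}\log(\|\widetilde{H}\|_{\max,1}/\epsilon)/\log\log(\|\widetilde{H}\|_{\max,1}/\epsilon)\bigr)$ oracle queries and the quoted primitive-gate count. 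Your very first step is to ``invoke the rescaled Dyson-series machinery of \lem{hamt-query-complexity}'', so as a proof of that lemma the argument is circular. What you actually sketch is, in substance, the paper's proof of \prop{rescaled-interaction-complexity} (and its multi-particle version \thm{many-rescaled-interaction-complexity}): choosing $A=\vect{F}^s_{n,d}\vect{L}_{n,d}(\vect{F}^s_{n,d})^{-1}$ and $B(\tau)=\vect{V}_{n,d}(\tau)$, bounding $\|\widetilde{H}_I\|_{\max,1}\le\|f\|_{\max,1}$, controlling $\|\frac{\d}{\d\varsigma}\widetilde{B}\|$ via the $L$-Lipschitz hypothesis, fixing the truncation $n$ by \eqn{truncated-number}, and composing with \lem{hamt-hi}. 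That is a downstream result that \emph{uses} the lemma, not the lemma itself; note also that the lemma's statement contains no potential $f$, no oracles $U_f$, $O_{\inv}$, $O_{\nrm}$, no Lipschitz constant, and no spectral discretization, so a proof of it cannot legitimately lean on any of those ingredients.

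A genuine proof of the statement would have to reproduce (or explicitly defer to) the truncated-Dyson-series analysis: split $[0,g^{-1}(t)]$ into segments on which the rescaled Hamiltonian has unit-size $\max$-norm integral, expand each segment's time-ordered exponential as a Dyson series truncated at order $K=O\bigl(\log(\|\widetilde{H}\|_{\max,1}/\epsilon)/\log\log(\|\widetilde{H}\|_{\max,1}/\epsilon)\bigr)$, discretize the nested time integrals on the $M$ quadrature points encoded in $\hamt_{\widetilde{H}}$ (this is where the average derivative $\langle\|\frac{\d}{\d\varsigma}\widetilde{H}\|\rangle$ and $\|\widetilde{H}\|_{\max}^2$ enter, fixing $M$ as stated), implement each segment by a linear combination of unitaries with oblivious amplitude amplification using $O(K)$ queries, and account for the $n_a+\log M$ control qubits per query to obtain the primitive-gate bound. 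None of this machinery appears in your write-up. Since the paper itself supplies no internal proof and simply cites \cite[Corollary 4]{low2018hamiltonian}, either a citation or a self-contained rederivation along the above lines would be acceptable; your proposal is neither, though it would serve well, essentially as written, as a proof of \prop{rescaled-interaction-complexity}.
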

	Now we present the proof of \prop{rescaled-interaction-complexity}.
	\begin{proof}
		We follow the simulation method of \cite{low2018hamiltonian}. Whenever $\hamt_{\widetilde{H}_{I}}$ is called to obtain the block-encoding of $\widetilde{H}_{I}(\varsigma)$, since in the interaction picture the simulation time equals $\|f\|_{\max,1}$, we require an $O(\epsilon/\|f_{\max,1}\|)$-approximate $\hamt_{\widetilde{H}_{I}}$ to guarantee that the overall error is bounded by $\epsilon$. By \lem{hamt-hi}, this can be achieved using $O(1)$ queries to oracles $U_{f}$, $O_{\inv}$, $O_{\nrm}$ and $\widetilde{O}\big(z^{2}+\log^{2.5}(\|f\|_{\max,1}/\epsilon)+d\log n\big)$ one- and two-qubit gates.

        Also note that
		\begin{align}		    \|\widetilde{H}_{I}(\varsigma)\|_{\max}\leq\|\widetilde{H}_{I}(\varsigma)\|=\|\widetilde{B}(\varsigma)\|=\|\widetilde{B}(\varsigma)\|_{\max}\ \forall \varsigma,
		\end{align}
		so
		\begin{align}
			\|\widetilde{H}_{I}\|_{\max,1}\leq \|B\|_{\max,1}=\|f\|_{\max,1}.
		\end{align}
		Then by \lem{hamt-query-complexity}, the query complexity to oracles $U_{f}$, $O_{\inv}$ and $O_{\nrm}$ is
		\begin{align}
			O\Big(\|f\|_{\max,1}\frac{\log(\|f\|_{\max,1}/\epsilon)}{\log\log(\|f\|_{\max,1}/\epsilon)}\Big).
		\end{align}

		We have
		\begin{align}
			\frac{\d \widetilde{H}_{I}}{\d \varsigma}=e^{iAg^{-1}(\varsigma)}\frac{\d\widetilde{B}}{\d\varsigma}e^{-iAg^{-1}(\varsigma)}+i\cdot\frac{\d g^{-1}}{\d\varsigma}(A e^{iAg^{-1}(\varsigma)}\widetilde{B}e^{-iAg^{-1}(\varsigma)}+e^{iAg^{-1}(\varsigma)}\widetilde{B}Ae^{-iAg^{-1}(\varsigma)}),
		\end{align}
		and since $A=-\nabla^{2}$ is a Hermitian operator, we can further deduce that
		\begin{align}
			\frac{\d \widetilde{H}}{\d \varsigma}=e^{iAg^{-1}(\varsigma)}\frac{\d\widetilde{B}}{\d\varsigma}e^{-iAg^{-1}(\varsigma)},
		\end{align}
        where
		\begin{align}
			\Big\|\frac{\d \widetilde{H}_{I}}{\d \varsigma}\Big\|=\Big\|\frac{\d\widetilde{B}}{\d\varsigma}\Big\|=O\Big(\Big\|\frac{\d B}{\d t}\Big\|\Big)=O(L).
		\end{align}
		Hence, the number of one- and two-qubit gates needed in this procedure is
		\begin{align}
			&O\Big(\|f\|_{\max,1}\Big(n_{a}+\log(L\|f\|_{\max,1}/\epsilon)\Big)\frac{\log(\|f\|_{\max,1}/\epsilon)}{\log\log(\|f\|_{\max,1}/\epsilon)}\Big)\nonumber\\
	    	&\qquad+O\Big(\|f\|_{\max,1}(z^{2}+\log^{2.5}(\|f\|_{\max,1}/\epsilon)+d\log n)\cdot\frac{\log(\|f\|_{\max,1}/\epsilon)}{\log\log(\|f\|_{\max,1}/\epsilon)}\Big).
		\end{align}
		This expression uses the fact that each primitive gate can be implemented using $O(1)$ one- and two-qubit gates. By the proof of {\cite[Lemma 8]{low2018hamiltonian}}, $n_{a}$ can be upper bounded by $\poly(z)$. Furthermore, by \eqn{truncated-number}, the truncation parameter is $n = 2\biggl\lceil\frac{\log(\omega)}{\log(\log(\omega))}\biggr\rceil$ for $\omega=\frac{8g'}{\pi\epsilon}$. Therefore, the gate complexity can be expressed as
		\begin{align}
		\hspace{-4mm} O\bigg(\|f\|_{\max,1}\Big(\poly(z)+\log^{2.5}\Big(\frac{L\|f\|_{\max,1}}{\epsilon}\Big)+d\log \Big(\frac{\log(g'/\epsilon)}{\log\log(g'/\epsilon)}\Big)\Big)\frac{\log(\|f\|_{\max,1}/\epsilon)}{\log\log(\|f\|_{\max,1}/\epsilon)}\bigg)\!\!
		\end{align}
		as claimed.
	\end{proof}


    \subsection{Generalization to multi-particle systems}

    We now generalize \prop{rescaled-interaction-complexity} to the case of a fixed number of particles $\eta$ in $d$ dimensions interacting through a potential function $f(\vect{x})$. As in \sec{multi-particles}, here $\vect{x}\in\mathbb{R}^{\eta d}$ represents the positions of the particles and the evolution of the wave function $\Phi(\vect{x},t)$ follows the time-dependent multi-particle Schr{\"o}dinger equation \eqn{many-Schrodinger-dep}. In this subsection, we suppose the oracles $U_{f}$, $O_{\inv}$, $O_{\nrm}$ defined in \lem{U-tilde-B} are still available in this multi-particle scenario.
    Using \prop{rescaled-interaction-complexity}, the complexity of simulating \eqn{many-Schrodinger-dep} in the interaction picture is as follows.

    \begin{theorem}\label{thm:many-rescaled-interaction-complexity}
	For $\tau\in[0,T]$, let $H(\tau)=-\nabla^{2}+B(\tau)$ be a discretized multi-particle real-space Hamiltonian with $B(\tau)$ a diagonal matrix whose diagonal elements are specified by the potential function $f(\vect{x},\tau)$. Suppose that $f\colon\R^{\eta d}\times\R\to\R$ is a positive, continuously differentiable function, $L$-Lipschitz in terms of $\tau$, and can be computed with $z$ bits of precision. Then $H$ can be simulated for time $T$ with accuracy $\epsilon$ with the following cost:
	\begin{enumerate}
		\item Queries to $U_{f}$, $O_{\inv}$, $O_{\nrm}$: $O\Big(\|f\|_{\max,1}\frac{\log(\|f\|_{\max,1}/\epsilon)}{\log\log(\|f\|_{\max,1}/\epsilon)}\Big)$ where $\|\cdot\|_{\max,1}$ is defined in \eqn{max-1} and the oracles are defined in \lem{U-tilde-B};
		\item One- and two-qubit gates:\\
		$O\bigg(\|f\|_{\max,1}\Big(\poly(z)+\log^{2.5}(L\|f\|_{\max,1}/\epsilon)+\eta d\log \Big(\frac{\log(g'/\epsilon)}{\log\log(g'/\epsilon)}\Big)\Big)\frac{\log(\|f\|_{\max,1}/\epsilon)}{\log\log(\|f\|_{\max,1}/\epsilon)}\bigg)$;
	\end{enumerate}
	where $g' =  \max_t\|\Phi^{(n/2)}(\cdot,t)\|_{L^1}$ as defined in \eqn{gprime}.
	\end{theorem}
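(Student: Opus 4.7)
The plan is to adapt the proof of \prop{rescaled-interaction-complexity} to the multi-particle setting by observing that an $\eta$-particle system in $d$ spatial dimensions is formally a single-particle Schr{\"o}dinger equation on the $\eta d$-dimensional configuration space $[0,1]^{\eta d}$. In particular, the Fourier spectral discretization of \sec{fourier-spectral}, applied with $\eta d$ taking the role of the dimension parameter, produces a discretized Hamiltonian of the form $H(\tau) = \vect{F}^s_{n,\eta d}\vect{L}_{n,\eta d}(\vect{F}^s_{n,\eta d})^{-1} + \vect{V}_{n,\eta d}(\tau)$ that fits the template $H(\tau) = A + B(\tau)$ required by the interaction-picture machinery of \sec{time-dep}. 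Since $B(\tau)$ is diagonal with entries given by $f(\vect{\chi}_{\vect{l}},\tau)$, the evaluation oracle $U_f$ for the multi-particle potential directly provides the same block-encoding input model used in the single-particle case.

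First, I would verify that the rescaled interaction-picture construction carries over unchanged: the rescaled operator $\widetilde{B}(\varsigma) = B(g^{-1}(\varsigma))/\|B(g^{-1}(\varsigma))\|_{\max}$ and the interaction Hamiltonian $\widetilde{H}_I(\varsigma) = e^{iA g^{-1}(\varsigma)}\widetilde{B}(\varsigma) e^{-i A g^{-1}(\varsigma)}$ still satisfy $\|\widetilde{H}_I(\varsigma)\|_{\max} \le 1$, and the $L^1$-norm bound $\|\widetilde{H}_I\|_{\max,1} \le \|f\|_{\max,1}$ is independent of the configuration-space dimension. The block-encoding of $\widetilde{B}$ via \lem{U-tilde-B} requires only $O(1)$ queries to $U_f,O_{\inv},O_{\nrm}$ and $O(z^2)$ gates, again independent of $\eta d$; the oracles $O_{\inv}$ and $O_{\nrm}$ are assumed available in the multi-particle setting by hypothesis.

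Second, I would re-run the complexity accounting of \lem{hamt-hi} and \lem{interaction-picture-query-complexity} with the only substitution being $d \mapsto \eta d$ in the cost of implementing $e^{-i A \tau}$. Concretely, $\vect{F}^s_{n,\eta d}$ decomposes as a tensor product of $\eta d$ one-dimensional QSFTs, each costing $O(\log n \log\log n)$ gates, so the total cost of simulating $e^{-i A \tau} = \vect{F}^s_{n,\eta d} e^{-i\vect{L}_{n,\eta d} \tau}(\vect{F}^s_{n,\eta d})^{-1}$ is $\widetilde{O}(\eta d \log n)$. Plugging into the bookkeeping already done for \prop{rescaled-interaction-complexity} replaces the $d \log(\cdot)$ term inside the parenthesized factor by $\eta d \log(\cdot)$, while leaving the query complexity untouched, and \eqn{truncated-number} still gives $n = \Theta(\log(g'/\epsilon)/\log\log(g'/\epsilon))$.

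I do not expect significant technical obstacles: the argument is a careful but essentially mechanical bookkeeping exercise. The only point that warrants a brief check is that the spectral-method error bound of \lem{pseudospectral-estimate-lemma}, together with the normalization convention \eqn{conservation-norm} and the choice \eqn{truncated-number} of $n$, continues to control the $\ell_2$ discretization error in the $\eta d$-dimensional configuration space when the parameter $g'$ is interpreted (as in \eqn{gprime}) with respect to the multi-particle wave function. Once this is noted, the resulting query and gate counts match the statement of \thm{many-rescaled-interaction-complexity}, completing the proof.
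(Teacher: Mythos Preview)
Your proposal is correct and follows exactly the paper's approach: the paper's proof consists of the single observation that it suffices to replace $d$ by $\eta d$ in the proof of \prop{rescaled-interaction-complexity} and to choose the truncation parameter $n$ as in \eqn{truncated-number}. You have simply unpacked this substitution in more detail, correctly noting that the query complexity is dimension-independent while the $d\log n$ gate term becomes $\eta d \log n$.
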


	\begin{proof}
	    It suffices to replace $d$ by $\eta d$ in the proof of \prop{rescaled-interaction-complexity} and choose the value of the truncation parameter $n$ as in \eqn{truncated-number} to obtain the complexity for general case.
	\end{proof}
	
	We now discuss the cost of simulating an $\eta$-electron system under the modified Coulomb potential $f(\vect{x})$ defined in \eqn{modified-Coulomb} using the approach of \thm{many-rescaled-interaction-complexity}. To give a complete algorithm, we must implement the evaluation oracle $U_f$ for the modified Coulomb potential. The most straightforward implementation directly computes the sum of $\eta(\eta-1)/2$ pairwise Coulomb interactions, giving gate complexity $O(\eta^2)$. However, more advanced numerical techniques for $\eta$-body problems such as the Barnes-Hut algorithm~\cite{barnes1986hierarchical} and the fast multipole method~\cite{greengard1987fast} can evaluate the $\eta$-particle Coulomb potential faster, reducing the cost to linear in $\eta$ when the dimension $d$ is a constant.

	In particular, for $d=3$ the Barnes-Hut algorithm~\cite{barnes1986hierarchical} proceeds as follows. We divide the unit cube into cubic cells in an octree structure of height $h$ until each cell contains at most one particle. We then compute the pairwise Coulomb interactions between particles in nearby cells. Finally, we approximate the remaining Coulomb interactions between particles in distant cells by treating nearby particles as a single large particle located at their geometric center. The cost of this approach is as follows.
	
	\begin{lemma}[\cite{sarin1998analyzing}]\label{lem:BH-tree}
	    For a system of $\eta$ particles in a three-dimensional unit cube,
	    the Barnes-Hut algorithm~\cite{barnes1986hierarchical} of height $h$ approximates the total Coulomb potential to a constant accuracy in time $O(\eta h)$. Adopting the fast multipole method~\cite{greengard1987fast} as a subroutine to estimate the Coulomb interaction between well-separated clusters of particles, the reulting multipole-based Barnes-Hut algorithm approximates the total Coulomb potential to accuracy $\epsilon$ in time $O(\eta h\log^2(1/\epsilon))$.
	\end{lemma}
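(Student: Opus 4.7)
The plan is to follow the classical analysis of Barnes--Hut and its fast-multipole-method refinement for three-dimensional $\eta$-body problems. First, I would build the octree by recursively subdividing the unit cube into eight octants until every leaf contains at most one particle; by assumption this tree has height $h$, and inserting all $\eta$ particles costs $O(\eta h)$ time since each insertion descends at most $h$ levels.

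Next, I would analyze the cost of evaluating the total potential per particle. For a fixed particle $p$, I would walk down the octree and, at each level, classify cells relative to $p$ using a geometric multipole-acceptance criterion (the ratio of a cell's side length to its distance from $p$ is below a fixed threshold). A standard charging argument, carried out carefully in~\cite{sarin1998analyzing}, shows that only $O(1)$ cells at each level fail the criterion and must either be descended into (if internal) or handled by direct pairwise interaction (if a leaf containing a particle); all other cells are summarized by a single cluster approximation. Summed over the $h$ levels, this gives $O(h)$ work per particle, hence $O(\eta h)$ in total, establishing the constant-accuracy bound.

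For the refined bound, I would replace the crude center-of-mass approximation of each well-separated cluster by a truncated multipole expansion of order $p$, as in the fast multipole method~\cite{greengard1987fast}. The standard FMM error analysis shows that in three dimensions the truncation error decays geometrically in $p$ outside the acceptance region, so taking $p = O(\log(1/\epsilon))$ guarantees accuracy $\epsilon$; evaluating the expansion costs $O(p^2) = O(\log^2(1/\epsilon))$ arithmetic operations per cluster. Combining this with the per-level cell count yields $O(h \log^2(1/\epsilon))$ work per particle, hence $O(\eta h \log^2(1/\epsilon))$ in total, as claimed.

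The main obstacle is the charging argument bounding the number of cells encountered at each tree level. For worst-case particle distributions one must carefully account for the interplay of the multipole acceptance criterion, the octree structure, and the possibility of densely clustered particles; this is precisely the nontrivial contribution of~\cite{sarin1998analyzing}. The FMM error bound is classical but relies on the analyticity of the Coulomb kernel away from its singularity, which similarly requires care but is standard material.
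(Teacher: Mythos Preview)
The paper does not prove this lemma: it is stated as a citation of \cite{sarin1998analyzing} (and implicitly \cite{barnes1986hierarchical,greengard1987fast}) with no proof environment following it, and is then used as a black box in the proof of \lem{fast-multipole}. So there is no ``paper's own proof'' to compare against.

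Your sketch is a faithful outline of the classical argument underlying the cited result: build the octree in $O(\eta h)$, use the multipole-acceptance criterion so that only $O(1)$ cells per level survive the traversal for each particle (this is indeed the content of \cite{sarin1998analyzing}), and replace center-of-mass approximations by order-$p$ multipole expansions with $p=O(\log(1/\epsilon))$ at cost $O(p^2)$ each. This matches how the paper later unpacks and generalizes the algorithm in the proof of \lem{fast-multipole}, where the same ingredients (octree of bounded height, $O(1)$ cells per level via the $l/D$ threshold, and $p$th-order multipole cost $O(p^2)$ from \lem{multipole-error}) reappear explicitly for the $d$-dimensional case. Nothing is missing from your plan for the purpose of recovering the cited statement.
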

	In particular, the error introduced by the fast multipole subroutine is bounded as follows.
	
	\begin{lemma}[\cite{greengard1987fast}]\label{lem:multipole-error}
	    Suppose that $k$ particles of charge $q_j$ for $j \in \{1,\ldots,k\}$ are located within a sphere centered at $0$ with radius $r_s$. Then after a reusable preprocessing step with time complexity $O(k)$, for any point $P$ at a distance $r>r_s$ from the origin, the $p$th-order fast multipole method approximates the Coulomb potential $\phi(P)$ to accuracy $\frac{Q}{r-r_s}\big(\frac{r_s}{r}\big)^{p+1}$ in time $O(p^2)$, where $Q:=\sum_{j=1}^k|q_j|$.
	\end{lemma}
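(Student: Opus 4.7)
The plan is to derive the bound from the classical spherical-harmonic multipole expansion of the Coulomb kernel. Place the origin at the center of the source sphere, and write each source as $P_j=(r_j,\theta_j,\varphi_j)$ with $r_j\le r_s$ and the target as $P=(r,\theta,\varphi)$ with $r>r_s$. The starting identity is the absolutely convergent expansion
\[
\frac{1}{|P-P_j|}=\sum_{n=0}^{\infty}\sum_{m=-n}^{n}\frac{r_j^{n}}{r^{n+1}}\,Y_n^m(\theta_j,\varphi_j)\,\overline{Y_n^m(\theta,\varphi)}.
\]
Summing against the charges gives the multipole representation
\[
\phi(P)=\sum_{n=0}^{\infty}\sum_{m=-n}^{n}\frac{M_n^m}{r^{n+1}}\,\overline{Y_n^m(\theta,\varphi)},\qquad M_n^m:=\sum_{j=1}^k q_j\,r_j^n\,Y_n^m(\theta_j,\varphi_j),
\]
and the $p$th-order FMM approximation $\tilde{\phi}(P)$ is the truncation of this series to $n\le p$.

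For the cost, the preprocessing step evaluates and stores the $(p+1)^2$ multipole coefficients $\{M_n^m\}_{0\le n\le p,\,|m|\le n}$. Each coefficient is a sum of $k$ terms, so the total preprocessing time is $O(kp^2)$, which is $O(k)$ when $p$ is treated as a constant as in the statement. At evaluation time one computes the $O(p^2)$ needed spherical harmonics $\overline{Y_n^m(\theta,\varphi)}$ by a standard three-term recurrence in $n$ and then assembles $\tilde{\phi}(P)$ in $O(p^2)$ arithmetic operations, giving the claimed $O(p^2)$ query cost.

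For the error bound, the key estimate is $|M_n^m|\le Q\,r_s^n$, which follows from $r_j\le r_s$ together with an appropriate normalization of the $Y_n^m$. To obtain the sharp constant $Q/(r-r_s)$ in the lemma, rather than summing $|Y_n^m|$ termwise over $m$ I would invoke the addition theorem
\[
\sum_{m=-n}^n Y_n^m(\theta_j,\varphi_j)\,\overline{Y_n^m(\theta,\varphi)}=\frac{2n+1}{4\pi}\,P_n(\cos\gamma_j),
\]
where $\gamma_j$ is the angle between $P$ and $P_j$, and use $|P_n(\cos\gamma_j)|\le 1$. This collapses the $m$-sum at each order $n$ to a single bounded Legendre factor, after which the tail beyond order $p$ reduces to a pure geometric series:
\[
|\phi(P)-\tilde{\phi}(P)|\le\sum_{n=p+1}^{\infty}\frac{Q\,r_s^n}{r^{n+1}}=\frac{Q}{r}\cdot\frac{(r_s/r)^{p+1}}{1-r_s/r}=\frac{Q}{r-r_s}\Bigl(\frac{r_s}{r}\Bigr)^{p+1},
\]
which is exactly the advertised accuracy.

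The main obstacle is not any single calculation but the bookkeeping needed to turn schematic harmonic bounds into the clean constant $1/(r-r_s)$; invoking the addition theorem at each order $n$ is what makes the geometric series close in the desired form rather than producing an extraneous factor depending on $n$. All remaining ingredients, namely the shifted and scaled spherical-harmonic recurrences for fast evaluation, the multipole-coefficient storage, and the rigorous justification of absolute convergence under $r>r_s$, are standard and can be imported wholesale from \cite{greengard1987fast}.
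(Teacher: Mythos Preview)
The paper does not prove this lemma; it is quoted directly from \cite{greengard1987fast} and used as a black box in the proof of \lem{fast-multipole}. Your argument is essentially the standard derivation from that reference (their Theorem~2.1): expand $1/|P-P_j|$ in Legendre polynomials via the generating function, use $|P_n(\cos\gamma_j)|\le 1$, and sum the resulting geometric tail. That is correct and matches the original source.

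Two small points. First, your spherical-harmonic expansion and your addition theorem use inconsistent normalizations: with orthonormal $Y_n^m$ the kernel expansion carries a factor $4\pi/(2n+1)$, which your stated addition theorem would then cancel. This does not affect the final bound, but the cleanest route for the error estimate is to bypass spherical harmonics entirely and work directly with the Legendre expansion $\sum_n (r_j^n/r^{n+1})P_n(\cos\gamma_j)$; the $Y_n^m$ are only needed to separate angular variables for the algorithmic $O(p^2)$ evaluation. Second, your reading of the $O(k)$ preprocessing cost as ``$O(kp^2)$ with $p$ suppressed'' is the correct interpretation, and indeed the paper's subsequent use in \lem{fast-multipole} tracks the $p^2$ factor explicitly.
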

	
	This method can be generalized to approximate the $d$-dimensional modified Coulomb potential defined in \lem{U-tilde-B}. We analyze the complexity of this method in a random-access memory (RAM) model, which allows for fast retrieval of the information stored in the tree data structure. In particular, to use this algorithm on a quantum computer, we work in the quantum RAM (QRAM) model, in which we can perform a memory access gate
	\begin{align}\label{eqn:qram}
	  \ket{j}\ket{y}\ket{x_1,\ldots,x_m} \mapsto \ket{j}\ket{y \oplus x_j}\ket{x_1,\ldots,x_m}
	\end{align}
	at unit cost. Note that implementing this operation with elementary two-qubit gates requires $\Omega(m)$ overhead \cite[Theorem 4]{Bea13}, so the cost of the algorithm described below would be larger by a factor of $\eta$ in the standard quantum circuit model, as its tree data structure occupies $m=\Theta(\eta)$ qubits.
    We leave it as an open question whether similar performance can be achieved without QRAM.
	
	\begin{lemma}\label{lem:fast-multipole}
	    Under the setting of \thm{many-rescaled-interaction-complexity}, the modified Coulomb potential evaluation oracle $U_f$ defined in \lem{U-tilde-B} can be approximately implemented
	    using $O(\eta(4 d)^{d/2}\log^3(d/\Delta))$ one- and two-qubit gates and $O(\eta(4 d)^{d/2}\log(1/\Delta))$ QRAM operations.
	    The error of this evaluation is of the same order as the error due to the difference between the modified Coulomb potential and the actual Coulomb potential.
	\end{lemma}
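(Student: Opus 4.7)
The plan is to generalize the three-dimensional Barnes–Hut construction (together with its fast multipole subroutine) to general $d$ by working with a $2^d$-ary spatial tree stored in QRAM, with careful accounting of how many neighbouring cells must be treated directly at each level. The $(4d)^{d/2}$ factor will emerge from enumerating, at each level, the sub-cells that lie too close to the target point to admit a multipole approximation.

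First I would build a $2^d$-ary tree of depth $h = O(\log(1/\Delta))$ covering the hypercube $[0,1]^d$, so that a cell at depth $\ell$ has side $2^{-\ell}$ and the leaves have side of order $\Delta$; refining further is unnecessary because the regularizer $\Delta$ already smears the pairwise interaction on that scale. Each node stores the total charge, geometric center, and a truncated multipole expansion of order $p$ for the particles in its subtree. Using the $O(k)$ preprocessing step from \lem{multipole-error} applied level by level, these quantities can be assembled in a bottom-up sweep with $O(\eta h)$ QRAM accesses and $O(\eta h p)$ gates, which is dominated by the evaluation cost below.

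Second, for each of the $\eta$ target particles I would traverse the tree, classifying a cell of side $s$ at each level as well-separated if the target lies outside the ball of radius $2 r_s$ around the cell center, where $r_s = s\sqrt{d}/2$ is the radius of the ball circumscribing the cell. Well-separated cells are evaluated by the order-$p$ multipole expansion at cost $O(p^2)$ gates each (by \lem{multipole-error}), while non-separated cells are refined, with direct pairwise sums at the leaves. A direct lattice-point count shows that the number of cells of side $s$ intersecting a ball of radius $s\sqrt{d}$ is at most $(2\sqrt{d}+1)^d = O((4d)^{d/2})$, so the traversal visits $O((4d)^{d/2})$ cells at each of the $h$ levels. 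Choosing $p = O(\log(1/\Delta))$ drives each multipole error to $O(2^{-p}) = O(\Delta)$, so that by \lem{multipole-error} the aggregate evaluation error matches, in order, the intrinsic error between the modified Coulomb potential and the actual Coulomb potential. Combining the per-particle cost of $O((4d)^{d/2}\log^3(d/\Delta))$ gates (including polylogarithmic precision overhead in the arithmetic) and $O((4d)^{d/2}\log(1/\Delta))$ QRAM lookups, and multiplying by $\eta$, yields the stated bounds; the result is written coherently into the output register of $U_f$ and the ancillas are uncomputed.

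The main obstacle is the $d$-dependent geometric counting that produces the $(4d)^{d/2}$ factor: one must recast the Barnes–Hut separation criterion in terms of the $d$-dimensional cell diameter $s\sqrt{d}$ rather than the side $s$, and then verify the lattice count $(2\sqrt{d}+1)^d$, showing that the exponent $d/2$ comes precisely from $(\sqrt{d})^d$. A secondary technical point is that \lem{multipole-error} is stated for three-dimensional spherical multipoles, so one should either cite a higher-dimensional analogue or verify directly that a Taylor expansion of $1/\sqrt{r^2+\Delta^2}$ about each cell center gives the same $(r_s/r)^{p+1}$ tail, with any polynomial-in-$d$ number of expansion terms absorbed into the polylogarithmic overhead already present in the bound.
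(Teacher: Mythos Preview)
Your proposal is correct and follows essentially the same approach as the paper: both build a $2^d$-ary spatial tree of depth $O(\log(1/\Delta))$, use the separation criterion $l/D \lesssim 1/\sqrt{d}$ (equivalently, ``outside the ball of radius $\approx s\sqrt{d}$''), obtain the $(4d)^{d/2}$ factor by counting the $O((2\sqrt{d})^d)$ non-separated cells per level, and choose the multipole order $p=O(\log(d/\Delta))$ so that the $(r_s/r)^{p+1}$ tail matches the intrinsic $\Delta$-regularization error. The paper differs only in minor bookkeeping---it stops subdivision at side $\delta=\Delta^2/(D_{\max}\sqrt d)$ rather than $\Delta$, and carries out the error comparison between the multipole tail and the modified-versus-actual Coulomb discrepancy explicitly rather than asserting it---but the argument is the same, and the two obstacles you flag (the lattice count and the $d$-dimensional multipole bound) are exactly the places where the paper also does the work.
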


    \begin{proof}
We implement $U_f$ using a straightforward generalization of the multipole-based Barnes-Hut algorithm to the $d$-dimensional case. Specifically, we divide the simulation region into hypercubic cells in a tree structure, so that the out-degree of each node is $2^d$. The subdivision stops when each cell contains at most one particle or its length is at most $\delta:=\Delta^2/(D_{\max}\sqrt{d})$, where $D_{\max}$ is the maximum possible distance of two particles in the simulation region (we call such a cell a \emph{leaf cell}). Since our simulation region is a $d$-dimensional unit hypercube, we have $D_{\max}\le\sqrt{d}$.

If multiple particles are in the same leaf cell, we move them to the center of the cell (a distance of at most $\delta\sqrt{d}$) and calculate their Coulomb interactions. We now discuss the relative error caused by this movement. For two particles with charges $q_1,q_2$ and distance $D$, the error of this step is at most
        \begin{align}
            \frac{q_1q_2}{\sqrt{D^2+\Delta^2}}
            -\frac{q_1q_2}{\sqrt{(D+\delta\sqrt{d})^2+\Delta^2}}\leq\frac{q_1q_2}{\sqrt{D^2+\Delta^2}}-\frac{q_1q_2}{\sqrt{D^2+\Delta^2+2D\delta\sqrt{d}}}.
        \end{align}
        On the other hand, the error between the actual Coulomb potential and the modified Coulomb potential defined in \eqn{modified-Coulomb} is
        \begin{align}
            \frac{q_1q_2}{D}-\frac{q_1q_2}{\sqrt{D^2+\Delta^2}}.
        \end{align}
        Since our simulation region is a unit hypercube, we have $D\leq\sqrt{d}$, so the error introduced by the truncation is at most
        \begin{align}
            \frac{q_1q_2}{\sqrt{D^2+\Delta^2}}-\frac{q_1q_2}{\sqrt{D^2+3\Delta^2}},
        \end{align}
        which is of the same order as the error between the modified Coulomb potential and the actual Coulomb potential.

During the process of subdividing the simulation region, we construct and store the corresponding tree structure in the following way. For each node in the tree, we perform the preprocessing step of \lem{BH-tree} (generalized to $d$ dimensions), store the geometric center of all the electrons in the corresponding cell, and maintain $2^d$ pointers storing the memory locations of its children (for use with subsequent QRAM operations). Since the height of the tree is at most $\log(1/\Delta)$, the tree structure can be constructed with gate complexity $O(2^d\eta\log(1/\Delta))$. We assume that our QRAM operates on the entire memory space in which the tree data structure is computed, so we do not require separate QRAM writing operations during the construction of the tree.

As in \lem{BH-tree}, we can use this tree structure to approximate the Coulomb potential on any particle $P$ to any desired accuracy $\epsilon$ in a recursive way. We maintain a set of nodes $\mathcal{S}$ during the recursion, where all the cells corresponding to the nodes in $\mathcal{S}$ have the same size. Initially, we set $\mathcal{S}$ to contain only the root cell. At each recursive step, we create a set $\mathcal{S}'$ to store the children of all the nodes in $\mathcal{S}$ and set $\mathcal{S}$ to $\emptyset$. Then we enumerate through all the nodes in $\mathcal{S}'$. Let $l$ denote the length of the cell currently being processed and $D$ the distance from the geometric center of the cell to $P$. If we reach a leaf cell, we include the Coulomb interaction between the cluster of particles in this cell and $P$ by computing exact pairwise interactions. If we reach a cell with $l/D<\frac{1}{2\sqrt{d}}$, we include the Coulomb interaction between the cluster of particles in this cell and $P$ by using the $p$th-order fast multipole approximation. Since in the latter case all the particles of this cell are within a hyperball with radius $l\sqrt{d}/2$, the error of the $p$th-order fast multipole method is at most
        \begin{align}
            \frac{Q}{D-l\sqrt{d}/2}\Big(\frac{l\sqrt{d}}{2D}\Big)^{p+1},
        \end{align}
        where $Q$ denotes the absolute sum of all the charges in this cell.
        Choosing $p=\log(d/\Delta^2)=O(\log(d/\Delta))$, the cumulative error of the fast-multipole approximations is $O(Q\Delta^2/Dd)$. This is of the same order as the error of the modified Coulomb potential, which is at least of order
        \begin{align}
            \Omega\Big(\frac{Q}{D}-\frac{Q}{\sqrt{D^2+\Delta^2}}\Big)
            =\Omega\Big(\frac{Q\Delta^2}{Dd}\Big)
        \end{align}
        (since $D\leq\sqrt{d}$ for a $d$-dimensional hypercubic simulation region).
        If we did not reach a leaf cell and $l/D\geq\frac{1}{2\sqrt{d}}$, we add the corresponding node into the set $\mathcal{S}$. The recursive process stops when both $\mathcal{S}$ and $\mathcal{S}'$ are empty.

        Observe that in any recursive step, the set $\mathcal{S}$ essentially stores all the cells of some specific size $l$ that are either not leaf cells or that have distances $D$ to $P$ that are not far enough (in particular, those with $D\leq\frac{1}{2l\sqrt{d}}$). Hence, during the recursive process, the size of $\mathcal{S}$ can be upper bounded by $O(d^{d/2})$, whereas the size of $\mathcal{S}'$ can be upper bounded by $O((4d)^{d/2})$. Thus, there are in total $O((4d)^{d/2})$ lookup operations to the tree structure, or equivalently, $O((4d)^{d/2})$ queries to the QRAM.

        Moreover, since the height of the tree is at most $O(\log(1/\Delta))$, the number of recursive steps is at most $O(\log(1/\Delta))$. Thus the gate complexity for evaluating the Coulomb potential for one particle is
        \begin{align}
            O((4d)^{d/2} p^2 \log(1/\Delta))=O((4d)^{d/2}\log^3(d/\Delta)).
        \end{align}
        Therefore, the overall complexity is $O(\eta (4 d)^{d/2}\log^3(d/\Delta))$, and the overall number of QRAM queries is $O(\eta (4 d)^{d/2}\log(1/\Delta))$, as claimed.
            \end{proof}

As discussed above, the evaluation oracle $U_f$ for the modified Coulomb potential can be implemented via two different approaches. First, by directly evaluating all $\eta(\eta-1)/2$ pairwise interactions, $U_f$ can be implemented with $O(\eta^2)$ one- and two- qubit gates. Alternatively, using \lem{fast-multipole}, the cost of implementing $U_f$ by the fast multipole method is $O(\eta (4 d)^{d/2}\log^3(d/\Delta))$ one- and two- qubit gates along with $O(\eta (4 d)^{d/2}\log(1/\Delta))$ QRAM operations. Consequently, we have the following.\footnote{Appendix K of \cite{su2021fault} develops an approximate block-encoding of the modified Coulomb potential that avoids directly evaluating the potential function. That method also provides a simulation with cubic dependence on $\eta$ by using LCU techniques~\cite{childs2012hamiltonian}. Here we focus on an approach that evaluates the modified Coulomb potential explicitly.}

    \begin{corollary}\label{cor:interaction-coulomb}
    Consider the setting of \thm{many-rescaled-interaction-complexity} where $f(\vect{x})$ is the modified Coulomb potential defined in \eqn{modified-Coulomb} for particles of fixed charge (i.e., $q_i$ is independent of $i$). Then the $\eta$-particle Hamiltonian can be simulated for time $T$ with accuracy $\epsilon$ with either of the following costs:
\begin{enumerate}
\item (Direct evaluation)
$\eta^{3}(d+\eta)T\poly(\log(\eta dTg'/(\Delta\epsilon)))/\Delta$ one- and two-qubit gates, or
        \item (Fast multipole method)
        $\eta^{3}(4d)^{d/2}T\poly(\log(\eta dTg'/(\Delta\epsilon)))/\Delta$ one- and two-qubit gates and QRAM operations, if $\Delta$ is chosen small enough that the intrinsic simulation error due to the difference between the actual Coulomb potential and the modified Coulomb potential is $O(\epsilon)$,
    \end{enumerate}
where $g':=\max_t \|\Phi^{(n/2)}(\cdot,t)\|_{L^1}$ as defined in \eqn{gprime}.
    \end{corollary}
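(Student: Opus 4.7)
The plan is to specialize \thm{many-rescaled-interaction-complexity} to the modified Coulomb potential \eqn{modified-Coulomb} with uniform charges and then multiply through by the cost of implementing the evaluation oracle $U_{f}$ under each of the two strategies.

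First I would bound the parameters appearing in \thm{many-rescaled-interaction-complexity}. Since $f(\vect{x})$ is time-independent, $\|f\|_{\max,1} = T\|f\|_{\max}$, and by \eqn{modified-Coulomb-bound} with $q_i = q$ constant we have $\|f\|_{\max} = O(\eta^{2}/\Delta)$, so $\|f\|_{\max,1} = O(\eta^{2} T/\Delta)$. The Lipschitz constant is $L = 0$ and the bit precision needed is $z = \poly(\log(\eta dTg'/(\Delta\epsilon)))$. Because the potential does not depend on time, $g(t) = t\|f\|_{\max}$ so the oracles $O_{\inv}$ and $O_{\nrm}$ reduce to trivial arithmetic on a single scalar and can be implemented at polylogarithmic cost. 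Substituting into \thm{many-rescaled-interaction-complexity} yields $\widetilde{O}(\eta^{2} T/\Delta)$ queries to $U_{f}$ and an intrinsic one- and two-qubit gate count of $\widetilde{O}(\eta^{3} d T/\Delta)$, where the $\eta d$ factor inside the parenthesis multiplies the $\|f\|_{\max,1}$ prefactor.

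Next I would account separately for the cost per query to $U_{f}$ under each evaluation strategy and combine with the intrinsic overhead. For direct evaluation, $U_{f}$ is implemented coherently as a sum of all $\binom{\eta}{2}$ pairwise interactions, costing $O(\eta^{2})$ one- and two-qubit gates per query; multiplying by the query count and adding the intrinsic overhead gives
\begin{align}
\widetilde{O}(\eta^{2} T/\Delta)\cdot O(\eta^{2}) + \widetilde{O}(\eta^{3} d T/\Delta) = \eta^{3}(d+\eta)T\poly(\log(\eta dTg'/(\Delta\epsilon)))/\Delta,
\end{align}
matching the first claim. For the fast multipole variant, I invoke \lem{fast-multipole}, which implements $U_{f}$ using $\widetilde{O}(\eta(4d)^{d/2})$ one- and two-qubit gates together with the same order of QRAM operations per query. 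Multiplication yields $\eta^{3}(4d)^{d/2}T\poly(\log(\eta dTg'/(\Delta\epsilon)))/\Delta$ gates and QRAM operations, which dominates the intrinsic $\widetilde{O}(\eta^{3}dT/\Delta)$ overhead since $(4d)^{d/2} \ge d$ for $d \ge 1$.

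The main subtlety is controlling the extra error introduced by the fast multipole truncation, since in contrast to direct evaluation the oracle $U_{f}$ is only approximate. By construction in \lem{fast-multipole}, this error is of the same order as the difference between the modified Coulomb potential and the actual Coulomb potential. To prevent this from degrading the overall simulation accuracy, I would invoke the hypothesis of the second statement that $\Delta$ is chosen small enough that the intrinsic modification error is $O(\epsilon)$; this absorbs the fast multipole error into the total error budget without changing the asymptotic scaling. For direct evaluation no such hypothesis is needed, since the pairwise sum is exact up to the bit-precision $z$, which is already polylogarithmic. A final step is to verify that the precision with which $g^{-1}$ and $\|f\|_{\max}$ must be computed in $O_{\inv}$ and $O_{\nrm}$ is only polylogarithmic, which is immediate in the time-independent setting, so these oracles contribute no higher-order cost.
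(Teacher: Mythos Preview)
Your proposal is correct and follows essentially the same approach as the paper: specialize \thm{many-rescaled-interaction-complexity} using $\|f\|_{\max,1}=O(\eta^{2}T/\Delta)$ from \eqn{modified-Coulomb-bound}, then multiply the resulting $\widetilde{O}(\eta^{2}T/\Delta)$ query count by the per-query cost of $U_{f}$ ($O(\eta^{2})$ for direct evaluation, or the bound from \lem{fast-multipole}) and add the intrinsic $\widetilde{O}(\eta^{3}dT/\Delta)$ gate overhead. Your treatment is in fact slightly more explicit than the paper's, spelling out why $O_{\inv}$ and $O_{\nrm}$ are trivial in the time-independent case and why the fast-multipole error is absorbed by the hypothesis on $\Delta$.
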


    \begin{proof}
    By \thm{many-rescaled-interaction-complexity}, the Hamiltonian can be simulated with
    \begin{align}\label{eqn:interaction-coulomb-3}
    O\bigg(\frac{T\eta^2 q^2}{\Delta}\Big(\poly(z)+\log^{2.5}(T\eta^2 q^2/(\Delta\epsilon))+\eta d\log \Big(\frac{\log(g'/\epsilon)}{\log\log(g'/\epsilon)}\Big)\Big)\frac{\log(T\eta^2 q^2/(\Delta\epsilon))}{\log\log(T\eta^2 q^2/(\Delta\epsilon))}\bigg)
    \end{align}
    one- and two- qubit gates and
    \begin{align}
        O\Big(\frac{T\eta^2 q^2}{\Delta}\cdot\frac{\log(T\eta^2 q^2/(\Delta\epsilon))}{\log\log(T\eta^2 q^2/(\Delta\epsilon))}\Big)
    \end{align}
    queries to the evaluation oracle $U_f$. The oracle can be implemented via the direct pairwise interaction with gate complexity $O(\eta^2)$, or the fast multipole method as described in \lem{fast-multipole} while introducing an error at most $O(\epsilon)$. Then the gate complexity and number of QRAM queries are as claimed.
    \end{proof}


\section{Applications}\label{sec:applications}
In this section, we study the applications of quantum simulation in real space. Our targets are several computational problems of fundamental importance in quantum chemistry, solid-state physics, and optimization.

\subsection{Quantum chemistry}\label{sec:qchem}

One of the most well-studied applications of quantum simulation is the electronic structure problem in quantum chemistry, which aims to determine ground states or low-lying excited states of the electronic Hamiltonian of molecules \cite{bauer2020quantum,cao2019quantum}. To prepare eigenstates, quantum simulation can either be used as a subroutine in quantum phase estimation, or directly used in an adiabatic state preparation procedure. Thus our improved simulations could potentially be used to give faster algorithms for these eigenstate determination problems.

However, we focus here on quantum simulation of chemical dynamics, which go beyond static properties to consider dynamical effects in chemical recation processes. To describe chemical dynamics, we start from the exact molecular Schr{\"o}dinger equation
\begin{align}\label{eqn:molecular-Schrodinger}
	i\frac{\partial}{\partial t}\Phi(\vect{x},t)=\hat{H}_m\Phi(\vect{x},t)=\Big[\hat{T}_n + \hat{H}_e\Big]\Phi(\vect{x},t),
\end{align}
where the molecular Hamiltonian $\hat{H}_m=\hat{T}_n + \hat{H}_e$ is a sum of the nuclear kinetic energy
\begin{align}\label{eqn:Born-nuclear}
	\hat{T}_n = -\sum_{A=1}^{\eta_n} \frac{1}{2M_A}\nabla^2_A
\end{align}
and the electronic Hamiltonian
\begin{align}\label{eqn:Born-electronic}
	\hat{H}_e = -\frac{1}{2}\sum_{i=1}^{\eta_e} \nabla^2_i + \frac{1}{2} \sum_{i\neq j} \frac{1}{|\hat{r}_i - \hat{r}_j|} - \frac{1}{2} \sum_{i, A} \frac{Z_A}{|\hat{r}_i - \hat{R}_A|} + \frac{1}{2} \sum_{A\neq B} \frac{Z_AZ_B}{|\hat{R}_A - \hat{R}_B|},
\end{align}
where $\hat{r}_i$ and $\hat{R}_A$ are the positions of the $i$th electron and the $A$th nucleus, and $M$ and $Z$ are the mass and the atomic number of the nucleus, respectively. This problem can be simplified using the well-known Born-Oppenheimer approximation. In this approximation, the nuclear kinetic energy $\hat{T}_n$ is neglected, the positions of the nuclei $\hat{R}$ are fixed, and the time-independent electronic Schr{\"o}dinger equation
\begin{align}\label{eqn:Born-electronic-energy}
	\hat{H}_e \chi(\hat{r},\hat{R}) = E_e(\hat{R})\chi(\hat{r},\hat{R})
\end{align}
is solved to obtain the electronic wave function $\chi(\hat{r},\hat{R})$ and the electronic energy eigenvalue $E_e(\hat{R})$. Varying $\hat{R}$ and repeatedly solving \eqn{Born-electronic-energy}, one can obtain the potential energy surface $E_e(\hat{R})$ as a function of the nuclear positions. Then, the time-dependent Schr{\"o}dinger equation of the nuclei dynamics
\begin{align}\label{eqn:Born-molecular-dynamics}
	i\frac{\partial}{\partial t}\Psi(\hat{R},t)=\Big[\hat{T}_n+E_e(\hat{R})\Big]\Psi(\hat{R},t).
\end{align}
can be solved separately.
Since the nuclei move much more slowly than the electrons, this approach often provides a good approximation and leads to a more practical method.

In classical computational chemistry, the Born-Oppenheimer approximation is often used to simplify calculations for chemical reactions, because the overall cost of calculating $E_e(\hat{R})$ from \eqn{Born-electronic-energy} with varying $\hat{R}$ and then calculating $\Psi(\hat{R},t)$ from \eqn{Born-molecular-dynamics} is less than the cost of simulating the full dynamics \eqn{molecular-Schrodinger}. Such an approximation can be trusted if the potential energy surfaces of the electronic states are well separated. However, Kassal et al.~\cite{kassal2008polynomial} pointed out that simulating the full dynamics on a quantum computer will not only yield more accurate results, but can also be faster than the Born-Oppenheimer approximation. In particular, Fig.~3 of~\cite{kassal2008polynomial} shows that the computational resources for fitting a potential energy surface $E_e(\hat{R})$ from interpolation increases exponentially with the atomic number $Z_A$, while the cost of simulating the full dynamics only increases polynomially. For chemical reactions with more than 4 atoms, it is more efficient for a quantum computer to evolve all the nuclei and electrons than to use the Born-Oppenheimer approximation~\cite{kassal2008polynomial}.

As an application of multi-particle Schr{\"o}dinger equation, we apply \thm{time-independent-many-Schrodinger-simulation} on the full dynamics \eqn{molecular-Schrodinger}. Similar to \eqn{modified-Coulomb}, generalizations of electron-electron, electron-nucleus, and nucleus-nucleus Coulomb potentials are modified as
\begin{align}
f_{\text{e-e}} &= \sum_{A\neq B} \frac{1}{\sqrt{\sum_{k=1}^3\bigl(x_{3(i-1)+k}-x_{3(j-1)+k}\bigr)^2+\Delta^{2}}}, \label{eqn:modified-molecular-Coulomb-ee} \\
f_{\text{n-e}} &= \sum_{i, A} \frac{Z}{\sqrt{\sum_{k=1}^3\bigl(x_{3(i-1)+k}-x_{3(A-1)+k}\bigr)^2+\Delta^{2}}}, \label{eqn:modified-molecular-Coulomb-en} \\
f_{\text{n-n}} &= \sum_{A\neq B} \frac{Z^2}{\sqrt{\sum_{k=1}^3\bigl(x_{3(A-1)+k}-x_{3(B-1)+k}\bigr)^2+\Delta^{2}}}. \label{eqn:modified-molecular-Coulomb-nn}
\end{align}
The total Coulomb potential $f(\vect{x})$ of the multi-particle Schr{\"o}dinger equation \eqn{many-Schrodinger-dep} is bounded by
\begin{align}
	\|f(\vect{x})\|_{L^{\infty}} \le \frac{(\eta_e+\eta_n)(\eta_e+\eta_n-1)Z^2}{2\Delta}.
\end{align}
Therefore, we find the following result.

\begin{corollary}\label{cor:molecular-dynamics-simulation}
		Consider an instance of the molecular Schr{\"o}dinger equation \eqn{molecular-Schrodinger}, with the Coulomb potentials \eqn{modified-molecular-Coulomb-ee}, \eqn{modified-molecular-Coulomb-en}, and \eqn{modified-molecular-Coulomb-nn}, and a given $T>0$. Let $g' = \max_t\|\Phi^{(n/2)}(\cdot,t)\|_{L^1}$ as in \eqn{gprime}. There exists a quantum algorithm producing a state that approximates $\Phi(\vect{x},T)$ at the nodes $\{\vect{\chi}_{\vect{l}}\}$ defined in \eqn{interpolation_nodes}, with $\ell_2$ error at most $\epsilon+O(\epsilon_0)$, with
		\begin{align}
            (\eta_e+M\eta_n)^3 TZ^2/(M\Delta)\cdot\poly(\log((\eta_e+M\eta_n)Tg'/(\Delta\epsilon))) \label{eqn:molecular-dynamics-simulation-Coulomb}
		\end{align}
		one- and two-qubit gates, along with the same number (up to poly-logarithmic factors) of QRAM operations
	\end{corollary}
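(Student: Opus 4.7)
The plan is to specialize \thm{many-rescaled-interaction-complexity} (and the interaction-picture/fast-multipole pipeline of \cor{interaction-coulomb}) to the molecular Schr{\"o}dinger equation \eqn{molecular-Schrodinger}. First I would apply the Fourier spectral method of \sec{fourier-spectral} to all $3(\eta_e+\eta_n)$ spatial coordinates jointly, yielding a discretized Hamiltonian $H_{\mathrm{d}} = A + B$ in which $B$ is diagonal in the position basis and encodes the total modified Coulomb potential $f = f_{\mathrm{e\text{-}e}} + f_{\mathrm{n\text{-}e}} + f_{\mathrm{n\text{-}n}}$, while $A$ combines the electronic ($-\tfrac{1}{2}\nabla^2_i$) and nuclear ($-\tfrac{1}{2M}\nabla^2_A$) kinetic operators. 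Despite the distinct mass prefactors, both kinetic terms are simultaneously diagonalized by the same multi-dimensional QSFT, so $A = \vect{F}^s_{n,d}\vect{L}(\vect{F}^s_{n,d})^{-1}$ with $\vect{L}$ an explicit diagonal matrix whose entries carry the $1/2$ or $1/(2M)$ weight appropriate to the electron or nuclear block. Consequently $e^{-iAt}$ is still implementable in $\widetilde O(\eta_e+\eta_n)$ gates for any $t$, which is exactly what the interaction-picture framework of \thm{many-rescaled-interaction-complexity} requires.

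The next step is to verify the hypotheses of \thm{many-rescaled-interaction-complexity} and estimate $\|f\|_{\max,1}$. Using \eqn{modified-molecular-Coulomb-ee}--\eqn{modified-molecular-Coulomb-nn}, each pairwise softened-Coulomb term is bounded by $1/\Delta$, so $\|f\|_{\max} = O((\eta_e+\eta_n Z)^2/\Delta)$; since the potential is time-independent, $L$ can be taken to be $0$ (or any bounded surrogate inside the polylog factor) and $\|f\|_{\max,1} = T\|f\|_{\max}$. Substituting these bounds, together with $\eta d \mapsto 3(\eta_e+\eta_n)$ and the spectral-method truncation \eqn{truncated-number}, into \thm{many-rescaled-interaction-complexity} gives the query complexity $\widetilde O(\|f\|_{\max,1})$ with only polylog overhead in $g'$ and $1/\epsilon$.

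The remaining ingredient is the evaluation oracle $U_f$ for the total molecular Coulomb potential. Each of the three Coulomb summands has exactly the pairwise softened-Coulomb structure of \eqn{modified-Coulomb}, so three parallel applications of \lem{fast-multipole} in $d=3$ evaluate $f(\vect{x})$ in $\widetilde O(\eta_e+\eta_n)$ one- and two-qubit gates and $\widetilde O(\eta_e+\eta_n)$ QRAM operations per query, with approximation error of the same order as the inherent softening error and hence absorbed into the assumed tolerance $\epsilon$. Multiplying this per-query cost by the query count from \thm{many-rescaled-interaction-complexity} produces a total complexity of the claimed form, and the QRAM count matches the gate count up to polylog factors as asserted.

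The main obstacle I anticipate is recasting the natural bound obtained above into the unified statement $(\eta_e+M\eta_n)^3 TZ^2/(M\Delta)$ quoted in the corollary. The cleanest route is the virtual-particle reformulation implicit in \cor{interaction-coulomb}: replace each nucleus by $M$ co-located mass-$1$ virtual particles of charge $Z/M$, so that the system becomes an $(\eta_e+M\eta_n)$-particle system of uniform charge $O(\max(1,Z/M))$, to which the bound of \cor{interaction-coulomb} applies verbatim. The bookkeeping care is in checking that the spurious intra-nucleus virtual-particle interactions (they sit at the same point up to the $\Delta$ softening) only contribute a constant energy offset, and hence only a global phase to the dynamics. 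Once this is verified, the stated gate and QRAM complexity follows by direct substitution into \cor{interaction-coulomb}.
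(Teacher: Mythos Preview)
Your main pipeline (Fourier spectral discretization, mass-weighted diagonal kinetic term still diagonalized by the QSFT, interaction-picture simulation via \thm{many-rescaled-interaction-complexity}, and fast-multipole evaluation of the three Coulomb sums) is sound and would produce a bound of the form $(\eta_e+\eta_n)^3 TZ^2/\Delta$ up to polylogs. The paper, however, does \emph{not} argue this way: it first rescales time, setting $\overline t := t/M$ so that the nuclear kinetic term becomes $-\tfrac12\sum_A\nabla_A^2$, and then invokes \cor{interaction-coulomb} with the shortened time $\overline T = T/M$ and the rescaled potential $Mf$. The stated $(\eta_e+M\eta_n)^3TZ^2/(M\Delta)$ form comes out of this time-rescaling bookkeeping, not from any virtual-particle decomposition of the nuclei.

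Your last paragraph is where the gap lies. Replacing each nucleus by $M$ co-located mass-$1$ particles does not reproduce the kinetic operator $-\tfrac{1}{2M}\nabla_A^2$: $M$ \emph{independent} unit-mass particles contribute $-\tfrac12\sum_{j=1}^M\nabla_j^2$, which collapses to $-\tfrac{M}{2}\nabla_A^2$ (the wrong sign in the mass exponent) only if you force them to share a single coordinate, and that constraint is not preserved by the many-body dynamics. For the same reason, the ``intra-nucleus'' interactions among the virtual particles are not a global phase---the virtual particles separate under evolution, so the argument that these terms are constant fails. If you want the specific $(\eta_e+M\eta_n)^3/(M\Delta)$ scaling, the time-rescaling route used in the paper is the correct mechanism; your direct approach is fine as an alternative proof of a (slightly different, and in fact no worse) bound, but it does not yield the stated expression via the virtual-particle heuristic.
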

	\begin{proof}
We first rescale \eqn{molecular-Schrodinger} with $\overline t := t/M_A$ and $\overline T := T/M_A$. Now the molecular dynamics becomes
\begin{align}
	i\frac{\partial}{\partial \overline t}\Phi(\vect{x},\overline t)=M_A\hat{H}_m\Phi(\vect{x},\overline t)=M_A\Big[\hat{T}_n + \hat{H}_e\Big]\Phi(\vect{x},\overline t).
\end{align}
This means we ``accelerate the time'' to capture the movement of nuclei, whose nuclear kinetic energy is rescaled as $M\hat{T}_n = -\sum_{A=1}^{\eta_n} \frac{1}{2}\nabla^2_A$. Then we treat the rescaled electronic Hamiltonian $M\hat{H}_e$ as a Hamiltonian of $M\eta_e$ of electrons, with a rescaled potential bounded by $M\frac{(\eta_e+\eta_n)(\eta_e+\eta_n-1)Z^2}{2\Delta}$. We then find the gate complexity \eqn{molecular-dynamics-simulation-Coulomb} by straightforwardly applying \cor{interaction-coulomb} with $d=3$.
\end{proof}

Compared to the previous work for simulating the full dynamics of electrons and nuclei on a quantum computer, Kassal et al.~\cite{kassal2008polynomial} represent real-space quantum dynamics using a discrete system of qubits and apply product formulas to propagate the system. The query complexity of this approach scales quadratically with the particle number, and the gate complexity should be larger by an additional factor of the particle number, due to the analysis of \thm{many-rescaled-interaction-complexity}. The query and gate complexity of our approach matches the dependence of the particle number $\eta_e$ and $\eta_n$. Furthermore, our analysis explicitly bounds the complexity as a function of $T$ and $d$, and achieves $\poly(\log(1/\epsilon))$ dependence on the error $\epsilon$.


\subsection{Uniform electron gas}

The uniform electron gas, also known as jellium, is a simple theoretical model of delocalized electrons in a metal. It considers a large system of interacting electrons with a homogeneous jelly-like continuum of positive background charge, so that the entire system is charge-neutral \cite{giuliani2005quantum}. Jellium is considered a good approximation of electrons confined in semiconductor wells, or valence electron distributions of alkali metals such as sodium. Despite its simplicity, the jellium model can be classically hard to simulate in some regimes, and it is widely used as a benchmark problem for new classical quantum simulation methods \cite{babbush2017low}.

For concreteness, we consider $\eta$ electrons in a $3$-dimensional cubic box $\Omega = [0,1]^{3\eta}$ and assume the wave function can be treated as periodic on this domain. As in the discussion in \sec{multi-particles}, we can realize this assumption by embedding the system into a sufficiently large periodic box. Let $m$ denote the electron mass and let $e$ denote the unit charge. Then the jellium Hamiltonian reads
\begin{align}\label{eqn:jellium-hamiltonian}
	\hat{H} = \sum^\eta_{i=1} \frac{\hat{p}^2_i}{2m} + \frac{1}{2} \sum_{i\neq j} \frac{e^2}{|\hat{r}_i - \hat{r}_j|} + \hat{H}_{e-b} + \hat{H}_{b-b},
\end{align}
where $\hat{p}_i$, $\hat{r}_i$ are the momentum and position operators for the $i$th electron, and $\hat{H}_{e-b}$ and $\hat{H}_{b-b}$ are the electron-background and background-background interactions, respectively. These background terms take the form
\begin{align}
	\hat{H}_{e-b} &= -e^2 \eta \sum^\eta_{i = 1}\int_\Omega \frac{1}{|\hat{r}_i - R|}  \d R, \\
	\hat{H}_{b-b} &= e^2 \eta^2 \int_{\Omega^2} \frac{1}{|R - R'|} \d R \d R'.
\end{align}

The background-background interaction operator $\hat{H}_{b-b}$ is constant. Because of the homogeneity of jellium, the electron-background operator $\hat{H}_{e-b}$ is also constant in the periodic cubic box $\Omega$ \cite{babbush2017low}. For simplicity, we assume $\hat{H}_{b-b} = \hat{H}_{e-b} = 0$ as constant operators merely add a global phase to the wave function in the quantum evolution.

Similarly to Eq.~\eqn{modified-Coulomb}, we handle the singularity in the electron-electron interaction by introducing a modified Coulomb potential
\begin{align}\label{eq:jellium-coulomb}
    f_{e-e}= \frac{1}{2}\sum_{i\neq j} \frac{e^2}{\sqrt{\sum_{k=1}^3\bigl(x_{3(i-1)+k}-x_{3(j-1)+k}\bigr)^2+\Delta^{2}}}.
\end{align}
This potential satisfies
\begin{align}\label{eqn:ee-upperbound}
    \|f_{e-e}(\vect{x})\|_{L^{\infty}}\leq\frac{\eta(\eta+1)e^2}{2\Delta}.
\end{align}
Therefore, we obtain the following result.
\begin{corollary}\label{cor:jellium-simulation}
Consider an instance of the Schr{\"o}dinger equation \eqn{jellium-hamiltonian} describing a uniform gas of $\eta$ electrons in $3$-dimensional space, with the modified Coulomb potential between the electrons defined in \eq{jellium-coulomb}. Then the dynamics for time $T$ can be simulated with accuracy $\epsilon+O(\epsilon_0)$ using
\begin{align}
    \eta^3T\poly(\log(\eta Tg'/(\Delta\epsilon)))/\Delta
\end{align}
one- and two-qubit gates and QRAM operations, where $g'=\max_t \|\Phi^{(n/2)}(\cdot,t)\|_{L^1}$ as defined in \eqn{gprime}, and $\epsilon_0$ denotes the intrinsic simulation error due to the difference between the actual Coulomb potential and the modified Coulomb potential.
\end{corollary}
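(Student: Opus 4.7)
My plan is to reduce the jellium simulation directly to the multi-particle Coulomb simulation already developed in \cor{interaction-coulomb}. The first step is to dispose of the background terms. Both $\hat{H}_{b-b}$ and $\hat{H}_{e-b}$ are constants on the periodic box $\Omega$ (the former trivially, the latter by the argument from~\cite{babbush2017low} that on a homogeneous periodic cell the integrated background potential is position-independent). A constant term in the Hamiltonian only contributes a global phase $e^{-iCT}$ to the evolved state and therefore does not affect the $\ell_2$ error or the simulation cost. Thus, up to an unobservable phase, the jellium Hamiltonian is equivalent to
\begin{equation*}
\hat{H}_{\text{eff}} = -\frac{1}{2m}\sum_{i=1}^{\eta}\nabla_i^2 + f_{e-e}(\vect{x}),
\end{equation*}
which is exactly a multi-particle Schrödinger Hamiltonian of the form treated in \sec{multi-particles} and \sec{time-dep}, with $d=3$, fixed particle charge $q=e$, and modified Coulomb interaction \eq{jellium-coulomb}.

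The second step is to verify the hypotheses of \cor{interaction-coulomb}. The potential $f_{e-e}$ is time-independent (so trivially Lipschitz with $L=0$), positive, smooth as a function of $\vect{x}$, and bounded by $\eta(\eta+1)e^2/(2\Delta)$ via \eqn{ee-upperbound}. Consequently
\begin{equation*}
\|f_{e-e}\|_{\max,1} \le T\,\|f_{e-e}\|_{L^\infty} = O\!\left(\frac{\eta^2 T}{\Delta}\right).
\end{equation*}
Since $d=3$ is a constant, the prefactor $(4d)^{d/2}$ appearing in the fast-multipole branch of \cor{interaction-coulomb} is $O(1)$, so both branches of that corollary give compatible bounds in this setting.

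The third step is to invoke \cor{interaction-coulomb} directly. Plugging $d=3$ and $q=e$ into its complexity bound gives $\eta^{3}T\,\poly(\log(\eta T g'/(\Delta\epsilon)))/\Delta$ one- and two-qubit gates together with the same number of QRAM queries (up to poly-logarithmic factors), matching the claimed bound. The additive $O(\epsilon_0)$ in the final error accounts, exactly as in \cor{interaction-coulomb}, for the intrinsic mismatch between the modified Coulomb potential \eq{jellium-coulomb} and the true Coulomb kernel; the Hamiltonian-simulation error itself is $\epsilon$.

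The only non-routine point in this plan is justifying that the background terms can be discarded cleanly: one has to argue that $\hat{H}_{e-b}$ is position-independent on the periodic torus $\Omega$, not merely bounded. I would cite the computation from~\cite{babbush2017low} and note that because every electron sees the same integrated background, $\hat{H}_{e-b}$ acts as $c\,\mathbb{I}$ for a constant $c$ depending only on $\eta$ and on the geometry of $\Omega$. Once this is established, the rest of the proof is a direct specialization of \cor{interaction-coulomb} and requires no new estimates.
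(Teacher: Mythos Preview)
Your proposal is correct and follows essentially the same approach as the paper: the paper's own proof is a one-liner (``replace $q$ by $e$ in the proof of \cor{interaction-coulomb} and set $d=3$''), with the disposal of the background terms $\hat{H}_{b-b}$ and $\hat{H}_{e-b}$ already handled in the text preceding the corollary statement rather than inside the proof. Your plan simply folds that preliminary reduction into the proof itself and spells out the hypothesis-checking more explicitly, but the logical content is identical.
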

\begin{proof}
It suffices to replace $q$ by $e$ in the proof of \cor{interaction-coulomb} and set $d=3$ to obtain the claimed query and gate complexities.
\end{proof}

\subsection{Optimization}

In recent work on quantum algorithms for nonconvex optimization \cite{zhang2020quantum}, Schr{\"o}dinger equation simulation was used as a technique for escaping from saddle points, a key challenge in optimization theory. Ref.~\cite{zhang2020quantum} demonstrated that for a $d$-dimensional nonconvex function $f$, a quantum speedup can be achieved for the problem of escaping from saddle point if we replace the uniform perturbation step in the classical state-of-the-art algorithm \cite[Algorithm 2]{jin2018accelerated} by a perturbation obtained by simulating the Schr{\"o}dinger equation
\begin{align}
    i\frac{\partial}{\partial t}\Phi=\Big[-\frac{r_0^2}{2}\nabla^2+\frac{1}{r_0^2}f(\vect{x})\Big]\Phi
    \label{eqn:Schrodinger-optimization}
\end{align}
in a hyperball region with a small radius $r_0$. The resulting quantum algorithm \cite[Algorithm 2]{zhang2020quantum} is presented as \algo{PGD+QS}, with the QuantumSimulation subroutine \cite[Algorithm 1]{zhang2020quantum} in \algo{PGD+QS} shown in \algo{Optimization-Simulation}.

\begin{algorithm}[htbp]
\caption{Perturbed Gradient Descent with Quantum Simulation.}
\label{algo:PGD+QS}
\For{$t=0,1,...,T$}{
\If{$\|\nabla f(\vect{x}_{t})\|\leq\epsilon$}{
$\xi\sim$QuantumSimulation($\vect{x}_t$,$r_0$,$\mathscr{T}'$,$f(\vect{x})-\left\<\nabla f(\vect{x}_t),\vect{x}-\vect{x}_t\right\>$)\label{lin:QuantumSimulation}\;
$\Delta_t\leftarrow\frac{2\xi}{3\|\xi\|}\sqrt{\frac{\rho}{\epsilon}}$\;
$\vect{x}_{t}\leftarrow\mathop{\arg\min}_{\zeta\in\left\{\vect{x}_t+\Delta_t,\vect{x}_t-\Delta_t\right\}}f(\zeta)$\;
}
$\vect{x}_{t+1}\leftarrow\vect{x}_{t}-\eta\nabla f(\vect{x}_{t})$\;
}
\end{algorithm}

\begin{algorithm}[htbp]
\caption{QuantumSimulation($\tilde{\vect{x}},r_{0},t,f(\cdot)$).}
\label{algo:Optimization-Simulation}
Put a Gaussian wave packet into the potential field $f$, with the initial state
\begin{align}\label{eqn:ground_state_Phi0}
\Phi_{0}(\vect{x})=\Big(\frac{1}{2\pi}\Big)^{d/4}\frac{1}{r_{0}^{d/2}}\exp(-(\vect{x}-\tilde{\vect{x}})^{2}/4r_{0}^{2});
\end{align}
Simulate its evolution in potential field $f$ with the Schr\"odinger equation for time $t$\;
Measure the position of the wave packet and output the measurement outcome.
\end{algorithm}

Using \prop{rescaled-interaction-complexity}, we demonstrate that our real-space simulation algorithm can be used to perform \algo{Optimization-Simulation} and thereby obtain a better complexity for \algo{PGD+QS}.

\begin{lemma}\label{lem:optimization-simulation}
Suppose $f(\vect{x})\colon \R^d \to \R$ is continuously differentiable and have a saddle point at $\vect{x}=0$ with $f(\vect{0}) = 0$. Suppose $f(\vect{x}) = \frac{1}{2}\vect{x}^T \mathcal{H} \vect{x}$ in a hypercubic domain $\Omega = \{\vect{x}\in \R^d\colon \|\vect{x}\|\le M\}$ for some universal upper bound $M > 0$. Consider the Schr\"odinger equation \eqn{Schrodinger-optimization} defined on the compact domain $\Omega$ with Dirichlet boundary conditions.\footnote{As in simulations of multi-particle quantum dynamics, we consider a wave function defined on an underlying periodic $d$-dimensional hypercubic domain. However, as before, this technique is capable of handling the non-periodic problem at hand. Specifically, we slightly ``mollify'' $f$ near the boundary of the domain to respect periodicity. This mollification does not have a significant impact for optimization because our simulation time is short and the wave function has little chance to hit the boundary.}
Given the quantum evaluation oracle $U_{f}|\x\>|0\>=|\x\>|f(\x)\>$ encoding the potential function $f$ and an arbitrary initial state at time $0$, the evolution of \eqn{Schrodinger-optimization} for time $t>0$ can be simulated with precision $\epsilon$ using
\begin{align}
    O\bigg(\frac{M^2\|\mathcal{H}\|t}{r_0^2}\cdot\frac{\log(M^2\|\mathcal{H}\|t/(r_0^2\epsilon))}{\log\log(M^2\|\mathcal{H}\|t/(r_0^2\epsilon))}\bigg)=\tilde{O}(t\log(t/\epsilon))
\end{align}
queries to $U_{f}$, with
\begin{align}
    &O\bigg(\frac{M^2\|\mathcal{H}\|t}{r_0^2}\Big(\poly(z)+\log^{2.5}\Bigl(\frac{M^2\|\mathcal{H}\|t}{r_0^2\epsilon}\Bigr)
    +d\log \Big(\frac{\log(g'/\epsilon)}{\log\log(g'/\epsilon)}\Big)\Big)\frac{\log(M^2\|\mathcal{H}\|t/(r_0^2\epsilon))}{\log\log(M^2\|\mathcal{H}\|t/\epsilon)}\bigg)
\end{align}
additional one- and two-qubit gates. Here $g'=\max_t \|\Phi^{(n/2)}(\cdot,t)\|_{L^1}$ as defined in \eqn{gprime}.
\end{lemma}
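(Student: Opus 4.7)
My plan is to apply \prop{rescaled-interaction-complexity} after recasting \eqn{Schrodinger-optimization} into its canonical form. First, I introduce the rescaled time $\tilde{t} := (r_0^2/2)\, t$, which converts \eqn{Schrodinger-optimization} into
\[
  i\,\frac{\partial}{\partial \tilde t}\Phi = \bigl[-\nabla^2 + \tilde V(\vect{x})\bigr]\Phi,\qquad \tilde V(\vect{x}) := \frac{2}{r_0^4}\,f(\vect{x}),
\]
with total rescaled evolution time $\tilde T = r_0^2 t/2$. Using the quadratic form $f(\vect{x}) = \tfrac12 \vect{x}^\top \mathcal{H} \vect{x}$ together with $\|\vect{x}\|\le M$ on $\Omega$, I obtain $\|\tilde V\|_{\max}\le M^2\|\mathcal{H}\|/r_0^4$, whence
\[
  \|\tilde V\|_{\max,1} = \tilde T\,\|\tilde V\|_{\max} \le \frac{M^2\|\mathcal{H}\|\, t}{2 r_0^2}.
\]

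Next, I verify the input-model assumptions of \prop{rescaled-interaction-complexity}. The evaluation oracle $U_f$ is supplied by hypothesis, and since $\tilde V$ is time-independent the $L^1$-scaling function $g(\tau) = \|\tilde V\|_{\max}\tau$ is linear; hence its inverse and the constant $\|\tilde V\|_{\max}$ can both be computed with $\poly(z)$ gates, giving inexpensive implementations of $O_{\inv}$ and $O_{\nrm}$. Because $\d\tilde V/\d\tilde t = 0$, the Lipschitz constant $L$ appearing in \prop{rescaled-interaction-complexity} effectively drops out, so the $\log^{2.5}(L\|f\|_{\max,1}/\epsilon)$ contribution collapses to $\log^{2.5}(\|\tilde V\|_{\max,1}/\epsilon) = \log^{2.5}(M^2\|\mathcal{H}\|t/(r_0^2\epsilon))$. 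Substituting $\|\tilde V\|_{\max,1} \le M^2\|\mathcal{H}\|t/(2r_0^2)$ into the query and gate bounds of \prop{rescaled-interaction-complexity}, with the factor of $d$ entering through the multi-dimensional QSFT used to simulate $e^{-iA\tau}$ for the diagonal Laplacian in momentum space, reproduces exactly the complexities claimed in the lemma.

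The step I expect to be the main obstacle is reconciling the Dirichlet boundary data of \lem{optimization-simulation} with the periodic boundary condition assumed by the Fourier spectral discretization that underlies \prop{rescaled-interaction-complexity}. Following the strategy indicated in the footnote to the lemma, I would embed $\Omega$ in a slightly larger periodic cube and smoothly mollify $f$ near $\partial\Omega$ to produce a periodic potential that agrees with the quadratic on a neighborhood of the support of the evolving wave packet. Because the initial state \eqn{ground_state_Phi0} is a tightly localized Gaussian centered at $\tilde{\vect{x}}$ with width $r_0 \ll M$, and the simulation time prescribed by \algo{Optimization-Simulation} within \algo{PGD+QS} is short, a short-time dispersive estimate for the Schr\"odinger equation shows that the probability mass of the evolving wave packet near $\partial\Omega$ remains $O(\epsilon)$ throughout the simulation, so the mollification perturbs the final state $\Phi(T)$ by at most $O(\epsilon)$ in $\ell_2$ norm. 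Once this boundary issue is controlled, the stated bounds follow by direct substitution into \prop{rescaled-interaction-complexity}.
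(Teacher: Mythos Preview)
Your proposal is correct and lands on the same complexity bounds, but the route differs from the paper's. You perform a \emph{time} rescaling $\tilde t = (r_0^2/2)\,t$ to put the equation into the canonical form $-\nabla^2 + \tilde V$ with $\|\tilde V\|_{\max}\le M^2\|\mathcal{H}\|/r_0^4$ on the original domain $\Omega$, and then observe that $\|\tilde V\|_{\max,1}=\tilde T\,\|\tilde V\|_{\max}\le M^2\|\mathcal{H}\|t/(2r_0^2)$. The paper instead performs a \emph{spatial} rescaling $\vect{y}=\vect{x}/r_0$, obtaining $i\partial_t\Psi=[-\tfrac12\nabla^2+f'(\vect{y})]\Psi$ on the dilated domain $\Omega'=\{\|\vect{y}\|\le M/r_0\}$ with $f'(\vect{y})=f(r_0\vect{y})/r_0^2$, so that $|f'|\le M^2\|\mathcal{H}\|/r_0^2$ and $\|f'\|_{\max,1}\le M^2\|\mathcal{H}\|t/r_0^2$. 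Both reductions feed the same $\|f\|_{\max,1}$ (up to constants) into \prop{rescaled-interaction-complexity} and hence yield identical query and gate counts; your observation that the time-independence of the potential renders $O_{\inv}$ and $O_{\nrm}$ trivial matches the paper's remark that these oracles ``act trivially.'' Your version has the minor advantage of matching the $-\nabla^2$ convention in the statement of \prop{rescaled-interaction-complexity} exactly, whereas the paper's spatial rescaling leaves a harmless factor of $1/2$ in the kinetic term. Your treatment of the Dirichlet-versus-periodic boundary issue is also more explicit than the paper's, which simply relegates it to the footnote and does not revisit it in the proof.
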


\begin{proof}
    We rescale \eqn{Schrodinger-optimization} as
    \begin{align}\label{eqn:rescaled-Schrodinger-optimization}
        i\frac{\partial}{\partial t}\Psi=\Big[-\frac{1}{2}\nabla^{2}+\frac{1}{r_{0}^2}f(r_0\cdot\vect{x})\Big]\Psi
    \end{align}
    on the compact domain $\Omega'=\{\vect{x}\in \R^d: \|\vect{x}\| \le M/r_0\}$ with Dirichlet boundary conditions. Observe that the wave function $\Psi$ in \eqn{rescaled-Schrodinger-optimization} and the original wave function $\Phi$ in \eqn{Schrodinger-optimization} are related as
    \begin{align}
        \Phi(\vect{x},t)=\Psi(\vect{x}/r_0,t),
    \end{align}
    so it suffices to simulate \eqn{rescaled-Schrodinger-optimization} instead of \eqn{Schrodinger-optimization}, where the potential function is $f'(\vect{x})=f(r_0\cdot\vect{x})/r_0^2$ satisfying
    \begin{align}
        |f'(\vect{x})|\leq M^2\|\mathcal{H}\|/r_0^2,\qquad\forall \vect{x}\in\Omega'.
    \end{align}
    Furthermore, note that the value of $f'$ is time-independent, indicating the oracles $O_{\nrm}$ and $O_{\inv}$ act trivially. Thus we can determine the query complexity to $U_f$ from \prop{rescaled-interaction-complexity}, giving
    \begin{align}
        O\bigg(\frac{M^2\|\mathcal{H}\|t}{r_0^2}\cdot\frac{\log(M^2\|\mathcal{H}\|t/(r_0^2\epsilon))}{\log\log(M^2\|\mathcal{H}\|t/(r_0^2\epsilon))}\bigg).
    \end{align}
    Absorbing all absolute constants in the big-$O$ notation, the query complexity is $\tilde{O}(t\log(t/\epsilon))$.

    Finally, the gate complexity can also be derived using \prop{rescaled-interaction-complexity}, giving
    \begin{align}
\hspace{-2mm}O\bigg(\frac{M^2\|\mathcal{H}\|t}{r_0^2}\Big(\poly(z)+\log^{2.5}\Big(\frac{M^2\|\mathcal{H}\|t}{r_0^2\epsilon}\Big)+d\log \Big(\frac{\log(g'/\epsilon)}{\log\log(g'/\epsilon)}\Big)\Big)\frac{\log(M^2\|\mathcal{H}\|t/(r_0^2\epsilon))}{\log\log(M^2\|\mathcal{H}\|t/(r_0^2\epsilon))}\bigg)
\end{align}
as claimed.
\end{proof}

Note that under the parameter choice of \cite{zhang2020quantum}, there exists a small enough constant $C_r$ satisfying
\begin{align}
    r_0=C_r\cdot M.
\end{align}

The following lemma characterizes the number of calls to the quantum simulation procedure in the entire quantum optimization algorithm.

\begin{lemma}[{\cite[Theorem 3]{zhang2020quantum}}]\label{lem:simulation-number}
Let $f\colon \R^d \to \R$ be a twice-differentiable function satisfying
\begin{align}
    \|\nabla f(\vect{x}_1)-\nabla f(\vect{x}_2)\|\leq\ell\|\vect{x}_1-\vect{x}_2\|\quad\forall\vect{x}_1,\vect{x}_2\in\mathbb{R}^d
\end{align}
and
\begin{align}
    \|\mathcal{H}(\vect{x}_1)-\mathcal{H}(\vect{x}_2)\|\leq\rho\|\vect{x}_1-\vect{x}_2\|
    \quad\forall\vect{x}_1,\vect{x}_2\in\mathbb{R}^d
\end{align}
where $\mathcal{H}$ is the Hessian matrix of $f$ and $\rho$, $\ell$ are constants. Then for any $\epsilon>0$, \algo{PGD+QS} finds an $\epsilon$-approximate local minimum with success probability at least $2/3$ using $O\big(\frac{f(\vect{x}_0)-f^*}{\epsilon^{1.5}}\big)$ calls to the quantum simulation of Eq.~\eqn{Schrodinger-optimization}, where in each call Eq.~\eqn{Schrodinger-optimization} is simulated for time
\begin{align}
    \mathscr{T}':=\frac{8}{(\rho\epsilon)^{1/4}}\log\Big(\frac{\ell(f(\vect{x}_0)-f^*)}{\epsilon^2\sqrt{\rho}}\Big(d+2\log(3(f(\vect{x}_0)-f^*)/\epsilon^{1.5})\Big)\Big),
\end{align}
where $\vect{x}_0$ is the initial point of the algorithm and $f^*$ is the global minimum of $f$.
\end{lemma}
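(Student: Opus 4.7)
The plan is to treat this as a standard perturbed-gradient-descent analysis, where the Schrödinger-equation simulation plays the role of the ``escape oracle'' that classical algorithms implement via uniform random perturbations. The overall potential function decrease argument follows the template of \cite{jin2018accelerated}, but with the quantum perturbation step replacing the classical one, so the main work is in analyzing what \algo{Optimization-Simulation} actually produces at a saddle point. I would decompose the proof into a large-gradient regime, a saddle-escape regime, and a bookkeeping argument that combines them.

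First I would handle the large-gradient regime: whenever $\|\nabla f(\vect{x}_t)\| > \epsilon$, a single gradient step with step size $\eta = \Theta(1/\ell)$ decreases $f$ by $\Omega(\epsilon^2/\ell)$ using the standard $\ell$-smooth descent lemma, so large-gradient steps contribute at most $O((f(\vect{x}_0)-f^{*})\ell/\epsilon^2)$ total iterations and no simulation calls. Second, for saddle steps (i.e., $\|\nabla f(\vect{x}_t)\| \le \epsilon$), I would Taylor-expand $f$ to second order in a ball of radius $O(\sqrt{\epsilon/\rho})$ around $\vect{x}_t$; $\rho$-Hessian Lipschitzness bounds the cubic remainder by $O(\epsilon^{1.5}/\sqrt{\rho})$, which controls the error from replacing the true potential in \eqn{Schrodinger-optimization} by its quadratic model $\frac12(\vect{x}-\vect{x}_t)^{\top}\mathcal{H}(\vect{x}_t)(\vect{x}-\vect{x}_t)$ used in the simulation.

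The heart of the argument is the quantum escape lemma: if $\mathcal{H}(\vect{x}_t)$ has a negative eigenvalue $-\gamma$ with $\gamma \ge \sqrt{\rho\epsilon}$, then simulating the Schrödinger equation with quadratic potential from the Gaussian initial state \eqn{ground_state_Phi0} for time $\mathscr{T}'$ produces a probability distribution whose marginal along the most-negative eigendirection is stretched by a factor of $\cosh(\gamma\mathscr{T}'/r_0^2)$ relative to the orthogonal directions. Choosing $\mathscr{T}' = \Theta((\rho\epsilon)^{-1/4}\log(\cdot))$ makes this stretching factor polynomially large in the relevant parameters, so with constant probability the measured $\xi$ has a $\Omega(1)$ component along the negative-curvature direction. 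The logarithmic factor in $\mathscr{T}'$ is calibrated so that a union bound over all $O((f(\vect{x}_0)-f^{*})/\epsilon^{1.5})$ saddle-escape steps yields overall success probability at least $2/3$. Conditioned on this event, moving from $\vect{x}_t$ by $\pm\frac{2\xi}{3\|\xi\|}\sqrt{\epsilon/\rho}$ (and picking the direction that decreases $f$) reduces $f$ by $\Omega(\epsilon^{1.5}/\sqrt{\rho})$, matching the classical guarantee of \cite{jin2018accelerated}.

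Finally, I would combine the two regimes: since each successful saddle-escape contributes a function decrease of $\Omega(\epsilon^{1.5}/\sqrt{\rho})$ and the total budget is $f(\vect{x}_0)-f^{*}$, the number of simulation calls is $O((f(\vect{x}_0)-f^{*})/\epsilon^{1.5})$, as claimed. The main obstacle, and the step that requires genuine quantum analysis rather than just porting \cite{jin2018accelerated}, is the escape lemma: one must show that the Gaussian wavepacket evolved under a quadratic-in-$\vect{x}$ Hamiltonian remains Gaussian (this follows from the Mehler kernel / Hagedorn wavepacket formulas) and that its covariance matrix becomes highly anisotropic along negative-curvature directions within time $\mathscr{T}'$, all while the cubic Taylor remainder has not yet pushed the wavepacket outside the quadratic-model ball, which is where the tight interplay between $r_0 = C_r M$, $\mathscr{T}'$, and the $\rho$-Hessian-Lipschitz constant matters.
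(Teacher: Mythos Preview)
The paper does not prove this lemma at all: it is stated with the citation ``\cite[Theorem 3]{zhang2020quantum}'' and used as a black box, so there is no ``paper's own proof'' to compare against. Your proposal is therefore not a comparison target but rather an attempted reconstruction of the argument in the cited reference.

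That said, your sketch is a faithful outline of how the result in \cite{zhang2020quantum} is actually obtained: the decomposition into a large-gradient descent phase and a saddle-escape phase, the reduction of the potential near a saddle to its quadratic Taylor model controlled by $\rho$-Hessian Lipschitzness, and the observation that a Gaussian wavepacket evolving under a quadratic (inverted-oscillator) Hamiltonian remains Gaussian with covariance stretching exponentially along negative-curvature directions. One minor caution: the stretching rate you quote, $\cosh(\gamma\mathscr{T}'/r_0^2)$, has the wrong dependence---after rescaling $\vect{x}\mapsto\vect{x}/r_0$ the effective inverted-oscillator frequency is $\sqrt{\gamma}$, so the stretching is $\Theta(\cosh(\sqrt{\gamma}\,\mathscr{T}'))$, and it is precisely this square root that makes $\mathscr{T}'\sim(\rho\epsilon)^{-1/4}$ (rather than $(\rho\epsilon)^{-1/2}$) the right time scale when $\gamma\ge\sqrt{\rho\epsilon}$. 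With that correction, your argument closes.
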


We can analyze the overall cost of \algo{PGD+QS} by combining \lem{simulation-number} with a bound on the cost of quantum simulation.

\begin{corollary}\label{cor:esc-saddle-query}
Let $f\colon \R^d \to \R$ be a twice-differentiable function satisfying
\begin{align}
    \|\nabla f(\vect{x}_1)-\nabla f(\vect{x}_2)\|\leq\ell\|\vect{x}_1-\vect{x}_2\|\quad\forall\vect{x}_1,\vect{x}_2\in\mathbb{R}^d,
\end{align}
and
\begin{align}
    \|\mathcal{H}(\vect{x}_1)-\mathcal{H}(\vect{x}_2)\|\leq\rho\|\vect{x}_1-\vect{x}_2\|
    \quad\forall\vect{x}_1,\vect{x}_2\in\mathbb{R}^d,
\end{align}
where $\mathcal{H}$ is the Hessian matrix of $f$ and $\rho$, $\ell$ are constants,. Then for any $\epsilon>0$, \algo{PGD+QS} outputs an $\epsilon$-approximate local minimum with success probability at least $2/3$ using $\tilde{O}\big(\frac{f(\vect{x}_0)-f^{*}
}{\epsilon^{1.75}} \log d\big)$ queries to the evaluation oracle $U_f$, where $\vect{x}_0$ is the initial point of the algorithm, $f^*$ is the global minimum of $f$, and the $\tilde{O}$ notation omits poly-logarithmic factors as in \fnote{tilde-O}.
\end{corollary}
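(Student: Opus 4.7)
The proof is essentially a composition of the two lemmas stated immediately above: \lem{simulation-number} bounds the number of Schr\"odinger simulation calls needed by \algo{PGD+QS}, and \lem{optimization-simulation} bounds the per-call query complexity. The plan is to multiply the two bounds and absorb subdominant factors into the $\tilde{O}$ notation.

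First, I invoke \lem{simulation-number}, which says that under the stated $\ell$-smoothness and $\rho$-Hessian Lipschitz hypotheses, \algo{PGD+QS} succeeds with probability at least $2/3$ using
\begin{align}
N_{\mathrm{sim}} = O\Bigl(\frac{f(\vect{x}_0)-f^*}{\epsilon^{1.5}}\Bigr)
\end{align}
calls to a simulator for \eqn{Schrodinger-optimization}, where each call runs for time
\begin{align}
\mathscr{T}' = \frac{8}{(\rho\epsilon)^{1/4}}\log\Big(\frac{\ell(f(\vect{x}_0)-f^*)}{\epsilon^2\sqrt{\rho}}\bigl(d+2\log(3(f(\vect{x}_0)-f^*)/\epsilon^{1.5})\bigr)\Big).
\end{align}
In particular, $\mathscr{T}'=O\bigl(\epsilon^{-1/4}\log d\bigr)$ after hiding the $\rho,\ell,f(\vect{x}_0)-f^*,1/\epsilon$ factors inside the poly-logarithmic part.

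Second, I apply \lem{optimization-simulation} with $t=\mathscr{T}'$ to each simulation call. Since the perturbation is performed inside a small hyperball of radius $r_0=C_rM$, the ratio $M^2\|\mathcal{H}\|/r_0^2$ appearing in \lem{optimization-simulation} is bounded by a constant times $\|\mathcal{H}\|=O(\ell)$. Thus each simulation uses
\begin{align}
\tilde{O}\bigl(\mathscr{T}'\log(\mathscr{T}'/\epsilon)\bigr) = \tilde{O}\bigl(\epsilon^{-1/4}\log d\bigr)
\end{align}
queries to $U_f$, where all $\poly(\log(1/\epsilon))$ and $\poly\log\log d$ overhead is swallowed into the $\tilde{O}$. (Note: the $\log d$ factor in $\mathscr{T}'$ is linear in $\log d$, not quadratic, which is precisely where the quadratic improvement over \cite{zhang2020quantum} originates; their finite-difference-based simulator carried an extra $\log d$ from its spatial discretization.)

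Finally, multiplying the number of calls by the per-call query complexity gives
\begin{align}
N_{\mathrm{sim}}\cdot \tilde{O}(\mathscr{T}') = O\Bigl(\frac{f(\vect{x}_0)-f^*}{\epsilon^{1.5}}\Bigr)\cdot \tilde{O}\Bigl(\frac{\log d}{\epsilon^{1/4}}\Bigr)=\tilde{O}\Bigl(\frac{f(\vect{x}_0)-f^*}{\epsilon^{1.75}}\log d\Bigr),
\end{align}
as claimed. The only subtle point, and the piece I would be most careful about, is verifying that the parameters $M$, $\|\mathcal{H}\|$, $r_0$ used by \algo{Optimization-Simulation} on the localized quadratic model $f(\vect{x})-\langle\nabla f(\vect{x}_t),\vect{x}-\vect{x}_t\rangle$ around the near-saddle iterate are indeed controlled by $\ell$ and by the $C_r$ from \cite{zhang2020quantum} so that $M^2\|\mathcal{H}\|/r_0^2$ is a constant. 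Once that reduction is in hand, the rest is bookkeeping inside $\tilde{O}$.
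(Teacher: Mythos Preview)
Your proposal is correct and follows essentially the same approach as the paper's proof: invoke \lem{simulation-number} for the number of simulation calls and the time $\mathscr{T}'$, invoke \lem{optimization-simulation} for the per-call query cost (using $r_0=C_rM$ and $\|\mathcal{H}\|\le\ell$ from $\ell$-smoothness to reduce $M^2\|\mathcal{H}\|/r_0^2$ to a constant times $\ell$), and multiply. The paper's proof is slightly terser but the logic is identical.
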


\begin{proof}
In \algo{PGD+QS}, queries to $U_f$ are only performed in the quantum simulation step in \lin{QuantumSimulation} of \algo{PGD+QS}. By \lem{optimization-simulation} and \lem{simulation-number}, each quantum simulation call uses
\begin{align}
    O\Big(\|\mathcal{H}\|\mathscr{T}' \frac{\log(\|\mathcal{H}\|\mathscr{T}'/\epsilon)}{\log\log(\|\mathcal{H}\|\mathscr{T}'/\epsilon)}\Big)
\end{align}
queries to $U_f$.
By the $\ell$-smoothness of $f$, we have $\|\mathcal{H}\|\leq\ell$. Therefore
\begin{align}
     O\Big(\|\mathcal{H}\|\mathscr{T}' \frac{\log(\|\mathcal{H}\|\mathscr{T}'/\epsilon)}{\log\log(\|\mathcal{H}\|\mathscr{T}'/\epsilon)}\Big)
     =\tilde{O}\Big(\frac{\log d}{\epsilon^{1/4}}\Big).
\end{align}
Since there are in total $O\big(\frac{f(\vect{x}_0)-f^{*}}{\epsilon^{1.5}}\big)$ quantum simulation calls, the overall query complexity is
\begin{align}
    O\Big(\frac{f(\vect{x}_0)-f^{*}}{\epsilon^{1.5}}\Big) \tilde{O}\Big(\frac{\log d}{\epsilon^{1/4}}\Big)
    =\tilde{O}\Big(\frac{f(\vect{x}_0)-f^{*}}{\epsilon^{1.75}} \log d\Big)
\end{align}
as claimed.
\end{proof}
For comparison, the previous result~\cite{zhang2020quantum} uses $\tilde{O}\big(\frac{f(\vect{x}_0)-f^*}{\epsilon^{1.75}} \log^2 d\big)$ quantum evaluation queries, so our simulation approach achieves a quadratic speedup in terms of $\log d$. Compared to classical algorithms for escaping from saddle points, we achieve polynomial speedup over the seminal work of Jin et al.~\cite{jin2018accelerated} which makes $\tilde{O}\big(\frac{f(\vect{x}_0)-f^*}{\epsilon^{1.75}} \log^6 d\big)$ gradient queries, and match the iteration number of the state-of-the-art result~\cite{zhang2021escape} which makes $\tilde{O}\big(\frac{f(\vect{x}_0)-f^*}{\epsilon^{1.75}} \log d\big)$ classical gradient queries.

The fact that this simulation-based quantum algorithm uses only evaluation queries instead of gradient queries enables a larger range of applications than classical approaches~\cite{jin2018accelerated,zhang2021escape}, especially for problems where the gradient values are not directly available. Although in principle one can use Jordan's algorithm \cite{jordan2005fast} to replace the classical gradient queries in \cite{zhang2021escape} by quantum evaluation queries with logarithmic overhead in query complexity, Jordan's algorithm must be implemented with high precision to detect feasible directions for escaping from saddle points since the gradients near saddles have small norms. Therefore, the number of qubits required in an approach based on Jordan's algorithm may be large.


\section{Conclusions}\label{sec:conclusions}

In this paper, we conducted a systematic study of quantum algorithms for simulating real-space dynamics.
We also gave applications to several computational problems in quantum chemistry, solid-state physics, and optimization.

Our work also leaves several other natural open questions for future investigation:
\begin{itemize}
\item Can we achieve better bounds using additional assumptions about the potential? For instance, in Hamiltonian simulation, faster quantum algorithms can be achieved using commutator bounds for various classes of Hamiltonians~\cite{childs2019theory} and Lieb-Robinson bounds for geometrically local Hamiltonians~\cite{haah2018quantum}. It is natural to investigate whether such ideas can be incorporated into simulation of real-space dynamics.

\item Can we prove lower bounds on the number of queries to the quantum evaluation oracle? There is an $\Omega(T)$ quantum lower bound for simulating finite-dimensional systems known as the ``no-fast-forwarding theorem''~\cite{berry2007efficient}, and we believe that the same bound holds for real-space simulation. Can we prove lower bounds in terms of other parameters such as the spatial dimension and the simulation error?

\item Since the Fourier spectral method is widely used to solve PDEs, it might be possible to generalize our results to other kinds of PDEs with various boundary conditions. Due to the diverse conditions and properties related to the existence, uniqueness, and well-posedness of the solutions, it may be difficult to merge solvers of different types of PDEs into a unified algorithmic framework. Future work might begin by focusing on PDEs with a similar structure to the Schr{\"o}dinger equation.

\item One approach to simulating quantum field theories is to relate them to multi-particle Schr\"odinger equations \cite{zalka1998efficient}. Related ideas have been discussed in the literature \cite{jordan2012quantum, preskill2019simulating}, but to the best of our knowledge, a systematic and rigorous treatment of the simulation cost is lacking.

\item The recent paper \cite{shi2020learning} proved that the probability density function $\psi_s$ of a continuous
version of the stochastic gradient descent algorithm satisfies the Fokker–Planck–Smoluchowski equation
\begin{align}
\frac{\partial \psi_s}{\partial t}=\frac{1}{2}\nabla^2\psi_s+\Big(\frac{\nabla^2f}{2}-\frac{\|\nabla f\|_2^2}{2}\Big)\psi_s,
\end{align}
which is similar to the Schr{\"o}dinger equation except for the absence of the imaginary unit $i$. It might be worth exploring whether the dynamics of stochastic gradient descent could be accelerated using real-space quantum simulation.

\item We generalize the Barnes-Hut algorithm to construct a quantum evaluation oracle for Coulomb interactions with $O(\eta(4d)^{d/2}\log^3(d/\Delta))$ gates and $O(\eta(4d)^{d/2}\log(1/\Delta))$ additional QRAM operations, achieving a quadratic gate complexity advantage in $\eta$ compared to the straightforward method of evaluating all $\eta^2$ pairwise interactions (for constant dimension $d$). Can similar simulation performance can be achieved without QRAM?
\end{itemize}


\section*{Acknowledgments}

We thank Ryan Babbush, Lin Lin, and Yuan Su for feedback on a preliminary version of this paper.

AMC received support from the National Science Foundation (grant CCF-1813814 and QLCI grant OMA-2120757) and the Department of Energy, Office of Science, Office of Advanced Scientific Computing Research, Quantum Algorithms Teams and Accelerated Research in Quantum Computing programs.
JL was supported by the National Science Foundation (grant CCF-1816695).
TL was supported by the the National Science Foundation (grant PHY-1818914), a Samsung Advanced Institute of Technology Global Research Partnership, a startup fund from Peking University, and the Advanced Institute of Information Technology, Peking University.
JPL was supported by the National Science Foundation (grant CCF-1813814) and the National Science Foundation Quantum Information Science and Engineering Network (QISE-NET) triplet award (DMR-1747426).



\appendix

	\section{Notation}\label{append:notation}
	Throughout the paper, $\N$ denotes the set of all positive integers, and $\N_{0}:=\N\cup\{0\}$. We also let $\range{n}:=\{1,\ldots,n\}$ and $\rangez{n+1}:=\{0,1,\ldots,n\}$. We introduce a variety of norms that are used in our analysis.

	\begin{definition}[Vector norms]
		For a vector $\vect{v} = (v_1,...,v_d)^T \in \C^d$, and $1 \le p \le \infty$, the vector $l_p$ norm of $\vect{v}$ is
		\begin{align}
			\|\vect{v}\|_p := \begin{cases}
				\left(\sum^d_{j=1} |v_j|^p\right)^{1/p} & 1 \le p < \infty\\
				\max_{j=1,2,...,d} |v_j| & p = \infty.
			\end{cases}
		\end{align}
	\end{definition}

	\begin{definition}[Matrix norms]
		For a matrix $A \in M^{d\times d}(\C)$, the Schatten $p$-norm of $A$ is
		\begin{align}
			\|A\|_p := \begin{cases}
				\Tr\left(\sqrt{A^\dagger A}\right) & p = 1\\
				\sqrt{\Tr(A^\dagger A)} & p = 2\\
				\max_{\|\ket{\psi}\|_2 = 1} \|A\ket{\psi}\|_2 & p = \infty.
			\end{cases}
		\end{align}
		The Schatten $\infty$-norm is also referred to as the spectral norm, as it equals the largest singular value of $A$. In this paper, we usually let $\|A\|$ denote the spectral norm of $A$ when the context is unambiguous.
	\end{definition}

	\begin{definition}[Function norms]
		If $f\colon \Omega \to \C$ is a continuous function defined on a set $\Omega \subset \R^d$, the $L^p$-norm of the function $f$ is
		\begin{align}
			\|f\|_p := \begin{cases}
				\left(\int_\Omega |f(x)|^p\,\d x\right)^{1/p} & 1 \le p < \infty\\
				\sup_{x \in \Omega} |f(x)| & p = \infty.
			\end{cases}
		\end{align}
	\end{definition}

	We combine these notations to define norms for vector-valued (or operator-valued) functions. If $\vect{v}\colon [0,t] \to \C^d$ is a continuous vector-valued function, where the $j$th coordinate at time $s$ is $v_j(s)$, then the notation $\|\vect{v}\|_{p,q}$ indicates that we take the $l_p$ norm $\|\vect{v}(s)\|_p$ for every $s \in [0,t]$ and compute the $L^q$-norm of the resulting scalar function. For instance,
	\begin{align}
		\|\vect{v}\|_{1,\infty} := \sup_{s\in [0,t]} \sum^d_{j=1}|v_j(s)|.
	\end{align}

	\begin{definition}
		The largest matrix element of $A$ in absolute value is denoted
		\begin{align}
			\|A\|_{\max} := \max_{j,k} |A_{j,k}|.
		\end{align}
		Note that $\|A\|_{\max}$ is a vector norm of $A$ but does not satisfy the sub-multiplicative property of a matrix norm.
	\end{definition}

\begin{lemma}[{\cite[Lemma 1]{childs2009limitations}}]
	For any Hermitian matrix $A \in \C^N \times \C^N$, we have
	\begin{align}
		\|A\|_{\max} \le \|A\| \le N\|A\|_{\max}.
	\end{align}
\end{lemma}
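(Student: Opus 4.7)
The plan is to establish the two inequalities separately; each follows from a standard linear algebra argument.

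For the lower bound $\|A\|_{\max} \le \|A\|$, I would write each matrix entry as $A_{jk} = \langle e_j, A e_k \rangle$, where $e_j$ and $e_k$ denote standard basis vectors of unit $\ell_2$ norm. By the Cauchy--Schwarz inequality and the definition of the spectral norm,
\begin{align}
|A_{jk}| = |\langle e_j, A e_k \rangle| \le \|e_j\|_2 \cdot \|A e_k\|_2 \le \|A\| \cdot \|e_j\|_2 \cdot \|e_k\|_2 = \|A\|.
\end{align}
Taking the maximum over $j,k$ yields $\|A\|_{\max} \le \|A\|$. Note that this half does not actually require Hermiticity.

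For the upper bound $\|A\| \le N\|A\|_{\max}$, I would invoke Hermiticity and use the Gershgorin circle theorem. Since $A$ is Hermitian, its eigenvalues are real and $\|A\| = \max_i |\lambda_i(A)|$. Gershgorin guarantees that every eigenvalue $\lambda$ lies in some disk $\{z : |z - A_{ii}| \le \sum_{j \ne i} |A_{ij}|\}$, and hence
\begin{align}
|\lambda| \le |A_{ii}| + \sum_{j \ne i} |A_{ij}| \le N \|A\|_{\max},
\end{align}
since each of the $N$ summands is bounded by $\|A\|_{\max}$. Maximizing over eigenvalues gives $\|A\| \le N\|A\|_{\max}$. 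As an alternative route, one could combine the submultiplicative bound $\|A\|_2 \le \sqrt{\|A\|_1 \|A\|_\infty}$ with the row-sum and column-sum estimates $\|A\|_1, \|A\|_\infty \le N\|A\|_{\max}$; for Hermitian $A$ these induced norms coincide, giving the same conclusion.

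There is no real obstacle to executing either step; the only subtlety worth flagging is that the second inequality uses Hermiticity solely to identify the spectral norm with the spectral radius, and that the factor of $N$ is tight in general (saturated, for example, by the all-ones matrix).
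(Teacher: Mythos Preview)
Your proof is correct. Note, however, that the paper does not actually prove this lemma: it is stated in the notation appendix with a citation to \cite{childs2009limitations} and no proof is given. So there is no ``paper's own proof'' to compare against; your argument simply supplies what the paper takes as a black-box reference.
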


Finally, we introduce the notion of block encoding and some related concepts.

\begin{definition}[Block encoding]\label{def:block-encoding}
	Suppose that $A$ is an $s$-qubit operator, $\alpha,\epsilon \in \R^+$, and $a\in \N$. Then we say the $(s+a)$-qubit unitary $U$ is an $(\alpha, a, \epsilon)$-block-encoding of $A$ if
	\begin{align}
		\|A - \alpha\left(\bra{0}^{\otimes a}\otimes\mathbbmss{1}_{s}) U (\ket{0}^{\otimes a}\otimes\mathbbmss{1}_{s})	\right)\| \le \epsilon.
	\end{align}
\end{definition}

The following result shows that sparse matrices can be efficiently block-encoded.

\begin{lemma}[{\cite[Lemma 48]{gilyen2019quantum}}]\label{lem:block-encoding-diagonal}
	Let $\Gamma\in\C^{2^\omega\times 2^{\omega}}$ be a matrix that is $s_r$-row-sparse and $s_c$-column-sparse, and each element of $\Gamma$ has absolute value at most $1$. Suppose that we have access to the following sparse-access oracles acting on two $\omega+1$ qubit registers:
    \begin{align}
        O_r\colon\ket{i}\ket{k}\to\ket{i}\ket{r_{ik}},]\qquad\forall i\in[2^\omega]-1,\,k\in[s_r],
    \end{align}
    and
    \begin{align}
        O_c\colon\ket{\ell}\ket{j}\to\ket{c_{\ell j}}\ket{j}\qquad\forall\ell\in[s_c],\,j\in [2^{\omega}]-1,
    \end{align}
    where $r_{ij}$ is the index of the $j$th non-zero entry in the $i$th row of $\Gamma$, or $j+2^{\omega}$ if there are fewer than $i$ non-zero entries; and similarly, $c_{ij}$ is the index of the $i$th non-zero entry in the $j$th column of $\Gamma$, or $i+2^{\omega}$ if there are fewer than $j$ non-zero entries. Additionally, assume that we have access to an oracle $O_{\Gamma}$ for the entries of $A$:
    \begin{align}
        O_{\Gamma}\colon\ket{i}\ket{j}\ket{0}^{\otimes z}\to\ket{i}\ket{j}\ket{\Gamma_{ij}},\qquad\forall i,j\in[2^{\omega}]-1,
    \end{align}
    where $\Gamma_{ij}$ is a $z$-bit binary encoding of the $ij$-matrix element of $\Gamma$. Then we can implement a $(\sqrt{s_rs_c},\omega+3,\epsilon)$-block-encoding of $\Gamma$ with a single use of $O_r$, $O_c$, two uses of $O_{\Gamma}$, and $O\big(\omega+\log^{2.5}\big(\frac{s_rs_c}{\epsilon}\big)\big)$ additional one- and two-qubit gates.
\end{lemma}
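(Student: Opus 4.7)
The plan is to adapt the standard construction for block-encoding sparse matrices from sparse-access oracles, using the row and column sparsity oracles $O_r$ and $O_c$ together with the entry oracle $O_\Gamma$. The target is to build an $(s+a)$-qubit unitary $U$ with $a=\omega+3$ such that $(\bra{0}^{\otimes a}\otimes I)\,U\,(\ket{0}^{\otimes a}\otimes I)\approx\Gamma/\sqrt{s_r s_c}$, up to spectral-norm error $\epsilon$.

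First, I would construct two partial isometries that ``expand'' an index into a uniform superposition over its nonzero neighbors. Concretely, apply a Hadamard-type preparation to an ancilla register of $\lceil\log_2 s_r\rceil$ qubits, obtaining $\frac{1}{\sqrt{s_r}}\sum_{k=1}^{s_r}\ket{k}$; then use $O_r$ to transform $\ket{i}\ket{k}\mapsto\ket{i}\ket{r_{ik}}$, so that $\ket{i}$ is mapped to $\frac{1}{\sqrt{s_r}}\sum_{j\,:\,\Gamma_{ij}\neq 0}\ket{i}\ket{j}$ (with appropriate handling of the ``phantom'' indices flagged by the $+2^\omega$ convention). An analogous construction using $O_c$ produces $\frac{1}{\sqrt{s_c}}\sum_{i\,:\,\Gamma_{ij}\neq 0}\ket{i}\ket{j}$ from $\ket{j}$. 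Sandwiching the entry amplitude between these two isometries will produce a block of the form $\Gamma_{ij}/\sqrt{s_r s_c}$, giving the normalization factor $\sqrt{s_r s_c}$ in the block-encoding.

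Next, I would inject the matrix entries as amplitudes. I query $O_\Gamma$ twice to write $\ket{\Gamma_{ij}}$ into a $z$-qubit register, apply a controlled rotation on a single ancilla qubit that encodes $\Gamma_{ij}$ as the amplitude $\Gamma_{ij}\ket{0}+\sqrt{1-|\Gamma_{ij}|^2}\ket{1}$ (valid since $|\Gamma_{ij}|\le 1$), and then uncompute the $O_\Gamma$ register using the second query. Finally, a SWAP (or equivalent) together with the inverse of the row-expansion isometry completes the construction, so that when all ancillas are postselected on $\ket{0}$ we recover exactly $\Gamma_{ij}/\sqrt{s_r s_c}$ in the $(i,j)$ matrix element. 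This accounts for one use of $O_r$, one of $O_c$, two of $O_\Gamma$, and the extra three ancilla qubits giving $a=\omega+3$.

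The main obstacle, and the source of the $\log^{2.5}(s_rs_c/\epsilon)$ gate overhead, is synthesizing the controlled rotation that maps the digital description $\ket{\Gamma_{ij}}$ to the promised amplitude to spectral-norm accuracy $\epsilon/\sqrt{s_rs_c}$. I would implement this via standard arithmetic (computing $\arcsin|\Gamma_{ij}|$ on the $z$-bit representation to sufficient precision) followed by a sequence of controlled single-qubit rotations synthesized from Clifford$+$$T$ gates using Ross–Selinger / repeat-until-success methods, whose cost scales as $\log^{2.5}(1/\epsilon)$ per rotation in the best known synthesis bounds. Tracking the precision required so that the overall block-encoding error is $\epsilon$ (by a triangle-inequality argument over the finite number of distinct angles invoked) yields the stated gate count $O(\omega+\log^{2.5}(s_rs_c/\epsilon))$, completing the construction. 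Correctness of the block-encoding identity follows by direct computation of the $(i,j)$ matrix element, and the error bound follows by propagating the rotation-synthesis error through the sub-multiplicative spectral norm.
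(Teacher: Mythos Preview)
The paper does not prove this lemma at all: it is stated as \cite[Lemma 48]{gilyen2019quantum} and used as a black box, with no accompanying proof. So there is no ``paper's own proof'' to compare against.

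Your sketch is a faithful outline of the construction in the cited reference: build row- and column-expansion isometries from $O_r$ and $O_c$ over uniform superpositions of size $s_r$ and $s_c$, sandwich between them an amplitude-encoding of $\Gamma_{ij}$ obtained by querying $O_\Gamma$, performing a controlled rotation on a flag qubit, and uncomputing with a second $O_\Gamma$ call. The ancilla count $\omega+3$ and the oracle usage (one $O_r$, one $O_c$, two $O_\Gamma$) match the statement, and attributing the $\log^{2.5}(s_rs_c/\epsilon)$ term to rotation synthesis is correct. If anything, the one place to be careful is the claim that the controlled rotation is synthesized to accuracy $\epsilon/\sqrt{s_rs_c}$: in the original argument the required per-rotation precision is governed by the number of controlled rotations (which is $O(\log(s_rs_c/\epsilon))$ bits of the angle), not directly by $\sqrt{s_rs_c}$, but this only affects constants inside the polylog and does not change the stated bound.
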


We also define a block encoding of a time-dependent Hamiltonian.

\begin{definition}[HAM-T]\label{defn:hamt}
		Let $\widetilde{H}_I(t)\colon [0,\varsigma]\rightarrow\C^{2^{w}\times2^{w}}$ be the rescaled interaction-picture Hamiltonian in \eqn{interaction-picture-hamt}. We define a block-encoding unitary oracle $\hamt\in\C^{M2^{n_{a}+w}\times M2^{n_{a}+w}}$ satisfying
		\begin{align}
			\hamt=\begin{pmatrix}
				\mathfrak{H} & \cdot\\
				\cdot &\cdot
			\end{pmatrix}
		\end{align}
		in which
		\begin{align}
			\mathfrak{H}=\diag\left[\widetilde{H}_I(t)(0),\widetilde{H}_I(\varsigma/M),\cdots,\widetilde{H}_I\left((M-1)\varsigma/M\right)\right].
		\end{align}
	\end{definition}

	A direct calculation shows that
		\begin{align}
			(\bra{0}_{a}\otimes\mathbbmss{1})\hamt(\ket{0}_{a}\otimes\mathbbmss{1})=\sum_{m=0}^{M-1}\ket{m}\bra{m}\otimes \widetilde{H}_I(m\varsigma/M).
		\end{align}

	\section{Exponential convergence of the Fourier spectral method}\label{append:spectral}

	Here we establish a concrete error estimate for the Fourier spectral method.

	\begin{lemma}[{\cite[Theorem 20]{boyd2001chebyshev}}]\label{lem:pseudospectral-error}
		Let $f(x)$ have the exact, infinite trigonometric polynomial representation
		\begin{align}
			f(x) = \frac{\alpha_0}{2} + \sum_{k=0}^{\infty}\alpha_k\cos(kx) + \sum_{k=0}^{\infty}\beta_k\sin(kx).
		\end{align}
		Let $S_{n+1}(x)$ denote the trigonometric polynomial that interpolates to a function $f(x)$ on \eqn{interpolation_nodes},
		\begin{align}
			S_{n+1}(x) = \frac{a_0}{2} + \sum_{k=0}^{n/2}a_k\cos(kx) + \sum_{k=0}^{n/2}b_k\sin(kx).
		\end{align}
		Then the error from the Fourier spectral method satisfies
		\begin{align}\label{eqn:pseudospectral-error}
			|f(x)-S_{n+1}(x)| \le 2\sum_{k=n/2+1}^{\infty} (|\alpha_k|+|\beta_k|).
		\end{align}
	\end{lemma}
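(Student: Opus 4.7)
The plan is to establish \eqn{pseudospectral-error} via the classical aliasing identity for trigonometric interpolation on equally-spaced nodes, a staple of spectral-method analysis.

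First, I would compute the discrete interpolation coefficients $a_k,b_k$ using the discrete orthogonality relation for the Fourier basis evaluated at the nodes $\{\chi_j\}_{j=0}^{n}$ of \eqn{interpolation_nodes}. Substituting the exact trigonometric series for $f$ into these formulas and applying the identity that $\sum_{j=0}^{n}e^{2\pi i j(k-k')/(n+1)}$ equals $n+1$ when $k\equiv k'\pmod{n+1}$ and vanishes otherwise would yield the aliasing relation
\begin{align}
a_k=\alpha_k+\sum_{m\ne 0}\alpha_{k+m(n+1)},\qquad b_k=\beta_k+\sum_{m\ne 0}\beta_{k+m(n+1)},
\end{align}
for each $k\in\{0,1,\ldots,n/2\}$, after using the reality conventions $\alpha_{-k}=\alpha_k$ and $\beta_{-k}=-\beta_k$ to fold negative-index aliases back onto nonnegative frequencies.

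Second, I would decompose the pointwise interpolation error as
\begin{align}
f(x)-S_{n+1}(x)=\sum_{k>n/2}\bigl[\alpha_k\cos(kx)+\beta_k\sin(kx)\bigr]-\sum_{k=0}^{n/2}\bigl[(a_k-\alpha_k)\cos(kx)+(b_k-\beta_k)\sin(kx)\bigr]
\end{align}
and apply $|\cos|,|\sin|\le 1$ together with the triangle inequality. Substituting the aliasing expressions derived above, the second sum is bounded by $\sum_{k>n/2}(|\alpha_k|+|\beta_k|)$, since the indices $k+m(n+1)$ with $k\in\{0,\ldots,n/2\}$ and $m\ne 0$ cover every frequency exceeding $n/2$ exactly once. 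Combining the two contributions produces the factor $2$ that appears in \eqn{pseudospectral-error}.

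The main obstacle will be the careful bookkeeping in the aliasing step, particularly verifying that frequencies in the intermediate range $\{n/2+1,\ldots,n\}$---which arise as negative aliases rather than as direct high-frequency tails---are counted exactly once with the correct signs, so that the final bound has the clean constant $2$ rather than some larger value. Since this lemma is a classical result from Boyd's monograph, the pragmatic alternative is simply to cite \cite{boyd2001chebyshev} and defer the index-chasing details to that reference, which is presumably what the authors do.
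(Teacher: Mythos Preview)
Your anticipation at the end is exactly right: the paper does not prove this lemma at all; it simply states it with the citation to \cite[Theorem 20]{boyd2001chebyshev} and moves on. Your proposed argument via the aliasing identity is the standard proof of this classical result and is correct in outline, so there is nothing to critique against the paper's own treatment.
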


	\lem{pseudospectral-error} tells us that the error from the Fourier spectral method is bounded by an infinite sum of the Fourier coefficients of $f(x)$. It is well-known that the Fourier coefficients decay rapidly in terms of $n$, which is formally established as follows.

	\begin{lemma}\label{lem:Fourier-coefficient}
		Let $f(x)$ have the exact, infinite Fourier series representation
		\begin{align}
			f(x) = \sum_{k=-\infty}^{\infty}\hat f(k)e^{ikx}.
		\end{align}
		Assume $f(x)\in\C^p$. Then the Fourier coefficient $\hat f(k)$ satisfies
		\begin{align}\label{eqn:coefficient_decay}
			|\hat f(k)| \le \frac{\|f^{(p)}\|_{L^1}}{k^p}.
		\end{align}
	\end{lemma}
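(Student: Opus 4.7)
The plan is to prove this via repeated integration by parts on the integral defining the Fourier coefficient, exploiting the periodicity of $f$ and its derivatives to kill all boundary contributions.

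First, I would recall that under the convention $f(x) = \sum_k \hat{f}(k)e^{ikx}$, the Fourier coefficient is given by $\hat{f}(k) = \frac{1}{2\pi}\int_{0}^{2\pi} f(x)\,e^{-ikx}\,\d x$ (or the analogous formula on $[0,1]$; the constant is irrelevant to the scaling claim). For $k\neq 0$, I would apply integration by parts once to obtain
\begin{align}
\hat{f}(k) = \frac{1}{2\pi}\left[\frac{f(x) e^{-ikx}}{-ik}\Bigg|_{0}^{2\pi} + \frac{1}{ik}\int_{0}^{2\pi} f'(x)\,e^{-ikx}\,\d x\right] = \frac{1}{ik}\,\widehat{f'}(k),
\end{align}
where the boundary term vanishes because $f$ is continuous and periodic (the underlying assumption of the Fourier spectral setting, cf.\ \eqn{periodic-condition}).

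Next, I would iterate this identity $p$ times, using at each step that $f,f',\ldots,f^{(p-1)}\in C^{0}$ are periodic (which follows from $f\in C^{p}$ together with the periodic boundary condition), so all boundary terms vanish. This yields the clean identity
\begin{align}
\hat{f}(k) = \frac{1}{(ik)^{p}}\,\widehat{f^{(p)}}(k).
\end{align}
Finally, I would invoke the elementary bound $|\widehat{g}(k)|\le \|g\|_{L^{1}}$ (up to the normalization constant, which is absorbed into the stated inequality) to conclude
\begin{align}
|\hat{f}(k)| \le \frac{|\widehat{f^{(p)}}(k)|}{|k|^{p}} \le \frac{\|f^{(p)}\|_{L^{1}}}{|k|^{p}},
\end{align}
which is the claim \eqn{coefficient_decay}.

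There is no serious obstacle here; the only subtlety worth being careful about is making sure the periodic boundary condition is explicitly inherited from \eqn{periodic-condition} so that every boundary term in the $p$ integrations by parts genuinely vanishes, and that the normalization convention for $\hat{f}$ is stated consistently with the one used in \lem{pseudospectral-error} (so the bound can be plugged directly into \eqn{pseudospectral-error} when proving \lem{pseudospectral-estimate-lemma}).
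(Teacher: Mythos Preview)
Your proposal is correct and matches the paper's own proof essentially line for line: both integrate by parts $p$ times, use periodicity of $f,\ldots,f^{(p-1)}$ to kill the boundary terms, and then bound the remaining integral by $\|f^{(p)}\|_{L^1}$. The only cosmetic difference is that the paper works on $[-\pi,\pi]$ rather than $[0,2\pi]$.
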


	\begin{proof}
		The Fourier coefficient $\hat f(k)$ on $[-\pi,\pi]$ is
		\begin{align}
			\hat f(k)  = \frac{1}{2\pi}\int_{-\pi}^{\pi}f(x)e^{-ikx}\d{x}.
		\end{align}
		Using $f(x)\in \C^p$ and $f^{(k)}(-\pi)=f^{(k)}(\pi)$ for arbitrary $k\in\N$, integrating by parts $p$ times, we obtain
		\begin{align}
			\hat f(k) = \frac{(-i)^k}{2\pi k^p}\int_{-\pi}^{\pi}f^{(p)}(x)e^{-ikx}\d{x}.
		\end{align}
		Thus, the magnitude of the Fourier coefficient has the gbound
		\begin{align}
			|\hat f(k)| \le \frac{1}{2\pi k^p}\int_{-\pi}^{\pi}|f^{(p)}(x)|\d{x} = \frac{\|f^{(p)}\|_{L^1}}{2\pi k^p}
		\end{align}
		as claimed.
	\end{proof}

	Note that the coefficients of the trigonometric polynomial and Fourier series of $f(x)$ are related as
	\begin{align}\label{eqn:coefficient-transform}
		\hat f(k) = \begin{cases}
			\frac{1}{2}(\alpha_k + i\beta_k), \qquad & k\in\Z^+; \\
			\frac{1}{2}\alpha_0, \qquad & k=0; \\
			\frac{1}{2}(\alpha_k - i\beta_k), \qquad & k\in\Z^-.
		\end{cases}
	\end{align}

	Combining \lem{pseudospectral-error} with \lem{Fourier-coefficient} gives an error bound for the Fourier spectral method.

	\begin{lemma}\label{lem:pseudospectral-estimate}
		Under the same assumptions as \lem{pseudospectral-error}, the error from the Fourier spectral method satisfies
		\begin{align}\label{eqn:pseudospectral-estimate}
			|f(x)-S_{n+1}(x)| \le \frac{2\|f^{(p)}\|_{L^1}}{\pi(p-1)(n/2+1)^{p-1}}.
		\end{align}
	\end{lemma}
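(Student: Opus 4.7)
The plan is to chain together the two preceding lemmas and then handle the resulting tail series by an integral comparison. Concretely, \lem{pseudospectral-error} already reduces the pointwise interpolation error to a tail sum of trigonometric Fourier coefficients,
\begin{align}
|f(x)-S_{n+1}(x)| \le 2\sum_{k=n/2+1}^{\infty}(|\alpha_k|+|\beta_k|),
\end{align}
so the task reduces to estimating this tail. The relationship \eqn{coefficient-transform} between the trigonometric coefficients and the complex exponential Fourier coefficients gives $\alpha_k = \hat f(k)+\hat f(-k)$ and $i\beta_k = \hat f(k)-\hat f(-k)$ for $k\ge 1$, so by the triangle inequality $|\alpha_k|+|\beta_k| \le 2(|\hat f(k)|+|\hat f(-k)|)$.

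Next, I would apply \lem{Fourier-coefficient} to each of $|\hat f(\pm k)|$, which bounds them by a constant multiple of $\|f^{(p)}\|_{L^1}/k^{p}$. Substituting into the bound above collapses the two-sided dependence and leaves a clean estimate of the form $C\,\|f^{(p)}\|_{L^1}\sum_{k=n/2+1}^\infty k^{-p}$ for an explicit constant $C$.

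The final step is the tail estimate for the $p$-series. Because $x\mapsto x^{-p}$ is decreasing, one has $\sum_{k=N+1}^{\infty} k^{-p} \le \int_{N}^{\infty} x^{-p}\, \d x = \tfrac{1}{(p-1)N^{p-1}}$; I would apply this with $N=n/2$, or use the shifted variant $k^{-p}\le \int_{k}^{k+1}(x-1)^{-p}\, \d x$ to produce the specific denominator $(n/2+1)^{p-1}$ in the claim.

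The main obstacle is purely bookkeeping: tracking the constants and choosing the integral-test variant that recovers the precise form $\tfrac{2\|f^{(p)}\|_{L^1}}{\pi(p-1)(n/2+1)^{p-1}}$, in particular reconciling the factor of $2\pi$ that arises in the integration-by-parts derivation of \lem{Fourier-coefficient} with the coefficient doubling coming from \eqn{coefficient-transform} and the outer factor of $2$ in \lem{pseudospectral-error}. There is no serious analytical difficulty, only careful accounting of these multiplicative factors.
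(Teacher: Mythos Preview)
Your proposal is correct and follows essentially the same route as the paper: invoke \lem{pseudospectral-error}, convert the trigonometric coefficients to exponential Fourier coefficients via \eqn{coefficient-transform}, apply the decay bound of \lem{Fourier-coefficient}, and finish with an integral-test estimate on $\sum_{k>n/2}k^{-p}$. The only cosmetic difference is that the paper passes directly to $4\sum_{k>n/2}|\hat f(k)|$ rather than keeping both $|\hat f(k)|$ and $|\hat f(-k)|$ as you do; your version is arguably cleaner bookkeeping, and your caution about reconciling the constants is well placed.
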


	\begin{proof}
		Using \eqn{pseudospectral-error} and \eqn{coefficient-transform}, we upper bound the error by
		\begin{align}
			|f(x)-S_{n+1}(x)| \le 2\sum_{k=n/2+1}^{\infty}(|\alpha_k|+|\beta_k|) \le 4\sum_{k=n/2+1}^{\infty}|\hat f(k)|.
		\end{align}
		Plugging in \eqn{coefficient_decay} gives
		\begin{align}
			\sum_{k=n/2+1}^{\infty}|\hat f(k)| \le \sum_{k=n/2+1}^{\infty}\frac{\|f^{(p)}\|_{L^1}}{2\pi k^p}.
		\end{align}
		Such an infinite sum can be bounded by an infinite integral, using
		\begin{align}
			\sum_{k=n/2+1}^{\infty}\frac{1}{k^p} \le \sum_{k=n/2+1}^{\infty}\int_{k-1}^k\frac{1}{y^p}\d{y} = \int_{n/2}^{\infty}\frac{1}{y^p}\d{y} = \frac{1}{(p-1)(n/2+1)^{p-1}}.
		\end{align}
		Therefore
		\begin{align}
			|f(x)-S_{n+1}(x)| \le \frac{2\|f^{(p)}\|_{L^1}}{\pi(p-1)(n/2+1)^{p-1}}
		\end{align}
		as claimed.
	\end{proof}

	Since we are concerned with analytic $f(x)$, it suffices to choose $p=n/2$ (for $n\ge6$) to obtain the following result.

	\begin{lemma}\label{lem:pseudospectral-estimate-smooth}
		If $f(x)$ is analytic, according to \lem{pseudospectral-error}, the error from the Fourier spectral method satisfies
		\begin{align}\label{eqn:pseudospectral-estimate-smooth}
			|f(x)-S_{n+1}(x)| \le \frac{2}{\pi}\frac{\|f^{(n/2)}\|_{L^1}}{(n/2)^{n/2}}
		\end{align}
		for any even integer $n\ge6$.
	\end{lemma}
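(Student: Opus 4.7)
The plan is to derive this bound as a direct consequence of \lem{pseudospectral-estimate} by specializing the choice of derivative order $p$. Since $f$ is assumed analytic, it is automatically $C^p$ for every $p \in \N$, so the hypothesis of \lem{pseudospectral-estimate} is satisfied for any integer $p$ we wish to use. The natural and (essentially) optimal choice here is $p = n/2$, which balances the $1/(n/2+1)^{p-1}$ decay factor against the growth of the $\|f^{(p)}\|_{L^1}$ numerator we are carrying on the right-hand side.

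First, I would invoke \lem{pseudospectral-estimate} with $p = n/2$ (valid because $n\ge 6$ gives $p\ge 3 \ge 2$, so the factor $p-1$ in the denominator is at least $2$, and in particular nonzero). This yields
\begin{align}
|f(x)-S_{n+1}(x)| \le \frac{2\,\|f^{(n/2)}\|_{L^1}}{\pi\,(n/2-1)\,(n/2+1)^{n/2-1}}.
\end{align}
The rest of the proof is then a purely numerical comparison: it suffices to establish
\begin{align}
\frac{1}{(n/2-1)\,(n/2+1)^{n/2-1}} \le \frac{1}{(n/2)^{n/2}},
\end{align}
which, writing $m := n/2 \ge 3$, becomes the elementary inequality $m^{m} \le (m-1)(m+1)^{m-1}$.

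To verify this inequality, I would rewrite the claim as $m\bigl(1+\tfrac{1}{m}\bigr)^{-(m-1)} \le m-1$, i.e.\ $\bigl(1+\tfrac{1}{m}\bigr)^{m-1} \ge \tfrac{m}{m-1}$. The left-hand side is increasing in $m$ with limit $e \approx 2.718$, while the right-hand side decreases from $3/2$ (at $m=3$) to $1$. A direct numerical check at $m=3$ gives $m^m = 27 \le 32 = (m-1)(m+1)^{m-1}$, and for $m\ge 4$ the gap only widens since $(1+1/m)^{m-1}$ approaches $e$ while $m/(m-1)$ decreases toward $1$. A short monotonicity argument (for example, taking logarithms and differentiating, or inducting on $m$) closes this off. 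The main, and really only, obstacle is confirming that the threshold $n \ge 6$ is exactly what is needed for the numerical inequality to hold; for $n=4$ (i.e.\ $m=2$) the bound $(m-1)(m+1)^{m-1} = 3 < 4 = m^m$ fails, which explains the hypothesis $n\ge 6$ in the statement.

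Combining the two steps gives
\begin{align}
|f(x)-S_{n+1}(x)| \le \frac{2\,\|f^{(n/2)}\|_{L^1}}{\pi\,(n/2-1)\,(n/2+1)^{n/2-1}} \le \frac{2}{\pi}\,\frac{\|f^{(n/2)}\|_{L^1}}{(n/2)^{n/2}},
\end{align}
as claimed. This proof is essentially a rewriting step on top of \lem{pseudospectral-estimate}, and the regularity assumption (analyticity) is used only to guarantee that $\|f^{(n/2)}\|_{L^1}$ is finite for every even $n$.
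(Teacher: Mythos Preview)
Your proposal is correct and follows exactly the approach the paper indicates: specialize \lem{pseudospectral-estimate} with $p=n/2$. The paper itself gives no further detail beyond asserting that this choice suffices for $n\ge 6$, whereas you actually carry out and verify the numerical inequality $(m-1)(m+1)^{m-1}\ge m^m$ for $m=n/2\ge 3$, including the observation that $n\ge 6$ is precisely the threshold where it first holds.
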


	This result implies an exponential convergence rate in terms of $n$.


\end{document}